\documentclass[11pt]{article}

\usepackage[T1]{fontenc}
\usepackage{lmodern}
\usepackage{fullpage}
\usepackage{microtype}
\usepackage{amsmath,amssymb,amsthm,mathtools}
\usepackage{enumitem}
\usepackage{tikz}
\usepackage{float}
\usepackage{caption}
\IfFormatAtLeastTF{2026-06-01}{}{\usepackage{aliascnt}}
\usepackage[colorlinks=true,citecolor=blue,linkcolor=blue,urlcolor=blue,pdftitle={Approximating two-terminal network reliability},pdfauthor={Weiming Feng, Yucheng Fu, Heng Guo}]{hyperref}
\usepackage[nameinlink]{cleveref}
\usepackage{xcolor}

\newtheorem{theorem}{Theorem}

\IfFormatAtLeastTF{2026-06-01}{%
  \newtheorem{lemma}[theorem]{Lemma}
  \newtheorem{proposition}[theorem]{Proposition}
  \newtheorem{corollary}[theorem]{Corollary}
  \newtheorem{observation}[theorem]{Observation}
  \newtheorem{claim}[theorem]{Claim}
}{%
  \newaliascnt{lemma}{theorem}
  \newtheorem{lemma}[lemma]{Lemma}
  \aliascntresetthe{lemma}

  \newaliascnt{proposition}{theorem}
  \newtheorem{proposition}[proposition]{Proposition}
  \aliascntresetthe{proposition}

  \newaliascnt{corollary}{theorem}
  \newtheorem{corollary}[corollary]{Corollary}
  \aliascntresetthe{corollary}

  \newaliascnt{observation}{theorem}
  \newtheorem{observation}[observation]{Observation}
  \aliascntresetthe{observation}

  \newaliascnt{claim}{theorem}
  \newtheorem{claim}[claim]{Claim}
  \aliascntresetthe{claim}
}

\theoremstyle{definition}

\IfFormatAtLeastTF{2026-06-01}{%
  \newtheorem{definition}[theorem]{Definition}
  \newtheorem{problem}[theorem]{Problem}
  \newtheorem{remark}[theorem]{Remark}
  \newtheorem{algorithm}[theorem]{Algorithm}
}{%
  \newaliascnt{definition}{theorem}
  \newtheorem{definition}[definition]{Definition}
  \aliascntresetthe{definition}

  \newaliascnt{problem}{theorem}
  
  \aliascntresetthe{problem}

  \newaliascnt{remark}{theorem}
  
  \aliascntresetthe{remark}

  \newaliascnt{algorithm}{theorem}
  
  \aliascntresetthe{algorithm}
}

\crefname{theorem}{theorem}{theorems}
\Crefname{theorem}{Theorem}{Theorems}
\crefname{lemma}{lemma}{lemmas}
\Crefname{lemma}{Lemma}{Lemmas}
\crefname{proposition}{proposition}{propositions}
\Crefname{proposition}{Proposition}{Propositions}
\crefname{corollary}{corollary}{corollaries}
\Crefname{corollary}{Corollary}{Corollaries}
\crefname{observation}{observation}{observations}
\Crefname{observation}{Observation}{Observations}
\crefname{claim}{claim}{claims}
\Crefname{claim}{Claim}{Claims}
\crefname{definition}{definition}{definitions}
\Crefname{definition}{Definition}{Definitions}
\crefname{problem}{problem}{problems}
\Crefname{problem}{Problem}{Problems}
\crefname{remark}{remark}{remarks}
\Crefname{remark}{Remark}{Remarks}
\crefname{algorithm}{algorithm}{algorithms}
\Crefname{algorithm}{Algorithm}{Algorithms}

\newcommand{\E}{\mathbb E}
\newcommand{\Var}{\operatorname{Var}}
\newcommand{\gap}{\operatorname{gap}}
\newcommand{\TV}{\mathrm{TV}}
\newcommand{\defeq}{:=}
\renewcommand{\Pr}{\mathop{\mathrm{Pr}}\nolimits}
\newcommand{\eps}{\varepsilon}
\newcommand{\abs}[1]{\ensuremath{\left\vert#1\right\vert}}

\newcommand{\Rel}{\textnormal{\textsf{Rel}}}
\newcommand{\UnRel}{\textnormal{\textsf{UnRel}}}
\newcommand{\numP}{\#\textnormal{\textsf{P}}}
\newcommand{\SpanL}{\#\textnormal{\textsf{SpanL}}}
\newcommand{\NP}{\textnormal{\textsf{NP}}}
\newcommand{\BIS}{\#\textnormal{\textsf{BIS}}}

\def\*#1{\mathbf{#1}} % Use \*A for \mathbf{A}
\def\+#1{\mathcal{#1}} % Use \+A for \mathcal{A}
\def\-#1{\mathrm{#1}} % Use \-A for \mathrm{A}
\def\^#1{\mathbb{#1}} % Use \^A for \mathbb{A}

\definecolor{HGcolor}{RGB}{255,50,50}

\title{Approximating two-terminal network reliability}

\author{
Weiming Feng\footnote{School of Computing and Data Science, The University of Hong Kong, HK, China} 
\and
Yucheng Fu\footnotemark[1]
\and
Heng Guo\footnote{School of Informatics, University of Edinburgh, Informatics Forum, Edinburgh, EH8 9AB, UK}
}

\date{}

\begin{document}

\maketitle

\begingroup
\renewcommand{\thefootnote}{}
\makeatletter
\def\Hy@footnote@currentHref{email-addresses}
\makeatother
\footnotetext{%
{E-mail address:} \texttt{wfeng@hku.hk},
\texttt{fyc0130@connect.hku.hk}, and \texttt{hguo@inf.ed.ac.uk}.}
\endgroup

\begin{abstract}
  We present a fully polynomial-time randomised approximation scheme (FPRAS) for the two-terminal reliability problem on general graphs, both directed and undirected.
  We also show that the complementary unreliability question is \BIS-hard.
  The key idea of the algorithm was discovered by GPT-5.6 Sol Ultra. %the authors subsequently verified and simplified it, and take full responsibility for the correctness of the result.
\end{abstract}

\section{Introduction}

In 1979, Valiant \cite{Valiant1979} introduced the computational complexity class \numP{}, a counting analogue of \NP{}.
One of his main motivations was to capture the intrinsic hardness of computing network reliability measures.
Here, the setting is that we are given a stochastic network, which can be a directed or undirected graph.
Each arc in a directed graph, or edge in an undirected graph, opens or closes independently with some probability, and reliability is measured by the probability of being connected in the open subgraph.
One of the most prominent variant is the two-terminal reliability,
which asks for the probability that if a given terminal $s$ can reach another given terminal $t$ in the stochastic graph.
Indeed, this is listed as one of the first thirteen \numP-complete problems by Valiant~\cite{Valiant1979}.

Reliability measures have been extensively studied, and it turns out that most variants are $\numP$-complete \cite{Ball80,Jerrum1981,BP83,PB83,Ball86,Col87},
which means that efficient exact computation is unlikely.
For two terminal reliability, it remains \numP-complete when every failure probability is $1/2$ \cite{Valiant1979}, or on directed acyclic graphs~\cite{PB83}, or on planar graphs with maximum degree three~\cite{Provan1986}.

Due to the apparent intractability, it is then natural to ask for approximation. 
However, progress about the approximation complexity of reliability measures is much slower.
For these problems, the appropriate notion of approximation is fully polynomial-time randomised approximation schemes (FPRAS),
which asks for a relative approximation that runs in polynomial time in the input size and inverse relative error.
The first FPRAS in this context is due to Karger \cite{Karger2001} for all-terminal undirected unreliability,
whereas the complementary all-terminal reliability approximation is solved much later by Guo and Jerrum \cite{GuoJerrum2019}.\footnote{While reliability and unreliability sum to $1$, either quantity can be exponentially small, which prevents a relative approximation of one to translate to that of the other.}
For two-terminal reliability, much less is known.
The special case restricted to directed acyclic graphs (DAGs) admits an FPRAS.
This is because the special case puts the problem into a certain complexity class called \SpanL{}, 
and Arenas, Croquevielle, Jayaram, and Riveros~\cite{ArenasEtAl2021} showed that all problems in \SpanL{} admit FPRASes.
See also \cite{AmarilliEtAl2025,FengGuo2024}.

The approximation complexity of two-terminal reliability in general, despite its old history~\cite{KarpLuby1985,ZenklusenLaumanns2011}, remains unresolved.
It was explicitly asked in the survey by Kannan \cite{Kannan1994} if an FPRAS exists.
In this paper we answer this question positively, by giving algorithms for both the directed and the undirected case.
In fact, we first give an algorithm for the directed version, and then reduce the undirected version to it.
The reduction is essentially due to Jerrum \cite{Jerrum1981}.

Thus, we formally introduce the directed version next.
Let $G = (V,E)$ be a directed graph with two distinguished vertices $s,t$. Each arc $e \in E$ is associated with an \emph{open probability} $p_e \in (0,1]$. The \emph{two-terminal network reliability} of $G$ asks the following question: suppose that each arc is independently open with probability $p_e$, what is the probability that there is a path from $s$ to $t$? Equivalently, the probability can be written as
\begin{align*}
  \Rel(s,t) = \Pr_{G_p}[s \to t] = \sum_{\substack{F \subseteq E:\\ s \text{ can reach } t \text{ in } (V,F)}} \prod_{e \in F} p_e \prod_{e \in E \setminus F} (1 - p_e),
\end{align*}
where $G_p$ denotes the random subgraph of open arcs.
In the special case when all $p_e = 1/2$, the problem amounts to counting the spanning subgraphs of $G$ in which $s$ can reach $t$.

By \emph{FPRAS} (fully polynomial randomised approximation scheme), we mean a randomised algorithm that, given a directed graph $G = (V,E)$, two terminals $s,t \in V$, open probabilities $p_e \in (0,1]$ for each arc $e \in E$, and an error parameter $\varepsilon > 0$, outputs a random variable $\hat{R}$ such that
\begin{align*}
    \Pr\left[(1-\varepsilon) \textsf{Rel}(s,t) \leq \hat{R} \leq (1+\varepsilon) \textsf{Rel}(s,t)\right] \geq \frac{3}{4},
\end{align*}
and whose running time is polynomial in the input length and in $\varepsilon^{-1}$.

\begin{theorem}
\label{thm:main}
There is an FPRAS for two-terminal reliability on directed graphs. On an input with $n = |V|$ vertices, $m = |E|$ arcs, minimum open probability $p_{\min} = \min_{e \in E} p_e$, its running time is
$\widetilde{O}\left(n^3m^2\left(\frac{1}{\varepsilon^2} + n\log\left(\frac{2}{p_{\min}}\right)\right)\right)$,
where $\widetilde{O}(\cdot)$ hides factors polylogarithmic in $n$, $\varepsilon^{-1}$, and $\log\left(\frac{2}{p_{\min}}\right)$.
\end{theorem}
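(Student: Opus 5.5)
We would follow the standard ``counting via sampling'' paradigm, using a telescoping product over a sequence of graphs and estimating each ratio by sampling from a suitable conditional distribution. To organise it, order the arcs $e_1,\dots,e_m$ and define interpolating probability vectors $p^{(0)},\dots,p^{(m)}$, where $p^{(i)}$ raises the open probability of the first $i$ arcs to $1$. Then $\Rel_{p^{(m)}}(s,t)=1$ whenever $t$ is reachable from $s$ in $G$; otherwise we output $0$ after a BFS. Write
\[
\Rel_{p}(s,t)=\prod_{i=1}^{m}\frac{\Rel_{p^{(i-1)}}(s,t)}{\Rel_{p^{(i)}}(s,t)} .
\]
Each ratio equals $\Pr[s\to t \text{ under } p^{(i-1)}]/\Pr[s\to t\text{ under } p^{(i)}]$. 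Couple the two measures by resampling only arc $e_i$. The ratio is then at least $\Pr_{p^{(i)}}[\text{the reachability event does not depend on } e_i \mid s\to t]\ge p_{e_i}$, and equals the expectation, under the conditional measure $\pi_i=G_{p^{(i)}}\mid s\to t$, of the indicator that $s\to t$ survives after $e_i$ is independently set closed with probability $1-p_{e_i}$.

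A lower bound of $p_{\min}$ is too weak to stay polynomial when $p_{\min}$ is tiny. The $n\log(2/p_{\min})$ term in the running time suggests a finer schedule, so we would instead lower each $p_e$ geometrically: raise $1-p_e$ to powers, or equivalently split an arc into $O(\log(2/p_{\min}))$ parallel-like stages. Each ratio is then at least a constant, and $O(m\log(2/p_{\min}))$ ratios each need $O(m/\varepsilon^2)$ samples by the standard Dyer--Frieze variance argument. It may be cleaner to keep $m$ ratios with vertex-level bounds instead; the $n^3m^2$ factor hints at $m$ phases, $m/\varepsilon^2$ samples each, and a sampler costing about $n^3$. We would settle this bookkeeping last.

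The core step, and the main obstacle, is sampling from $\pi=G_p$ conditioned on $s\to t$ with polynomial mixing. The event $s\to t$ is monotone increasing, so $\pi$ is a conditioned product measure on an up-set. Our first attempt is a Markov chain whose state is a pair $(F,P)$, where $P$ is a simple $s$--$t$ path in $F$ acting as a certificate. The moves would resample arcs off $P$ freely, with heat-bath updates according to $p_e$, and perform local path-rerouting moves that replace a segment of $P$ by another open route. The stationary distribution would be proportional to $w(F)\cdot\mathbf{1}[P\subseteq F]/\#\{\text{certificates}\}$. To avoid dividing by the number of paths, we would choose a canonical certificate instead, such as the lexicographically first BFS path. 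The chain is then just single-arc heat-bath Glauber dynamics on $\{F: s\to t\}$, and we need to bound its mixing time.

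For mixing we would attempt a coupling, or a path-coupling argument on the BFS structure, that proceeds vertex by vertex. Reveal the reachable set from $s$ in exploration order, and resample the exploration tree with a ``grow from $s$'' procedure. This amounts to an exact sampler: sample the reachable set $R$ from $s$ conditioned on $t\in R$ by sequential exploration, using recursive estimates of conditional probabilities computed with the same telescoping machinery (self-reducibility, in the style of Jerrum--Valiant--Vazirani). Such a scheme would naturally cost about $n$ levels times the per-level estimation, matching the extra factor $n$. If direct mixing fails, we would fall back on this self-reducible scheme: decide each arc out of the current explored frontier in turn, estimating $\Pr[s\to t\mid \text{partial assignment}]$ by recursion on smaller instances where revealed arcs are contracted or deleted. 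Keeping that recursion polynomial, by ensuring only the frontier structure matters and that conditional ratios stay bounded below, is the step we expect to be hardest.
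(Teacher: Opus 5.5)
\textbf{The gap is the sampler.} The counting layer of your plan is fine modulo bookkeeping: a telescoping product over a geometric schedule whose ratios are bounded below by a constant. The step everything rests on, sampling $F\sim G_{\boldsymbol{p}}$ conditioned on $s\to t$, does not go through.

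The chain you end up with is single-arc heat-bath Glauber dynamics on $\{F : s\to t\}$, and it mixes exponentially slowly. Attaching the canonical BFS certificate changes nothing, since it is a function of $F$. The $(F,P)$ variant with rerouting moves has the same defect, because every state still contains a full $s$--$t$ path. To see the slow mixing, take two vertex-disjoint $s$--$t$ paths of length $\ell$ with all $p_e=\frac12$. The arc sets in which the upper path is open carry conditional mass about $\frac12$, and so do those in which the lower path is open. A single-arc move cannot leave the first set without already being in the second. The intersection has conditional mass about $2^{-\ell-1}$. So no coupling or path-coupling argument can bound the mixing time of this chain.

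The self-reducible fallback is circular. The quantity $\Pr[s\to t\mid\text{partial assignment}]$ is the two-terminal reliability of another general instance: contract the explored reachable set into the source. It depends on the whole unexplored graph, not only on the frontier. Jerrum--Valiant--Vazirani only interconverts approximate counting and sampling. Used recursively without an independent base, it gives a recursion of depth about $m$ with polynomial branching, i.e.\ exponential time.

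\textbf{The missing idea is to relax the constraint so the chain can forget a witness path.} The paper never samples from the measure conditioned on $s\to t$.

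\begin{itemize}
\item Its states are pairs $(v,F)$ with $s$ reaching $v$ in $F$, weighted by $c_v\,\mathcal{D}_{\boldsymbol{p}}(F)$ with $c_v$ within a factor $4$ of $1/q_v$. The marked vertex is then roughly uniform over $V$.
\item A Metropolis move steps the marked vertex backwards along an open arc, optionally closing that arc. The reachability requirement can therefore retreat all the way to $s$, where the arc set follows an unconstrained product measure. This $s$-layer, of mass about $1/n$, replaces the rare intersection above as the gateway.
\item The spectral gap $\Omega(1/(nm^2))$ comes from a multicommodity flow routed through the $s$-layer along the reversed BFS discovery path. An injective encoding of the hybrid arc sets, together with Harris's inequality $q_u\geq q_w\Pr[w\text{ can reach }u]$, controls the load.
\item Annealing is used only to supply the weights. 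All arcs are lowered together by the factor $1-\frac{1}{4n}$. Since a reachability certificate has at most $n-1$ arcs, this gives $q_v^{(i)}\geq\frac34 q_v^{(i-1)}$ for every $v$.
\item Each stage estimates every $q_v$ directly as $\frac{c_s}{c_v}\cdot\frac{N_v}{N_s}$, with no telescoping product.
\end{itemize}

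Your plan lacks both the moving marked vertex and the need to track $q_v$ for all $v$, not just for $t$.
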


The algorithm in \Cref{thm:main} is a combination of simulated annealing and Markov chain Monte Carlo.
It is a significant departure from previously known reliability algorithms, such as \cite{Karger2001,GuoJerrum2019,CGZZ24,ArenasEtAl2021}.
In spirit, the closest algorithm is perhaps the celebrated approximation algorithm for non-negative permanents by Jerrum, Sinclair, and Vigoda \cite{JSV04}.
We give a high-level overview of the algorithm in \Cref{sec:overview}.

Another variant of the problem is the \emph{undirected} version, where $G$ is undirected and we ask the probability that $s$ and $t$ are connected in the random subgraph.
For this variant, efficient algorithms are known only for graphs of low tree-width~\cite{SatyanarayanaWood1985,GoharshadyMohammadi2020}, and no efficient approximation algorithm is known.
Jerrum~\cite{Jerrum1981} gave an exact constant-size gadget reduction to the directed version that preserves $p_{\min}$: replace each undirected edge with the directed gadget below.

\begin{figure}[H]
\centering
\begin{tikzpicture}[
    vertex/.style={circle,draw,minimum size=7mm,inner sep=0pt},
    arc/.style={->,thick}
]
% Undirected edge
\node[vertex] (left-u) at (0,0) {$u$};
\node[vertex] (left-v) at (2.2,0) {$v$};
\draw[thick] (left-u) -- node[above] {$p_{uv}$} (left-v);
%\node at (1.1,-0.75) {undirected edge};

% Replacement arrow
\draw[->,very thick] (2.9,0) -- (4.1,0);

% Directed gadget
\node[vertex] (u) at (5.0,0) {$u$};
\node[vertex] (a) at (6.6,0) {$a_e$};
\node[vertex] (b) at (8.2,0) {$b_e$};
\node[vertex] (v) at (9.8,0) {$v$};

\draw[arc] (u) -- node[above] {$1$} (a);
\draw[arc] (a) -- node[above] {$p_{uv}$} (b);
\draw[arc] (b) -- node[above] {$1$} (v);
\draw[arc] (b) to[bend right=42] node[above] {$1$} (u);
\draw[arc] (v) to[bend left=42] node[below] {$1$} (a);
\end{tikzpicture}
\captionsetup{skip=2pt}
\caption{Directed gadget for undirected edge $\{u,v\}$ with open probability $p_{uv}$.}
\label{fig:undirected-gadget}
\end{figure}

For a connected undirected graph with $n$ vertices and $m$ edges with $m \geq n -1$, the reduction produces a directed graph with $n+2m$ vertices and $5m$ arcs, while preserving the minimum open probability. Combined with \Cref{thm:main}, this gives the following result.

\begin{corollary}
\label{cor:undirected}
There is an FPRAS for two-terminal reliability on undirected graphs. On an input with $n$ vertices, $m$ edges, minimum open probability $p_{\min}$, and error parameter $\varepsilon>0$, its running time is
$\widetilde{O}\left(m^5\left(\frac{1}{\varepsilon^2}+m\log\left(\frac{2}{p_{\min}}\right)\right)\right)$.
\end{corollary}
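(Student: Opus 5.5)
The plan is to reduce to \Cref{thm:main} via the gadget of \Cref{fig:undirected-gadget}, and then translate the running time.

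\textbf{Step 1: the reduction is exact.} Given an undirected $G=(V,E)$ with terminals $s,t$, we may first discard every component other than the one containing $s$. If $t$ lies outside it, we output $0$. Otherwise $G$ is connected and $m \ge n-1$.

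We build a directed graph $D$ by replacing each edge $e=\{u,v\}$ with the gadget. This adds vertices $a_e,b_e$ and arcs $u\to a_e$, $a_e\to b_e$, $b_e\to v$, $b_e\to u$, $v\to a_e$. The middle arc is open with probability $p_{uv}$, and the other four have probability $1$.

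If the middle arc is closed, no path can use the gadget to leave $\{u,v\}$. Indeed, $b_e$ is then unreachable, and $a_e$ has no other out-arcs.

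If the middle arc is open, then $u$ reaches $v$ via $u\to a_e\to b_e\to v$. Likewise $v$ reaches $u$ via $v\to a_e\to b_e\to u$.

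The gadget vertices are private to $e$. So for every realisation, $s$ reaches $t$ in $D$ if and only if $s$ and $t$ are connected in $G$ restricted to the edges whose middle arcs are open. The coupling between edge states in $G$ and middle-arc states in $D$ is the identity, so $\Rel_D(s,t)$ equals the undirected reliability exactly.

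We also check the size and the probabilities. $D$ has $n+2m \le 3m+1 = O(m)$ vertices and $5m$ arcs. Its minimum open probability is $\min(1,p_{\min}) = p_{\min}$.

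\textbf{Step 2: plug into the bound.} We run the algorithm of \Cref{thm:main} on $D$ with $n' = O(m)$ and $m' = 5m$. The running time becomes
\[
\widetilde{O}\bigl(m^3 m^2(\varepsilon^{-2} + m\log(2/p_{\min}))\bigr) = \widetilde{O}\bigl(m^5(\varepsilon^{-2}+m\log(2/p_{\min}))\bigr).
\]
The construction and the component pruning take $O(n+m)$ time, which is dominated by this bound. The success probability and the relative-error guarantee carry over unchanged, because the two reliabilities coincide.

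The only point needing care is Step 1. One must confirm that the gadget cannot create spurious reachability, for example entering at $u$ and exiting at $u$ or $v$ when the middle arc is closed. This follows because every exit from the gadget passes through $b_e$, and $b_e$ can only be entered via the middle arc.
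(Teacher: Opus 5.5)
Your proposal is correct and takes the paper's approach: apply Jerrum's gadget to get a directed instance with $n+2m=O(m)$ vertices, $5m$ arcs and the same $p_{\min}$, then substitute into the bound of \Cref{thm:main}. Your check that the gadget creates no spurious reachability is something the paper leaves to Jerrum, and your pruning to the component of $s$ makes explicit the paper's standing assumption that the graph is connected, so $m\geq n-1$.
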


In light of Karger's celebrated algorithm \cite{Karger2001},
it is also interesting to ask the approximation complexity for unreliability $\UnRel(s,t)=1-\Rel(s,t)$.
Feng and Guo \cite{FengGuo2024} showed that even in DAGs, approximating $s-t$ unreliability is \BIS{}-hard.
Here, \BIS{} is the problem of counting the number of independent sets in bipartite graphs.
While \BIS{} is not known to be \NP{}-hard to approximate, it is also conjectured to have no FPRAS \cite{DGGJ03}.
On the other hand, the same problem in undirected graphs is less clear.
The only prior result is due to Karger \cite{Karger2001}, where an FPRAS is presented if the minimum $s-t$ cut has the same order as the global minimum cut.
We show that in undirected graphs, the problem is \BIS{}-hard as well, and thus is unlikely to admit an FPRAS.

\begin{theorem}  \label{thm:unrel-hard}
  For any $p\in(0,1)$,
  if undirected $s-t$ unreliability at fixed open probability $p$ for all edges admits an FPRAS, then \BIS{} admits an FPRAS.
\end{theorem}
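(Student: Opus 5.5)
The plan is to reduce from \BIS{} by a direct gadget construction. Let $H=(L\cup R,E_H)$ be the bipartite graph whose independent sets we want to count, and build an undirected graph $G$ with terminals $s,t$ as follows. For every $u\in L$ there is a "terminal edge" between $s$ and $u$, and for every $v\in R$ a terminal edge between $v$ and $t$. Every edge $\{u,v\}\in E_H$ becomes a "middle edge" between $u$ and $v$. Each of these idealised edges will later be replaced by a two-terminal series-parallel network of edges, all with the fixed open probability $p$. We aim for effective open probability close to $1/2$ on terminal edges and close to $1$ on middle edges.

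\textbf{Step 1 (idealised identity).} First assume that terminal edges are open with probability exactly $1/2$ and middle edges with probability exactly $1$. Let $A\subseteq L$ be the set of left vertices whose terminal edge is open. Then $s$ reaches exactly $A$, $N(A)$, and whatever $N(A)$ reaches through open right terminal edges. So $s\not\to t$ holds exactly when every $v\in N(A)$ has its edge to $t$ closed. The edges to $t$ from $R\setminus N(A)$ are unconstrained; let $B$ be the set of those that are closed. Summing over $A$ and $B$ gives
\[
\UnRel(s,t)=\sum_{A\subseteq L}2^{-|L|}\,2^{-|N(A)|}=2^{-|L|-|R|}\sum_{A\subseteq L}\bigl|\{B\subseteq R\setminus N(A)\}\bigr|
=2^{-|L|-|R|}\cdot\#\mathrm{IS}(H).
\]
Hence an $\eps$-approximation of unreliability yields an $\eps$-approximation of $\#\mathrm{IS}(H)$.

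\textbf{Step 2 (simulating effective probabilities with fixed $p$).} A two-terminal series-parallel network with all edges open with probability $p$ realises a connection probability $g$. Series composition multiplies these probabilities, and parallel composition maps $(g_1,g_2)\mapsto 1-(1-g_1)(1-g_2)$. Starting from $p\in(0,1)$, one can approximate any target in $(0,1)$ to within additive $\delta$ using $O_p(\log(1/\delta))$ edges, by a binary-expansion-style greedy choice of series or parallel steps, in the spirit of Moore--Shannon. We use this for the terminal edges with target $1/2\pm\delta$. For the middle edges, $k$ parallel copies of a single edge give failure probability $(1-p)^k$. Each gadget touches the rest of $G$ only at its two endpoints, so the $s$--$t$ connectivity of $G$ depends only on the independent gadget connectivity events. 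The idealised computation therefore applies with the effective probabilities in place of $1/2$ and $1$.

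\textbf{Step 3 (error control).} With terminal effective probabilities $1/2\pm\delta$, each term in the sum of Step 1 changes by a factor $(1\pm2\delta)^{|L|+|R|}$. This is $1\pm\eps/3$ when $\delta=\Theta(\eps/n)$. Failures of middle gadgets can only alter the event on a set of probability at most $|E_H|(1-p)^k$. The main term is at least $2^{-|L|-|R|}\ge2^{-n}$, since the empty set is independent. So taking $k=O_p(n+\log(m/\eps))$ makes this additive error a relative $\eps/3$. All gadgets have polynomial size, so $G$ has polynomial size and the reduction is approximation-preserving.

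I expect the main obstacle to be Step 2. One must check that for every fixed $p\in(0,1)$ the series-parallel closure of $\{p\}$ is dense in $(0,1)$ with logarithmic size for precision $\delta$, and that this works even when $p$ is an arbitrary fixed constant rather than $1/2$. The bookkeeping that the middle-gadget failures stay negligible relative to the exponentially small main term is routine once $k$ is linear in $n$. If exact density proves awkward, a fallback is to use any effective terminal probability $a$ obtainable from gadgets, $a\neq 1/2$. Step 1 then yields a weighted independent-set sum with fugacities determined by $a$, and approximating this is also \BIS{}-hard by the standard results on weighted bipartite independent sets.
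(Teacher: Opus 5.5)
\textbf{Step 1 is false for undirected graphs, and the rest of the proposal inherits this.} Your identity $\UnRel(s,t)=2^{-|L|-|R|}\#\mathrm{IS}(H)$ treats each middle edge as an arc $u\to v$. You assume $s$ reaches only $A$, then $N(A)$, then possibly $t$. In an undirected graph, once $s$ reaches some $v\in N(A)$, it can walk back along an open middle edge $\{u',v\}$ to a left vertex $u'\notin A$, then on to $N(u')$, and so on.

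Take $H$ to be the path $u_1v_1u_2v_2$, which has $\#\mathrm{IS}(H)=8$. Your formula gives $8/16$, but the true value is $7/16$. For $A=\{u_1\}$ with $\{v_1,t\}$ closed and $\{v_2,t\}$ open, $s$ reaches $t$ along $s,u_1,v_1,u_2,v_2,t$.

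In general, with all middle edges open, every connected component $C$ of $H$ collapses to a single vertex. Then $s\to t$ exactly when some component has both an open $s$-edge and an open $t$-edge, so
\[
\UnRel(s,t)=\prod_{C}\left(2^{-|C\cap L|}+2^{-|C\cap R|}-2^{-|C|}\right),
\]
which is trivially computable in polynomial time. No choice of gadgets or error control in Steps 2--3 can rescue this construction, and your weighted-fugacity fallback fails for the same reason. Your computation is correct for the directed graph with arcs $s\to u\to v\to t$, but that is the DAG hardness result of Feng and Guo, not the undirected statement.

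\textbf{The missing idea.} In the undirected setting, independence has to be encoded in cut sizes, not in one-way reachability. The paper uses the Provan--Ball graph: each edge $\{u,v\}$ of the bipartite graph contributes one copy each of $\{s,u\}$, $\{u,v\}$ and $\{v,t\}$. For every $s$-set $S$ this gives
\[
|\delta(S)|=|E|+2e(I(S)), \qquad I(S)=(L\setminus S)\cup(S\cap R).
\]
So the minimum $s$--$t$ cuts (of size $|E|$) are in bijection with independent sets, and every other cut has size at least $|E|+2$.

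All edges then get the same, extremely reliable, open probability $q$ with $1-q=\exp(-Cn/\eps)$. This is realised by replacing each edge with $O(n/\eps)$ parallel length-$2$ paths of $p$-edges. The bounds
\[
Z(1-q)^mq^{2m}\le\UnRel\le Z(1-q)^m+2^n(1-q)^{m+2}
\]
then show that the minimum cuts dominate, so $\UnRel/(1-q)^m$ is a $(1\pm\eps)$-approximation of $Z$.

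The working regime is therefore ``all edges nearly certain, count the ground-state cuts.'' Your regime of ``terminal edges at $1/2$, middle edges certain'' loses all information about independent sets in the undirected case. As a side benefit, the paper's route needs no series--parallel density argument: parallel length-$2$ paths suffice, so the obstacle you anticipated in Step 2 never arises.
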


Indeed, Provan and Ball \cite{PB83} already gave a parsimonious reduction from \BIS{} to the number of $s-t$ minimum cuts.
The ``ground states'' for $s-t$ unreliability are exactly these min-cuts, and thus we just need a boosting argument so that these states contribute a significant portion of the unreliability.
A full proof is given in \Cref{sec:hardness} for completeness.
Note that if we consider $s-t$ \emph{minimal} cuts, the problem becomes \NP{}-hard \cite{GGL16}.

The \BIS{}-hardness of \Cref{thm:unrel-hard} and in \cite{FengGuo2024} leaves open the possibility of $s-t$ unreliability being either \NP{}-hard or \BIS{}-easy.
We find both possibilities to be plausible and would not make a conjecture here.

\subsection{Proof overview}\label{sec:overview}

Given a graph $G=(V,E)$, it is natural to define a so-called Gibbs distribution,
whose states are $F\subseteq E$ such that $s$ can reach $t$ in $(V,F)$,
and the probability of $F$ is proportional to the probability of $F$ being open in the product distribution (bond percolation).
Approximating $\Rel(s,t)$ can reduce to sampling from this Gibbs distribution, and the standard way to do so is Glauber dynamics.
Since this recipe works well for all-terminal reliability (see, e.g.~\cite{CGZZ24}), 
it is tempting to try it for two-terminal reliability as well.
However, it fails quite badly for a simple example where $s$ and $t$ are connected by two long paths.
If either path is open, $s$ can reach $t$,
but it is exponentially unlikely to switch from one path to another.
It implies straightforward Glauber dynamics mixes slowly on graphs like this.

To avoid this rigidity, one idea is to enlarge the state space to include both a subset $F\subseteq E$ of arcs and a marked vertex $u$, so that $s$ can always reach $u$ in $(V,F)$.
If we assign a uniform weight on all such states, then the resulting distribution biases towards the vertices $u$ whose $\Rel(s,u)$ is large.
To ensure that we can sample $(t,F)$ with an at least inverse polynomial probability, 
we will need to add a correction term of roughly, say, $1/\Rel(s,u)$ for each possible marked vertex $u$.
However, our original goal is to estimate $\Rel(s,t)$, so this sounds like we are going in circles.

The key idea to salvage the above is to use simulated annealing.
Instead of open probability $p_e$ for each $e$,
we first consider $p_e=1$ for all $e$.
In this case it is trivial to evaluate $\Rel(s,u)$ for all $u$.
Then, we gradually reduce each $p_e$ by a small multiplicative factor,
which is essentially the step of adjusting temperatures in simulated annealing.
This ensures that $\Rel(s,u)$'s between two adjacent temperatures change only by a constant factor.
Then, we can use the estimates of $\Rel(s,u)$ of the previous temperature to define the distribution of the next temperature,
which has only a constant multiplicative error to what we desire.

With the target distribution in hand, the remaining question is how to sample from it.
We consider a Markov chain with two types of transitions,
either to update the arc set or to update the marked vertex together with an incident arc.
We use the multicommodity flow method \cite{Sinclair1992} to bound the congestion of this chain.
The key is to route the probability mass between any pair of states while maintaining a low congestion.
In fact, the pioneering work of canonical paths by Jerrum and Sinclair \cite{JerrumSinclair1989} has already shown us how to distribute the probability mass evenly when the global changes are flipping paths and cycles.
In our case, for any two states $(u,F)$ and $(u',F')$, there is a canonical way of choosing a (backwards) path from $u$ to $s$ and then from $s$ to $u'$. 
We use the idea of \cite{JerrumSinclair1989} to distribute the flow to flip arcs along this path.
However, one complication is that when some arc is removed, the marked vertex needs to move too.
This is allowed in our enlarged state space, and indeed appears to be the main benefit of considering it.
Our (random) path from $(u,F)$ to $(u',F')$ always goes through an intermediate state $(s,U)$, where $U$ is a random subgraph from the product distribution.
This random $U$ distributes the weights somewhat evenly in the state space.
Intuitively, there is no constraint on the arcs once the marked vertex is $s$, so that arcs can be freely changed.
Some extra care is needed here and our constructed flow actually interleaves between the movement of the marked vertex and other arc flips.
This is the most involved step of the algorithm, and the details can be found in \Cref{sec:congestion}.

Overall, the high level idea of using simulated annealing together with Markov chain Monte Carlo is very similar to the non-negative permanent approximation algorithm by Jerrum, Sinclair, and Vigoda \cite{JSV04}.
However, the details are quite different and the construction of the flow paths appears to be new.
We hope this framework to find more applications in the future.

\section{Proof of \texorpdfstring{\Cref{thm:main}}{Theorem 1}}\label{sec:algorithm}

In this section, we first give a high-level outline of the algorithm in \Cref{thm:main}.
The main algorithm and the annealing is explained in \Cref{sec:annealing} and the Markov chain step in \Cref{sec:Markov}.
The analysis of the Markov chain is in \Cref{sec:analysis}.

Let $G = (V,E)$ be a directed graph with two distinguished vertices $s,t$. 
If $s=t$, then $\Rel(s,t)=1$ and there is nothing to estimate, so assume that $s\neq t$.
If $G$ contains self-loops, they can be safely removed.
If $G$ contains parallel arcs, say with open probability $p_1$ and $p_2$, then they can be merged into a single arc with an effective open probability $p_1+p_2-p_1p_2$.
Thus we may assume that $G$ is simple.
This reduction never decreases the value of $p_{\min}$. It does not affect the running time upper bound in \Cref{thm:main}.
Moreover, we assume throughout that $s$ can reach every vertex of $G$. This is without loss of generality: no path starting at $s$ can use an arc incident to a vertex that $s$ cannot reach, so every such vertex and arc can be deleted without changing $\textsf{Rel}(s,t)$, and if $t$ is deleted then $\textsf{Rel}(s,t) = 0$. Let $p_e \in (0,1]$ be the open probability of each arc $e \in E$. Let $\boldsymbol{p} = (p_e)_{e \in E}$ denote the vector of open probabilities. Let $\+D_{\boldsymbol{p}}$ be the product measure with Bernoulli distribution with parameter $\boldsymbol{p}$ (namely bond percolation):  
\begin{align*}
\forall F \subseteq E, \quad  \+D_{\boldsymbol{p}}(F) \defeq \prod_{e \in F} p_e \prod_{e \in E \setminus F} (1 - p_e).
\end{align*} 
For any vertex $v \in V$, define the collection of arc sets  
\begin{align*}
\Omega_{v} \defeq \{F \subseteq E : s \text{ can reach } v \text{ via arcs in } F\}.
\end{align*}
Clearly, the probability that $s$ can reach $v$, denoted by $q_v(\boldsymbol{p})$, is given by
\begin{align*}
q_v(\boldsymbol{p}) \defeq \sum_{F \in \Omega_{v}} \+D_{\boldsymbol{p}}(F).
\end{align*}
Note that $q_v$ is a function of $\boldsymbol{p}$, and we write $q_v$ if $\boldsymbol{p}$ is clear from the context.
In particular, $q_t = \textsf{Rel}(s,t)$ is the value of interest and $q_s = 1$. As $s$ can reach every vertex, $q_v > 0$ for every $v \in V$.

It is convenient to consider the more general problem of approximating $q_v$ for all vertices $v \in V$ simultaneously.  
Define the following set of pairs
\begin{align*}
\Omega = \{(v,F) \mid v \in V, F \in \Omega_{v}\}.
\end{align*}
We regard every pair $X=(v,F)\in\Omega$ as a state, and call $v$ the \emph{marked vertex} of $X$ and $F$ its arc set.
Suppose one can draw a random state $X = (v,F) \in \Omega$ with probability proportional to $\+D_{\boldsymbol{p}}(F)$.
The states whose marked vertex equals $v$ have total weight $\sum_{F \in \Omega_v} \+D_{\boldsymbol{p}}(F) = q_v$, so the marked vertex of $X$ equals $v$ with probability exactly ${q_v}/{\sum_{u \in V} q_u}$.
Given $N$ independent samples of $X$, let $N_v$ be the number whose marked vertex is $v$.
Then $\E\left[{N_v}/{N}\right] = {q_v}/{\sum_{u \in V} q_u}$ for every $v \in V$, and since $q_s = 1$, the ratio ${N_v}/{N_s}$ estimates $q_v$ with the normalising constant $\sum_{u \in V} q_u$ cancelled.
However, this estimator is useless as it stands.
If $q_v$ is exponentially small, then so is the probability that a single sample has marked vertex $v$, and one needs exponentially many samples before $N_v$ is nonzero.
The solution is to reweight the vertices so that none of them is rare. We introduce the following distribution.

\begin{definition}
\label{def:pi}
Given a vector $\boldsymbol{p}$ of open probabilities and a vector of positive vertex weights $\boldsymbol{c} = (c_v)_{v \in V}$, define a distribution $\pi_{\boldsymbol{c},\boldsymbol{p}}$ over $\Omega$ by
\begin{align*}
    \forall (v,F) \in \Omega, \quad  \pi_{\boldsymbol{c},\boldsymbol{p}}(v,F) = \frac{c_v \+D_{\boldsymbol{p}}(F)}{\sum_{u \in V} c_u q_u(\boldsymbol{p})} \propto c_v \+D_{\boldsymbol{p}}(F).
\end{align*}
\end{definition}

Suppose we can draw $X = (v,F) \sim \pi_{\boldsymbol{c},\boldsymbol{p}}$ and let $N_v$ be the number of samples whose marked vertex is $v$ among $N$ independent samples of $X$.
Then $\E\left[{N_v}/{N}\right] = {c_v q_v}/{\sum_{u \in V} c_u q_u}$, and since $q_s = 1$, the normalising constant once again cancels in the ratio: one can estimate $q_v$ by $\frac{c_s}{c_v} \cdot \frac{N_v}{N_s}$.
The point of the weights is that they can flatten the distribution of the marked vertex.
Ideally, one would set $c_v = 1/q_v$ for all $v \in V$, in which case the marked vertex is uniform over $V$, as it equals $v$ with probability ${c_v q_v}/{\sum_{u \in V} c_u q_u} = 1/n$ for every $v$, where $n = |V|$.
No value of the marked vertex is then rare, and $N$ can be taken polynomially large.
 
The ideal weights are of course the very quantities to be approximated, and cannot be set in advance.
It turns out that knowing $1/q_v$ up to a constant factor is enough.
The following lemma shows that such weights already allow all the $q_v$ to be estimated within relative error $\varepsilon$, and that sampling from $\pi_{\boldsymbol{c},\boldsymbol{p}}$ can be done efficiently by a Markov chain.

\begin{lemma}
\label{lem:sampling}
Suppose we are given open probabilities $\boldsymbol{p} = (p_e)_{e \in E}$ and positive vertex weights $\boldsymbol{c} = (c_v)_{v \in V}$
such that $\frac{1}{4q_v(\boldsymbol{p})} \leq c_v \leq \frac{4}{q_v(\boldsymbol{p})}$ for all $v \in V$.
\begin{enumerate}
  \item \label{item:MC} There is a Markov chain on $\Omega$ with unique stationary distribution $\pi_{\boldsymbol{c},\boldsymbol{p}}$. Independently draw $F_0\sim\+D_{\boldsymbol{p}}$ and start the chain from the random state $X_0=(s,F_0)$. For any $\eta \in (0,1)$, after running
    \begin{align*}
    T = O\left(nm^2\log\left(\frac{n}{\eta}\right)\right)
    \end{align*}
    steps, it outputs a random $X \in \Omega$ satisfying $d_{\TV}(X, \pi_{\boldsymbol{c},\boldsymbol{p}}) \leq \eta$, where $d_{\TV}$ is the total variation distance. Moreover, for every $\zeta\in(0,1)$, the simulation of $T$-step Markov chain takes time
    \begin{align*}
    O\left(n^2m^2\log\left(\frac{n}{\eta}\right)+m\log\left(\frac{1}{\zeta}\right)\right)
    \end{align*}
    with probability at least $1-\zeta$.
  \item \label{item:estimator} For any $\varepsilon,\delta\in(0,1)$, using $N = O\left(\frac{n}{\varepsilon^2} \log\left(\frac{n}{\delta}\right)\right)$ samples of the above Markov chain, each with $\eta = \frac{\delta}{4N}$, one can produce random $(\hat{q}_v)_{v \in V}$ such that
    \begin{align*}
    \Pr\left[\forall v \in V,\ (1-\varepsilon)q_v(\boldsymbol{p}) \leq \hat{q}_v \leq (1+\varepsilon)q_v(\boldsymbol{p})\right] \geq 1 - \delta
    \end{align*}
    in worst-case time
    \begin{align*}
    O\left(
      \frac{n^3m^2}{\varepsilon^2}
      \log\left(\frac{n}{\varepsilon\delta}\right)
      \log\left(\frac{n}{\delta}\right)
    \right).
    \end{align*}
\end{enumerate}
\end{lemma}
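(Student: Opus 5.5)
I will prove \Cref{item:MC} by defining a Metropolis-type chain and bounding its mixing time with multicommodity flows. I will then derive \Cref{item:estimator} from standard Chernoff bounds.

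\textbf{The chain.} From $X=(v,F)$, pick $e\in E$ uniformly at random and, with probability $1/2$ each, make one of two moves.
\begin{itemize}[label={}]
\item \emph{Arc move:} propose $F\oplus e$. Accept only if $F\oplus e\in\Omega_v$, with the Metropolis probability $\min\{1,\+D_{\boldsymbol p}(F\oplus e)/\+D_{\boldsymbol p}(F)\}$. Since this ratio is $p_e/(1-p_e)$ or its inverse, the chain is equivalently the heat-bath update of $e$ conditioned on staying in $\Omega_v$.
\item \emph{Vertex move:} if $e=(v,w)\in F$, propose marked vertex $w$; if $e=(w,v)\in F$, propose marked vertex $w$ provided $F\in\Omega_w$. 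Accept with probability $\min\{1,c_w/c_v\}$.
\end{itemize}
Add a holding probability of $1/2$. Detailed balance with respect to $c_v\+D_{\boldsymbol p}(F)$ is immediate. Irreducibility holds because every state reaches $(s,E)$: add all arcs, then walk the marked vertex back along a path to $s$. So $\pi_{\boldsymbol c,\boldsymbol p}$ is the unique stationary distribution.

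\textbf{Mixing from the warm start.} Since $c_s\ge 1/4$ and $\sum_u c_uq_u\le 4n$, we get $\pi(s,F_0)\ge \+D_{\boldsymbol p}(F_0)/(16n)$. So the start $X_0=(s,F_0)$ with $F_0\sim\+D_{\boldsymbol p}$ is an $O(n)$-warm start. Given a spectral gap bound $\gap\ge \Omega(1/(nm^2))$, the bound $d_{\TV}\le \sqrt{O(n)}\,e^{-\gap\, T}$ then yields $T=O(nm^2\log(n/\eta))$.

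I bound the gap by Sinclair's multicommodity flow, aiming for congestion $\rho=O(nm\cdot\ell)$ with path length $\ell=O(m)$. For a pair $X=(u,F)$, $Y=(u',F')$, I route the flow through intermediate states $(s,U)$ with $U\sim\+D_{\boldsymbol p}$, splitting the flow proportionally to $\+D_{\boldsymbol p}(U)$. Each half, $(u,F)\to(s,U)$ and the reverse of $(u',F')\to(s,U)$, proceeds in two phases. First, walk the marked vertex backwards along a canonical path $P$ from $s$ to $u$ inside $F$, such as a BFS-tree path. Each backward step is legal because the prefix of $P$ remains in $F$. Second, once the marker is at $s$, flip the arcs of $F\oplus U$ in a fixed order; this is unconstrained, since every arc set lies in $\Omega_s$.

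Routing through $(s,U)$ at the midpoint is the key to low congestion: the $c$-weights are flattened to within constant factors, so $c_u q_u\approx 1$. The standard injective encoding then applies. For a transition $(w,H)\to(w',H')$, the pair $(X,U)$ (respectively $(Y,U)$) is determined by the current state, an encoding arc set $H\oplus F\oplus U$ of comparable $\+D_{\boldsymbol p}$-weight, and polynomially many extra bits: the marker position and the index along the path. This gives $\pi(X)\pi(Y)\+D(U)\lesssim \mathrm{poly}(n,m)\,\pi(\text{state})\pi(\text{encoding})$.

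\textbf{Main obstacle.} The hard step is the interleaving already flagged in the overview. During the first phase only the marker moves, but if I want to add arcs of $U$ before the marker reaches $s$, or remove arcs of $F$ on $P$, the marker must stay reachable. The naive order (move the marker all the way to $s$, then flip) keeps every state valid. However, the weight of intermediate states $(w,F)$ is $c_w\+D(F)$, which is fine by the warm weights. The real difficulty is the encoding in the weighted ($p_e\ne1/2$) setting, where I must verify that the complementary arc set $F\oplus U\oplus H$ has weight at most $\+D(F)\+D(U)/\+D(H)$ up to constants. I would use the product structure: on each arc, the pair $(H, \text{complement})$ is a permutation of $(F,U)$.

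\textbf{Running time and \Cref{item:estimator}.} Each step can check membership in $\Omega_w$ by one BFS in $O(m)$ time (the $n$ factor comes from $m\ge n-1$), giving $O(nm\cdot T)$. The additive $m\log(1/\zeta)$ term covers sampling $F_0$ with Bernoulli variables via a geometric-skip procedure that succeeds with high probability.

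For \Cref{item:estimator}, take $\hat q_v=\frac{c_s N_v}{c_v N_s}$. Each marked-vertex frequency has mean $c_vq_v/\sum_u c_uq_u\ge 1/(16n)$, up to an additive $\eta$ per sample from the TV error. A Chernoff bound with $N=O(n\varepsilon^{-2}\log(n/\delta))$ and a union bound over the $n$ vertices gives relative error $\varepsilon/3$ for every $N_v$ and for $N_s$ simultaneously; the TV errors contribute at most $N\eta=\delta/4$ to the failure probability. Multiplying $N$ by the per-sample cost gives the stated worst-case time. The bound is deterministic because the $\zeta$-tails can be truncated at $\zeta=\delta/(4N)$.
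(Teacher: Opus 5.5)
\textbf{The flow, the chain, and the running-time analysis all fail.}

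Your flow does not have polynomial congestion. The step that breaks is where you set the interleaving aside: ``the weight of intermediate states $(w,F)$ is $c_w\+D_{\boldsymbol p}(F)$, which is fine by the warm weights.'' Admissible weights give $c_vq_v\in[\frac14,4]$, hence $c_u/c_w=\Theta(q_w/q_u)$, which is unbounded. Take the directed path $s=v_0\to v_1\to\cdots\to v_r$ with every $p_e=\frac12$. Then $\pi(v_r,E)\geq\frac{1}{16n}$ but $\pi(v_1,E)\leq 32\cdot 2^{-r}/n$. Your half-paths retreat the marker with the arc set frozen at $E$, so essentially all of the $\Theta(1/n)$ demand of $(v_r,E)$ crosses $(v_1,E)\to(v_0,E)$, and the congestion is $2^{\Omega(n)}$. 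Flipping at $s$ afterwards fails the same way: the first flip out of $(s,E)$ has capacity at most $\pi(s,E)\leq 16\cdot 2^{-r}/n$ yet carries a constant fraction of that demand. Your inequality $\pi(X)\pi(Y)\+D_{\boldsymbol p}(U)\lesssim\mathrm{poly}(n,m)\,\pi(\text{state})\,\pi(\text{encoding})$ is therefore false. The loss is in the weight ratio $c_u/c_w$, not in the number of preimages, so extra bits cannot repair it.

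The missing idea is to retreat the marker \emph{while} switching arcs from $F$ to $U$. The paper scans backwards through the inspection order of a BFS from $s$ stopped at $u$, stepping the marker back exactly at discovery arcs. This guarantees that when the marker sits at $w$, every arc after the current position, including the suffix $w\to\cdots\to u$ of the discovery path, already carries its $U$-value. The complementary set $\widetilde H$ ($U$-values up to the current position, $F$-values after it) then has three properties. It satisfies $\+D_{\boldsymbol p}(F)\+D_{\boldsymbol p}(U)=\+D_{\boldsymbol p}(H)\+D_{\boldsymbol p}(\widetilde H)$. It contains a $w\to u$ path avoiding the transition's arc. Together with the transition, it determines $(F,U)$ by replaying the BFS. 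So the encodings have total mass at most $\Pr[w\to u]$, and Harris's inequality $q_w\Pr[w\to u]\leq q_u$ cancels exactly the factor $q_w/q_u$ introduced by $c_u/c_w$.

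Two further steps also fail. First, your vertex move never resamples the arc it traverses, and this alone rules out a bound independent of $p_{\min}$. Take a single arc $e=(s,t)$ with $p_e=p\leq\frac12$. The states $(s,\emptyset)$, $(s,\{e\})$, $(t,\{e\})$ have $\pi$-masses $\Theta(1)$, $\Theta(p)$, $\Theta(1)$, and $(s,\{e\})$ is the only neighbour of the other two. Hence the gap is $O(p)$ and the mixing time from your warm start is $\Omega(1/p)$, whereas Item~1 asserts $O(\log(1/\eta))$ for $n=2$, $m=1$. The paper's $K_1$ removes this bottleneck by letting the backward move along $e=(a,b)\in F$ propose $(a,F\setminus\{e\})$ as well as $(a,F\cup\{e\})$. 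This is also what makes the capacity of a $K_1$ transition match the demand in the load computation.

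Second, one breadth-first search per step costs $O(mT)=O(nm^3\log(n/\eta))$, not $O(n^2m^2\log(n/\eta))$. The stated bound needs witness-path bookkeeping. Maintain an $s\to v$ path, and search only when the selected arc lies on it or enters the marked vertex, which happens with probability $O(n/m)$ per step. Then control the number of searches by an exponential-moment tail bound. That tail bound, not the sampling of $F_0$ (which takes $O(m)$ time deterministically), is where the $m\log(1/\zeta)$ term comes from. Your Item~2 argument is essentially the paper's and is fine once Item~1 is in place.
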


The Markov chain is described in \Cref{sec:Markov}.
Once Item \eqref{item:MC} is established, Item~\eqref{item:estimator} follows from the estimator discussed earlier.
The proof of Item \eqref{item:MC} is deferred to \Cref{sec:analysis}, and the proof of Item~\eqref{item:estimator} is deferred to \Cref{sec:missing}. 
In \Cref{sec:annealing} next, we show  how to use \Cref{lem:sampling} to approximate the two-terminal reliability.

\subsection{The main algorithm}\label{sec:annealing}

Recall that $\boldsymbol{p} = (p_e)_{e \in E}$ is the vector of open probabilities. 
To apply \Cref{lem:sampling}, we need to construct a vector $\boldsymbol{c} = (c_v)_{v \in V}$ satisfying $\frac{1}{4q_v(\boldsymbol{p})} \leq c_v \leq \frac{4}{q_v(\boldsymbol{p})}$ for all $v \in V$. 
We call such a vector $\boldsymbol{c}$ admissible.
This can be done by an annealing procedure on open probabilities. The same idea was used by Jerrum, Sinclair, and Vigoda~\cite{JSV04} to approximate the permanent of a non-negative matrix. 

We first record that the comparison between two nearby probability vectors
depends only on the size of a reachability certificate, rather than on the
total number of open arcs.

\begin{lemma}\label{lem:certificate-thinning}
Let $\boldsymbol{p}'=(p'_e)_{e\in E}$ and
$\boldsymbol{p}=(p_e)_{e\in E}$ be two vectors of open probabilities. If,
for some $\beta\in(0,1]$, it holds that $\beta p_e\leq p'_e\leq p_e$ for all $e \in E$,  
then, for every $v\in V$,
$q_v(\boldsymbol{p}')\geq \beta^{n-1}q_v(\boldsymbol{p}).$
\end{lemma}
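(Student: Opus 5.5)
The plan is a coupling of the two percolation measures, combined with the observation that reachability of $v$ from $s$ is always witnessed by a simple path with at most $n-1$ arcs. Use monotone thinning: independently for each arc $e$, draw $F\sim\+D_{\boldsymbol{p}}$, and then keep each $e\in F$ independently with probability $r_e\defeq p'_e/p_e\in[\beta,1]$. Call the kept set $F'$. Each arc then lies in $F'$ independently with probability $p_e r_e=p'_e$, so $F'\sim\+D_{\boldsymbol{p}'}$, and $F'\subseteq F$.

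Next, fix a deterministic rule $\gamma$ that assigns to each $F\in\Omega_v$ a simple directed $s$-$v$ path $\gamma(F)\subseteq F$, for example the lexicographically first shortest path. Such a path has at most $n-1$ arcs, since it visits distinct vertices. If every arc of $\gamma(F)$ survives the thinning, then $F'\in\Omega_v$. Conditional on $F$, the thinning decisions are independent, so this event has probability
\[
\prod_{e\in\gamma(F)} r_e\ \ge\ \beta^{|\gamma(F)|}\ \ge\ \beta^{n-1}.
\]
Here we use $\beta\le 1$.

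Finally, average over $F$:
\[
q_v(\boldsymbol{p}')=\Pr[F'\in\Omega_v]\ \ge\ \sum_{F\in\Omega_v}\+D_{\boldsymbol{p}}(F)\,\beta^{n-1}=\beta^{n-1}q_v(\boldsymbol{p}).
\]
The case $v=s$ is trivial, since both sides equal $1$ (the empty path). There is no real obstacle in this argument. The only subtle point is that we must condition only on the certificate path and not on all of $F$, because the naive bound $\beta^{|F|}$ would lose a factor exponential in $m$. Using a path chosen as a function of $F$ and bounding only its survival probability avoids this loss.
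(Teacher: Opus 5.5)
Your proposal is correct and matches the paper's proof essentially step for step. The paper uses the same argument: thin $F\sim\mathcal{D}_{\boldsymbol{p}}$ by retaining each arc with probability $p'_e/p_e$ to get $F'\sim\mathcal{D}_{\boldsymbol{p}'}$, fix a simple $s$--$v$ path in $F$ with at most $n-1$ arcs, and lower-bound its survival probability by $\beta^{n-1}$.
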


\begin{proof}
Sample $F\sim\+D_{\boldsymbol{p}}$. Conditional on $F$, independently
retain every arc $e\in F$ with probability $p'_e/p_e$, and let $F'$ be
the resulting arc set. The indicators of membership in $F'$ are
independent, and each arc $e$ belongs to $F'$ with probability
$p_e(p'_e/p_e)=p'_e$. Hence $F'\sim\+D_{\boldsymbol{p}'}$.

For every $F\in\Omega_v$, fix a simple directed path $P(F,v)$ from $s$
to $v$ contained in $F$. Such a path contains at most $n-1$ arcs. Given
$F\in\Omega_v$, the probability that all arcs of $P(F,v)$ survive the
thinning is at least
\begin{align*}
\prod_{e\in P(F,v)}\frac{p'_e}{p_e}
\geq \beta^{|P(F,v)|}
\geq \beta^{n-1}.
\end{align*}
Whenever this path survives, $F'\in\Omega_v$. Therefore
$
q_v(\boldsymbol{p}')
=\Pr[F'\in\Omega_v]
\geq q_v(\boldsymbol{p})\beta^{n-1}. 
$
\end{proof}

In our algorithm, we set the parameter 
\begin{align*}
\beta\defeq 1-\frac{1}{4n}.
\end{align*}

\begin{definition}\label{def:annealing}
Define a sequence of open probabilities $\boldsymbol{p}^{(0)}, \boldsymbol{p}^{(1)}, \ldots, \boldsymbol{p}^{(L)}$, where each $\boldsymbol{p}^{(i)} = (p_e^{(i)})_{e \in E}$ is a vector of open probabilities for the arcs $e \in E$. The sequence satisfies
\begin{itemize}
    \item $\boldsymbol{p}^{(0)} = \boldsymbol{1}$ is the all-one vector and $\boldsymbol{p}^{(L)} = \boldsymbol{p}$ is the input open probabilities.
    \item for any $0 < i \leq L$ and any $e \in E$, $\beta p_e^{(i-1)}  \leq  p_e^{(i)} \leq p_e^{(i-1)}$.
\end{itemize}
\end{definition}

 The following lemma shows that one can construct a sequence of open probabilities that satisfies the above definition with a length polynomial in the input size. The proof is given in \Cref{sec:missing}.

\begin{lemma}
\label{lem:schedule}
Let $p_{\min} = \min_{e \in E} p_e$. 
One can construct a sequence of open probabilities that satisfies \Cref{def:annealing} with a length $L = O\left(n \log\left(\frac{2}{p_{\min}}\right)\right)$ in time $O(mL)$.
\end{lemma}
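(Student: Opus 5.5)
The schedule will be built coordinatewise in the logarithmic domain: each arc starts at probability $1$ and is repeatedly multiplied by $\beta$ until it reaches its target. Concretely, for every arc $e\in E$ let
\begin{align*}
k_e \defeq \left\lceil \frac{\ln(1/p_e)}{\ln(1/\beta)} \right\rceil,
\end{align*}
which is the smallest integer with $\beta^{k_e}\le p_e$. Set $L\defeq \max_{e\in E} k_e$, and for $0\le i\le L$ define
\begin{align*}
p^{(i)}_e \defeq \max\{\beta^{i},\, p_e\}.
\end{align*}
Thus $p^{(0)}_e=\max\{1,p_e\}=1$, since $p_e\le 1$. Also $p^{(L)}_e=p_e$, because $\beta^{L}\le\beta^{k_e}\le p_e$. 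If $p_{\min}=1$, then $L=0$ and there is nothing more to check.

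\textbf{Monotonicity condition.} We must verify $\beta p^{(i-1)}_e\le p^{(i)}_e\le p^{(i-1)}_e$. The upper bound holds because $\beta^i\le\beta^{i-1}$ and taking a maximum with the fixed number $p_e$ preserves this order. For the lower bound, note that
\begin{align*}
\beta p^{(i-1)}_e=\max\{\beta^{i},\beta p_e\}\le \max\{\beta^{i},p_e\}=p^{(i)}_e,
\end{align*}
where the inequality uses $\beta\le 1$. So \Cref{def:annealing} is satisfied.

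\textbf{Length bound.} Since $\ln(1/\beta)=-\ln(1-\tfrac{1}{4n})\ge \tfrac{1}{4n}$, we get
\begin{align*}
L\le 1+4n\ln(1/p_{\min}) = O\left(n\log\left(\tfrac{2}{p_{\min}}\right)\right).
\end{align*}
The argument $2/p_{\min}$ is what absorbs the additive $1$: it makes the logarithm at least $\log 2$, even when $p_{\min}=1$.

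\textbf{Running time.} Computing $L$ takes $O(m)$ time. Writing out each vector $\boldsymbol{p}^{(i)}$ takes $O(m)$ time given $\beta^i$, and $\beta^i$ can be maintained by one multiplication per step. This gives $O(mL)$ in total.

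The only delicate point is precision. The $k_e$ involve logarithms, but they need not be computed exactly: we can simply iterate $i$ upward until $\beta^i\le p_{\min}$, which determines $L$ within the same time bound. If we want to avoid real arithmetic on $\beta^i$, we can round each $\beta^i$ down to a polynomial number of bits. This weakens the guaranteed per-step ratio only slightly, and the resulting slack can be absorbed by a constant factor in $L$. I expect no real obstacle here.
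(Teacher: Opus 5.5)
Your proposal is correct and is essentially the paper's proof. It uses the same schedule $p_e^{(i)}=\max\{p_e,\beta^i\}$, the same verification $\beta\max\{p_e,\beta^{i-1}\}=\max\{\beta p_e,\beta^i\}\le\max\{p_e,\beta^i\}$, and the same length bound via $\log(1/\beta)\ge 1/(4n)$. The only cosmetic difference is that the paper sets $L=\min\{j\ge 1:\beta^j\le p_{\min}\}$, which forces $L\ge 1$ so that the $O(m(L+1))=O(mL)$ running time holds even when $p_{\min}=1$; in that case your $L$ can be $0$. Your precision remark goes beyond what the paper addresses and is not needed there.
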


Fix the underlying graph $G = (V,E)$, two terminals $s,t \in V$, and the sequence of open probabilities as in \Cref{def:annealing}, and think of the open probabilities as gradually changing from $\boldsymbol{p}^{(0)} = \boldsymbol{1}$ to $\boldsymbol{p}^{(L)} = \boldsymbol{p}$.
At the $i$-th stage, we consider the two-terminal reliability problem with the open probabilities $\boldsymbol{p}^{(i)}$, and abbreviate by $q^{(i)}_v = q_v(\boldsymbol{p}^{(i)})$ the probability that $s$ can reach $v$ at that stage.
The point of \Cref{def:annealing} is that consecutive stages have comparable reachability probabilities, so that the reciprocals of the estimates of $q^{(i)}_v$ obtained at the $i$-th stage are admissible vertex weights $c_v$ for the $(i+1)$-th stage.
Formally, we have the following lemma.

\begin{lemma}
\label{lem:adjacent}
For every vertex $v \in V$ and every $1 \leq i \leq L$,
\begin{align*}
\frac{3}{4} \leq \frac{q_v^{(i)}}{q_v^{(i-1)}} \leq 1.
\end{align*}
\end{lemma}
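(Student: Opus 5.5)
The plan is to prove the two inequalities of \Cref{lem:adjacent} separately: the upper bound by monotonicity of reachability, the lower bound by \Cref{lem:certificate-thinning} together with the choice $\beta = 1-\frac{1}{4n}$.

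For the upper bound, \Cref{def:annealing} gives $p^{(i)}_e \le p^{(i-1)}_e$ for every $e$. I would use the same coupling as in the proof of \Cref{lem:certificate-thinning}. Sample $F\sim \+D_{\boldsymbol{p}^{(i-1)}}$, then keep each arc $e \in F$ independently with probability $p^{(i)}_e/p^{(i-1)}_e$. The result is $F'\sim\+D_{\boldsymbol{p}^{(i)}}$ with $F'\subseteq F$. Since $\Omega_v$ is closed under taking supersets, $F'\in\Omega_v$ implies $F\in\Omega_v$. Hence $q^{(i)}_v\le q^{(i-1)}_v$, so the ratio is at most $1$.

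For the lower bound, \Cref{def:annealing} gives $\beta p^{(i-1)}_e \le p^{(i)}_e \le p^{(i-1)}_e$ for all $e$. Applying \Cref{lem:certificate-thinning} with $\boldsymbol{p}' = \boldsymbol{p}^{(i)}$ and $\boldsymbol{p} = \boldsymbol{p}^{(i-1)}$ yields
\begin{align*}
q^{(i)}_v \ge \beta^{n-1} q^{(i-1)}_v .
\end{align*}
It remains to check that $\beta^{n-1} \ge 3/4$. By Bernoulli's inequality $(1-x)^k\ge 1-kx$ for $x\in[0,1]$ and integer $k\ge 0$,
\begin{align*}
\left(1-\frac{1}{4n}\right)^{n-1} \ge 1-\frac{n-1}{4n} \ge 1-\frac14 = \frac34 .
\end{align*}
Note that $q^{(i-1)}_v>0$: by monotonicity $q^{(i-1)}_v \ge q_v(\boldsymbol{p})$, and $q_v(\boldsymbol{p})>0$ because $s$ reaches every vertex and all $p_e>0$. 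So the ratio in \Cref{lem:adjacent} is well defined, and the lemma follows.

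There is no real obstacle. The only care needed is to apply \Cref{lem:certificate-thinning} in the right direction and to state the elementary bound on $\beta^{n-1}$ correctly.
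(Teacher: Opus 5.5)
Your proposal is correct and follows essentially the same route as the paper: the upper bound from monotonicity (you spell out the thinning coupling that the paper calls "easy to verify"), and the lower bound from \Cref{lem:certificate-thinning} with $\beta=1-\frac{1}{4n}$ and Bernoulli's inequality. Your extra remark that $q_v^{(i-1)}>0$, so the ratio is well defined, is a harmless addition.
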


\begin{proof}
As $p_e^{(i)} \leq p_e^{(i-1)}$ for every $e \in E$, it is easy to verify $q_v^{(i)} \leq q_v^{(i-1)}$.
For the reverse inequality, the definition of the schedule and
\Cref{lem:certificate-thinning} give
\begin{align*}
q_v^{(i)}
\geq \beta^{n-1}q_v^{(i-1)}
\geq \left(1-\frac{n-1}{4n}\right)q_v^{(i-1)}
>\frac{3}{4}q_v^{(i-1)},
\end{align*}
where the second inequality is Bernoulli's inequality. This completes the
proof.
\end{proof}

\paragraph{Proof of the main result}
Now we are ready to describe the algorithm of \Cref{thm:main}.
It computes estimates $\hat{q}^{(i)}_v$ of $q^{(i)}_v$ for every $v \in V$, one stage at a time for $i = 0,1,\ldots,L$, and outputs $\hat{q}^{(L)}_t$.
At the $0$-th stage, $p_e^{(0)} = 1$ for every $e \in E$, so $q_v^{(0)} = 1$ for every $v \in V$ and the algorithm takes $\hat{q}^{(0)}_v = 1$.
Consider the $i$-th stage for $1 \leq i \leq L$, and assume inductively that the estimates of the previous stage satisfy
\begin{align}\label{eq:induction}
\frac{9}{10}q_v^{(i-1)} \leq \hat{q}_v^{(i-1)} \leq \frac{11}{10}q_v^{(i-1)} \qquad (v \in V),
\end{align}
which holds trivially when $i = 1$.
Set the vertex weights
\begin{align*}
c_v^{(i)} \defeq {1}/{\hat{q}_v^{(i-1)}} \qquad (v \in V),
\end{align*}
and call the algorithm in Item \eqref{item:estimator} of \Cref{lem:sampling} on the open probabilities $\boldsymbol{p}^{(i)}$ with the weights $\boldsymbol{c}^{(i)}$, using fresh randomness.
These weights are admissible: combining~\eqref{eq:induction} with \Cref{lem:adjacent},
\begin{align*}
c_v^{(i)} q_v^{(i)} = \frac{q_v^{(i)}}{\hat{q}_v^{(i-1)}} \geq \frac{10}{11} \cdot \frac{q_v^{(i)}}{q_v^{(i-1)}} \geq \frac{10}{11} \cdot \frac{3}{4} > \frac{1}{4}
\qquad\text{and}\qquad
c_v^{(i)} q_v^{(i)} \leq \frac{10}{9} \cdot \frac{q_v^{(i)}}{q_v^{(i-1)}} \leq \frac{10}{9} < 4,
\end{align*}
so that $\frac{1}{4 q_v^{(i)}} \leq  c_v^{(i)} \leq \frac{4}{q_v^{(i)}}$ for every $v \in V$, as \Cref{lem:sampling} requires.
For $i < L$, the algorithm is called with $\varepsilon = \frac{1}{10}$ and $\delta = \frac{1}{10L}$, and its output satisfies~\eqref{eq:induction} at the next stage with probability at least $1 - \frac{1}{10L}$.
For $i = L$, it is called with the target relative error $\varepsilon$ and with $\delta = \frac{1}{10}$. Since $\boldsymbol{p}^{(L)} = \boldsymbol{p}$, the returned value $\hat{q}^{(L)}_t$ estimates $q^{(L)}_t = \textsf{Rel}(s,t)$.

Let $A_i$ be the event that the $i$-th stage meets its requirement, that is, \eqref{eq:induction} holds with $i$ in place of $i-1$ if $i < L$, and $(1-\varepsilon)q^{(L)}_t \leq \hat{q}^{(L)}_t \leq (1+\varepsilon)q^{(L)}_t$ if $i = L$.
As every stage uses fresh randomness, the two bounds above apply conditionally on the previous stages, and
\begin{align*}
\Pr\left[\bigcap_{i=1}^{L} A_i\right] = \prod_{i=1}^{L} \Pr\left[A_i \;\middle|\; \bigcap_{j<i} A_j\right] \geq \left(1 - \frac{1}{10L}\right)^{L-1}\left(1 - \frac{1}{10}\right) \geq \frac{3}{4}.
\end{align*}
The output is therefore within a factor $1 \pm \varepsilon$ of $\textsf{Rel}(s,t)$ with probability at least $\frac{3}{4}$.

It remains to bound the running time. The bound of \Cref{lem:sampling} applies uniformly at every stage. Each of the first $L-1$ stages costs
\begin{align*}
O\left(
  n^3m^2
  \log^2(nL)
\right)
= \widetilde{O}\left(n^3m^2\right),
\end{align*}
and the last stage costs $\widetilde{O}\left(\frac{n^3m^2}{\varepsilon^2}\right)$. Since $L = O\left(n \log\left(\frac{2}{p_{\min}}\right)\right)$ by \Cref{lem:schedule}, the total running time is
\begin{align*}
\widetilde{O}\left(n^4m^2 \log\left(\frac{2}{p_{\min}}\right) + \frac{n^3m^2}{\varepsilon^2}\right),
\end{align*}
which is at most the bound stated in \Cref{thm:main}.

\subsection{The Markov chain}\label{sec:Markov}

We now describe the Markov chain used in \Cref{lem:sampling}. To simplify the notation, we fix open probabilities $\boldsymbol{p} = (p_e)_{e \in E}$ and vertex weights $\boldsymbol{c} = (c_v)_{v \in V}$ such that $\frac{1}{4q_v} \leq c_v \leq \frac{4}{q_v}$ for all $v \in V$, where $q_v = q_v(\boldsymbol{p})$. We use $\pi$ to denote the distribution $\pi_{\boldsymbol{c},\boldsymbol{p}}$ over $\Omega$. From now on we also restrict $\Omega$ to the support of $\pi$, discarding every pair $(v,F)$ with $\+D_{\boldsymbol{p}}(F) = 0$, so that $\pi(X) > 0$ for every $X \in \Omega$. This restriction is vacuous unless $p_e = 1$ for some arc $e$, and it is harmless: the discarded states carry no probability mass. Moreover, if $F_0\sim\+D_{\boldsymbol{p}}$, then $\+D_{\boldsymbol{p}}(F_0)>0$ almost surely, so the random state $(s,F_0)$ used below belongs to the restricted state space almost surely.

The Markov chain for sampling from $\pi_{\boldsymbol{c},\boldsymbol{p}}$ is described as follows. It starts from a random initial state $X_0$ with some distribution $\mu$ on $\Omega$. For the sampling algorithm, we independently draw $F_0\sim\+D_{\boldsymbol{p}}$ and take $X_0=(s,F_0)$.
The chain uses two transition kernels $K_1$ and $K_2$.
At the $k$-th step, with half probability, stay at the current state $X_{k+1} = X_k$; with probability $\frac{1}{4}$, it uses $K_1$ to update $X_k$ to $X_{k+1}$, and with probability $\frac{1}{4}$, it uses $K_2$.
In other words, the overall transition kernel $K$ is given by
\begin{align}\label{eqn:K}
K = \frac{1}{2}I + \frac{1}{4}K_1 + \frac{1}{4}K_2,
\end{align}
where $I$ is the identity.
Let $X_k = (v,F)$ be the current state.
The kernel $K_1$ updates the marked vertex $v$, but it may also change the arc set $F$.
The other kernel $K_2$ updates the arc set $F$.
The detailed definitions are as follows.   

\paragraph{Metropolis update $K_1$:}
pick an arc $e = (a,b) \in E$ uniformly at random, and propose a candidate state $X_{k+1}'$ as follows.
    \begin{itemize}
        \item If $v = a$, then with half probability propose $X_{k+1}' = (b, F \cup \{e\})$ and with half probability propose $X_{k+1}' = X_k$.
        \item If $v = b$ and $e \in F$, then with half probability propose $X_{k+1}' = (a, F \cup \{e\})$ and with half probability propose $X_{k+1}' = (a, F \setminus \{e\})$.
        \item In all other cases ($v \notin \{a,b\}$ or $(v = b \text{ and } e \notin F)$), propose $X_{k+1}' = X_k$.
    \end{itemize}
With probability $\min \left\{1, \frac{\pi(X_{k+1}')}{\pi(X_k)}\right\}$, accept the proposal and set $X_{k+1} = X_{k+1}'$; otherwise, reject the proposal and set $X_{k+1} = X_k$. We remark that the proposal $X_{k+1}'$ can be infeasible ($X_{k+1}' \notin \Omega$), in which case $\pi(X_{k+1}') = 0$ and the proposal is always rejected.

\paragraph{Heat-bath update $K_2$:}
pick an arc $e \in E$ uniformly at random, and let $F_1 = F \cup \{e\}$ and $F_0 = F \setminus \{e\}$. If $s$ can reach $v$ via arcs in $F_0$, then let $X_{k+1} = (v,F_1)$ with probability $p_e$ and $X_{k+1} = (v,F_0)$ with probability $1 - p_e$. Otherwise, let $X_{k+1} = (v,F_1)$.

\paragraph{}

For any Markov chain $P$ over some finite state space $\Omega$, 
it is called \emph{irreducible} if for any two states $X,Y \in \Omega$, there exists a positive integer $t$ such that $K^t(X,Y) > 0$. 
Moreover, it is \emph{aperiodic} if for any state $X \in \Omega$, the greatest common divisor of the set of all positive integers $t$ such that $K^t(X,X) > 0$ is 1,
and it is \emph{reversible} with respect to $\pi$ if for any two states $X,Y \in \Omega$, the detailed balance condition $\pi(X) K(X,Y) = \pi(Y) K(Y,X)$ holds.
If a chain is irreducible, aperiodic, and reversible, then it converges to a unique stationary distribution. See~\cite{LevinPeresWilmer2017}.

\begin{proposition}\label{prop:markov}
The Markov chain $K$ defined in \eqref{eqn:K} is irreducible, aperiodic, and reversible with respect to $\pi$.
\end{proposition}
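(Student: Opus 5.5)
Aperiodicity is immediate from the lazy component: $K(X,X)\ge \frac12$ for every $X\in\Omega$, so $1$ belongs to the set of return times and the gcd is $1$. The rest of the proof treats reversibility and irreducibility separately.

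\textbf{Reversibility.} Since $I$ is trivially reversible and detailed balance is preserved under convex combinations, it suffices to check $K_1$ and $K_2$ separately with respect to $\pi$.

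For $K_2$, fix $X=(v,F)$ and an arc $e$, and write $F_0=F\setminus\{e\}$ and $F_1=F\cup\{e\}$. The update only moves between $(v,F_0)$ and $(v,F_1)$, and from either endpoint the same arc $e$ is chosen with probability $1/m$.
\begin{itemize}
\item If $F_0\in\Omega_v$, the heat-bath rule resamples according to the conditional law. Since $\pi(v,F_1)/\pi(v,F_0)=p_e/(1-p_e)$, detailed balance holds.
\item If $F_0\notin\Omega_v$, only $(v,F_1)$ lies in $\Omega$ and the move is deterministic, so there is nothing to check.
\end{itemize}
Restricting to the support, where $p_e=1$ forces $F_1$, is harmless.

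For $K_1$, it is a Metropolis chain, so it suffices to show the proposal kernel $Q$ is symmetric on $\Omega$; acceptance with $\min\{1,\pi(Y)/\pi(X)\}$ then gives detailed balance. I would pair the moves as follows.
\begin{itemize}
\item A forward move from $(a,F)$ via $e=(a,b)$ to $(b,F\cup\{e\})$ has proposal probability $\frac{1}{2m}$. Its reverse from $(b,F\cup\{e\})$, with $e\in F\cup\{e\}$, proposes $(a,F\cup\{e\})$ with probability $\frac{1}{2m}$. If $e\in F$ these are exact reverses. If $e\notin F$, the reverse of the forward move is instead the second backward option, $(a,(F\cup\{e\})\setminus\{e\})=(a,F)$, also with probability $\frac{1}{2m}$.
\item Conversely, a backward move from $(b,F')$ with $e\in F'$ to $(a,F')$ reverses to the forward move $(a,F')\to(b,F')$, since $F'\cup\{e\}=F'$. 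A backward move to $(a,F'\setminus\{e\})$ reverses to $(a,F'\setminus\{e\})\to(b,F')$. Each has probability $\frac{1}{2m}$.
\end{itemize}
I must also check that distinct moves do not coincide: two distinct states differ in either the vertex or the arc set, and the arc $e$ is determined by the vertex pair, so no double counting occurs. I expect this careful case pairing for $K_1$ to be the only fiddly point.

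\textbf{Irreducibility.} I would show every state $(v,F)\in\Omega$ communicates with $(s,E')$, where $E'$ is the set of arcs with $p_e=1$, and use reversibility: $\pi>0$ on $\Omega$ makes positive transitions symmetric.
\begin{itemize}
\item From $(v,F)$, first use $K_2$ to add every arc of $E$ not in $F$; adding is always allowed with positive probability, since either the arc is added with probability $p_e>0$ or $F_0\notin\Omega_v$ forces $F_1$.
\item Next walk the marked vertex back to $s$ along a reverse path in $E$, which exists because $F\in\Omega_v$. Each step uses the $K_1$ backward option keeping the arc, with all arcs present. The Metropolis acceptance probability is positive since the target lies in $\Omega$ and $\pi>0$.
\item Finally, at marked vertex $s$ every arc set belongs to $\Omega_s$, so $K_2$ can remove each arc with $p_e<1$ with probability $1-p_e>0$, reaching $(s,E')$.
\end{itemize}
Hence all states communicate.
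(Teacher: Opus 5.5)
Your proof is correct and follows essentially the same route as the paper. Aperiodicity comes from laziness, and reversibility of $K_2$ from the ratio $p_e/(1-p_e)$. For $K_1$ you show the Metropolis proposal kernel is symmetric; your arc-by-arc pairing of forward and backward moves is exactly the paper's observation that an arc satisfies case (a) for $(X,Y)$ iff it satisfies case (b) for $(Y,X)$. Irreducibility goes through a hub state with marked vertex $s$, using reversal of positive transitions.

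One remark should be corrected: the arc is \emph{not} determined by the vertex pair. A simple digraph may contain both $(v,u)$ and $(u,v)$, and when both lie in $F$, the transition $(v,F)\to(u,F)$ is proposed by two distinct moves: forward via $(v,u)$ and backward-keep via $(u,v)$. Its proposal probability is therefore $1/m$, so ``no double counting'' is false. This is harmless, because your pairing preserves the arc and is thus a bijection between moves $X\to Y$ and moves $Y\to X$; that bijection, not the absence of double counting, is what gives symmetry.
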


\Cref{prop:markov} is verified in \Cref{sec:prop-markov}.
Thus the chain $K$ has $\pi$ as its unique stationary distribution, and the distribution of $X_k$ converges to $\pi$ as $k \to \infty$. Given a random initial state $X_0$ with distribution $\mu$ on $\Omega$ and an accuracy $\eta \in (0,1)$, define the mixing time of the Markov chain $K$ as the smallest number of steps after which the chain is within total variation distance $\eta$ of $\pi$, that is,
\begin{align*}
    T_{\text{mix}}(\mu, \eta) = \min \left\{ k \geq 0 : d_{\TV}(\mu K^k, \pi) \leq \eta \right\},
\end{align*}
where $\mu K^k$ is the distribution of $X_k$. We call $\mu$ an \emph{$M$-warm start} with respect to $\pi$ if 
\begin{align*}
  \forall Y \in \Omega, \quad \mu(Y) \leq M\pi(Y).
\end{align*}
The analysis in \Cref{sec:analysis} shows the following bound on the mixing time. 

\begin{theorem}\label{thm:mixing}
  Let $G = (V,E)$ be a directed graph with $n = |V|$ vertices and $m = |E|$ arcs such that the source $s$ can reach every vertex $v \in V$. Let $\boldsymbol{p} = (p_e)_{e \in E}$ be open probabilities with $p_e \in (0,1]$ for all $e \in E$, and let $\boldsymbol{c} = (c_v)_{v \in V}$ be vertex weights such that $\frac{1}{4q_v} \leq c_v \leq \frac{4}{q_v}$ for all $v \in V$. For every initial distribution $\mu$ that is $M$-warm with respect to $\pi$ and every $\eta \in (0,1)$, the mixing time of the Markov chain $K$ satisfies
\begin{align*}
T_{\text{mix}}\left(\mu, \eta\right) = O\left(nm^2\left(\log M + \log\left(\frac{1}{\eta}\right)\right)\right).
\end{align*}
\end{theorem}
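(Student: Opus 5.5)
We bound the spectral gap of $K$ and then use the standard warm-start mixing bound. By \Cref{prop:markov}, $K$ is reversible with respect to $\pi$. It is also lazy, since it holds with probability $\frac12$, so all eigenvalues are nonnegative. For an $M$-warm start we have $\|\mu/\pi\|_{2,\pi}^2\le M$, which gives $d_{\TV}(\mu K^k,\pi)\le \frac12\sqrt{M}(1-\gap)^k$. Hence it suffices to show $\gap(K)^{-1}=O(nm^2)$; the claimed $T_{\text{mix}}$ then follows with the $\log M+\log(1/\eta)$ dependence. To bound the gap we use Sinclair's multicommodity flow method~\cite{Sinclair1992}. For each pair of states $X,Y\in\Omega$ we route $\pi(X)\pi(Y)$ units of flow along paths in the transition graph of $K$. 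We then show that the congestion
\[
\rho=\max_{(Z,W)}\frac{1}{\pi(Z)K(Z,W)}\sum_{\gamma\ni(Z,W)} f(\gamma)|\gamma|
\]
is $O(nm^2)$, which yields $\gap\ge 1/\rho$.

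\textbf{Flow construction.} The plan follows the overview: every flow from $X=(u,F)$ to $Y=(u',F')$ passes through a random hub state $(s,U)$ with $U\sim\+D_{\boldsymbol p}$. We split $f_{XY}=\sum_U \+D_{\boldsymbol p}(U)\,[\text{path }X\to(s,U)\to Y]$, and by time reversal it is enough to construct the half-path $X\to(s,U)$. For the half-path, fix a canonical simple path $P$ from $s$ to $u$ inside $F$, for instance the BFS-lexicographically first one. The marked vertex walks backwards along $P$ using $K_1$ moves. On the step that leaves $b$ across the arc $(a,b)$ of $P$, the arc is either kept or deleted, according to whether it lies in $U$. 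Interleaved with this, $K_2$ moves switch every arc off $P$ from its $F$-status to its $U$-status. The ordering must keep each intermediate state feasible. We do this by changing an arc only after the marked vertex no longer depends on it: arcs of $F$ that can be deleted while $s$ still reaches the current marked vertex are changed first, and the others are handled after the marker has passed them. Where several orderings are possible, we spread the flow in the spirit of Jerrum--Sinclair~\cite{JerrumSinclair1989}. Each half-path has length $O(m+n)=O(m)$.

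\textbf{Congestion via an injective encoding.} Fix a transition $(Z,W)$ with $Z=(w,H)$. Given $Z$, we encode each triple $(X,U,\text{stage})$ whose path uses this transition by an auxiliary arc set $\sigma=F\oplus U\oplus H$, completed in the usual way, together with $u$ and $O(1)$ extra bits. We then check that the weights multiply correctly:
\[
\pi(X)\,\+D_{\boldsymbol p}(U)\lesssim \pi(Z)\cdot \frac{c_u}{c_w}\cdot \+D_{\boldsymbol p}(\sigma).
\]
This works because on each arc $e$ the pair $\{F,U\}$ and the pair $\{H,\sigma\}$ have the same multiset of statuses. Summing over the encodings uses three facts. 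First, $\sum_\sigma \+D_{\boldsymbol p}(\sigma)\le 1$. Second, the pairing with $Y$ contributes $\sum_Y\pi(Y)=1$. Third, the vertex weights are admissible, so $c_u/c_w\le 16\,q_w/q_u$. The factor $q_w/q_u$ must be absorbed by the constraint that $\sigma$ lies in $\Omega_u$-type sets, that is, $\sum_{\sigma:\,s\to u}\+D(\sigma)=q_u$. This is exactly where the flattening $c_v\approx 1/q_v$ is used. Summing over at most $n$ choices of $u$ gives a factor $n$. The transition probability satisfies $K(Z,W)\ge \Omega(1/m)\cdot\min\{p_e,1-p_e,\text{Metropolis ratio}\}$, and the $p_e$-type factors cancel against $\+D$. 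Combined with the path length $O(m)$, this gives $\rho=O(nm^2)$.

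\textbf{Main obstacle.} The hard step is the ordering of arc flips. It must maintain feasibility, meaning $s$ reaches the marker in every intermediate state, while still admitting an encoding that preserves weights. The difficulty comes from deleting an arc of $F$ that the current marker's reachability depends on: this forces the marker to retreat first. We would attempt the following. Process the canonical path $P$ from its end, and before the marker retreats across a path arc, flip every non-path arc whose canonical "turn" has come. To ensure the remaining unflipped arcs never break reachability, we keep $P$'s prefix intact as a certificate. If the encoding turns out not to be injective, the fallback is to add $O(\log)$ bits recording the marker's position along $P$. This would cost only a polynomial factor in the count, and we would try to keep it within $n$.
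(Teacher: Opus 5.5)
Your overall architecture matches the paper's: a spectral gap via Sinclair's flows, every demand routed through a hub $(s,U)$ with $U\sim\mathcal{D}_{\boldsymbol p}$, the marker retreating along a BFS path of $F$, and the complementary encoding $\sigma$ with $\mathcal{D}_{\boldsymbol p}(F)\mathcal{D}_{\boldsymbol p}(U)=\mathcal{D}_{\boldsymbol p}(H)\mathcal{D}_{\boldsymbol p}(\sigma)$. However, the two steps that actually carry the congestion bound are missing or wrong.

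\textbf{The flip order.} You leave this open, and the obstacle is not feasibility: keeping the prefix of $P$ from $s$ to the marker intact already guarantees that. The obstacle is decodability. The set of already-flipped arcs depends on $F$ through $P$. The decoder sees $F$'s status on unflipped arcs only through $H$ and on flipped arcs only through $\sigma$, without knowing which arcs are which. Recording the marker's position does not help, since it is already part of $Z$. A rule based on which deletions keep the marker reachable depends on the current hybrid, hence on $U$, and is not obviously invertible.

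The paper's key idea fixes the order as follows. Run a BFS from $s$ in $F$ that inspects \emph{every} outgoing arc of each dequeued vertex, including arcs not in $F$, until $u$ is discovered, and append the uninspected arcs in a fixed order. Scan this permutation in reverse, moving the marker across the discovery arc $(v_{i-1},v_i)$ exactly when that arc is scanned. Each next inspected arc is a function of the outcomes of earlier inspections, so the decoder can replay the BFS. It reads membership in $F$ from $H$ up to and including the current arc $a^\star$, and from $\sigma$ afterwards. This recovers the permutation, the position of $a^\star$, and hence both $F$ and $U$.

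\textbf{Absorbing $q_w/q_u$.} Your argument here fails at intermediate stages. The prefix of $P$ from $s$ to the current marker $w$ is still unflipped, so in $\sigma$ it carries $U$'s status, and $\sigma$ need not contain any $s\to u$ path. The bound $\sum_\sigma\mathcal{D}_{\boldsymbol p}(\sigma)\le q_u$ is in fact false. On a directed path $s=v_0\to\cdots\to v_r=u$ with $p_e=1/2$, the encodings at marker $v_i$ have total mass $2^{-(r-i)}$, far above $q_u=2^{-r}$.

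What $\sigma$ does contain is the already-flipped suffix of $P$, a $w\to u$ path avoiding $a^\star$. So the encodings, with $a^\star$ removed, have total mass at most $\Pr[w\to u]$. Turning this into the needed $q_u/q_w$ requires Harris's inequality for the increasing events $\{s\to w\}$ and $\{w\to u\}$, namely $q_u\ge q_w\Pr[w\to u]$. This positive-correlation step is absent from your plan, and it is exactly where the flattening $c_v\approx 1/q_v$ is paid for.

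A smaller omission: for $K_1$ transitions the Metropolis ratio contains $q_w/q_{w'}$, not only $p_e$-type factors. Bounding the capacity therefore also needs $q_w\ge p_{a^\star}q_{w'}$ for the discovery arc $a^\star=(w',w)$.
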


\Cref{thm:mixing} is proved in \Cref{sec:analysis} by analyzing the spectral gap of the transition matrix.

Now, assume the mixing time is known. We are ready to prove the first item in \Cref{lem:sampling}.

\begin{proof}[Proof of Item \eqref{item:MC} of \Cref{lem:sampling}]
  The Markov chain is $K$ as defined in \eqref{eqn:K}.
  Independently draw $F_0\sim\+D_{\boldsymbol{p}}$ and take $X_0=(s,F_0)$. This is a legal state almost surely because $s$ can always reach itself and $\+D_{\boldsymbol{p}}(F_0)>0$ almost surely. Let $\mu$ denote the distribution of $X_0$, and write $Z=\sum_{u\in V}c_uq_u$.
For every $(v,F)\in\Omega$, we have
\begin{align*}
\mu(v,F)=
\begin{cases}
\+D_{\boldsymbol{p}}(F), & v=s,\\
0, & v\neq s.
\end{cases}
\end{align*}
Recall that $q_s=1$, so $c_s\geq\frac{1}{4}$, while $c_uq_u\leq4$ for every $u\in V$ gives $Z\leq4n$. Consequently, 
\begin{align*}
\forall (s,F)\in\Omega, \quad 
\frac{\mu(s,F)}{\pi(s,F)}=\frac{Z}{c_s}\leq16n,
\end{align*}
and the same pointwise domination is trivial for states whose marked vertex is not $s$. Thus $\mu$ is a $16n$-warm start with respect to $\pi$. By \Cref{thm:mixing}, running the chain for
\begin{align*}
T_{\text{mix}}(\mu,\eta)
=O\left(nm^2\left(\log(16n)+\log\left(\frac{1}{\eta}\right)\right)\right)
=O\left(nm^2\log\left(\frac{n}{\eta}\right)\right)
\end{align*}
steps outputs a random $X\in\Omega$ with $d_{\TV}(X,\pi)\leq\eta$.

\paragraph{Implementation and running time.}
A trivial implementation of each transition takes $O(m)$ time: as every vertex is reachable from $s$, we have $m\geq n-1$, and one breadth-first search tests the required reachability condition in $O(m)$ time. We next give an implementation with a sharper bound for a sequence of transitions.

Store the static outgoing adjacency lists of $G$.
Let $(v,F)$ denote the current state.
Represent $F$ by a bit vector in $\{0,1\}^E$, and maintain a simple directed path $P$ from $s$ to the current marked vertex $v$ using arcs of $F$, together with arrays recording the vertices and arcs on $P$.
Initially, when $X_0=(s,F_0)$, take $P$ to be the length-zero path at $s$. Consider first a $K_2$ update of an arc $e$. If $e\notin P$, then $P$ remains a witness that $s$ reaches $v$ in $F\setminus\{e\}$, so no search is needed. If $e\in P$, run breadth-first search in $F\setminus\{e\}$. If it finds another path from $s$ to $v$, use that path as the new witness; otherwise $K_2$ must retain $e$, and the old path remains valid.

For a $K_1$ update, suppose first that the selected arc is $e=(v,w)$. The candidate state with marked vertex $w$ is automatically feasible: if the proposal is accepted, append $e$ to $P$ when $w$ is not on $P$, and use the prefix ending at $w$ when $w$ is on $P$. Now suppose that the selected arc is $e=(u,v)\in F$. Run a breadth-first search in the arc set of the proposed state to test whether $s$ can reach $u$. If the proposal is feasible and accepted, use the path returned by the search as the new witness; otherwise, retain $P$. Thus, a $K_1$ update only requires a search in this last case.

Apart from a breadth-first search, the path can be updated in $O(n)$ time. Since $F$ and $F'$ differ on at most one arc, the Metropolis acceptance probability can be computed in $O(1)$ time.

In the above implementation, only some transitions require a breadth-first search.
Recall that $T$ is the total number of transitions we simulate.
For $k\in[T]$, let $I_k \in \{0,1\}$ indicate that the $k$-th transition performs a breadth-first search, and let $\+H_{k-1}$ contain the full history before that transition. Conditional on this history, the current witness path has at most $n-1$ arcs. A $K_2$ step performs a search only if its uniformly selected arc lies on this path. A $K_1$ step performs a search only if its selected arc is an incoming arc of the current marked vertex. Since $G$ is simple, the latter also has at most $n-1$ possibilities. Recalling the coefficients of $K_1$ and $K_2$ in~\eqref{eqn:K}, we obtain
\begin{align}\label{eq:bfs-probability}
\E[I_k\mid\+H_{k-1}]
\leq \frac{n-1}{4m}+\frac{n-1}{4m}
\leq \frac{n-1}{2m}.
\end{align}
The random variables $I_1,\ldots,I_T$ need not be independent, but~\eqref{eq:bfs-probability} gives
\begin{align*}
\E\left[\exp(I_k)\mid\+H_{k-1}\right]
&\leq 1+(\mathrm e-1)\frac{n-1}{2m}
\leq \exp\left((\mathrm e-1)\frac{n-1}{2m}\right).
\end{align*}
Iterating this conditional bound yields
\begin{align*}
\E\left[\exp\left(\sum_{k=1}^T I_k\right)\right]
\leq \exp\left((\mathrm e-1)\frac{(n-1)T}{2m}\right).
\end{align*}
Consequently, Markov's inequality implies that, for every $\zeta\in(0,1)$,
\begin{align}\label{eq:bfs-tail}
\Pr\left[
\sum_{k=1}^T I_k>\frac{\mathrm e(n-1)T}{2m}+\log\left(\frac{1}{\zeta}\right)
\right]
\leq \zeta.
\end{align}
Let $\mathsf{Time}(T)$ denote the total running time of one simulation consisting of initialisation followed by $T$ transitions. Drawing $F_0$ and initialising the data structure take $O(m)$ time. Since a transition costs $O(n)$ time apart from a possible $O(m)$ breadth-first search,~\eqref{eq:bfs-tail} shows that, for a sufficiently large universal constant $C$,
\begin{align}\label{eq:simulation-tail}
\Pr\left[
\mathsf{Time}(T)>C\left(m+nT+m\log\left(\frac{1}{\zeta}\right)\right)
\right]
\leq\zeta.
\end{align}
Finally, substituting $T=O\left(nm^2\log\left(n/\eta\right)\right)$ into~\eqref{eq:simulation-tail} proves the running-time bound in Item~\eqref{item:MC}.
\end{proof}

\section{Analysis of the Markov chain}\label{sec:analysis}

In this section, we prove \Cref{prop:markov} and \Cref{thm:mixing}. 
This section is organized as follows.
In \Cref{sec:preliminaries}, we review some preliminaries of the Markov chain analysis. 
In \Cref{sec:prop-markov}, we verify \Cref{prop:markov}, namely the irreducibility, aperiodicity, and reversibility of the chain $K$.
In \Cref{sec:gap}, we prove a lower bound on the spectral gap of $K$, which implies the mixing time bound in \Cref{thm:mixing}.

\paragraph{Intuition of rapid mixing}
Before turning to the analysis, we give some intuition why $K$ mixes rapidly.
Consider the most natural chain for the problem.
The quantity to be estimated is $q_t = \+D_{\boldsymbol{p}}(\Omega_t)$, and the direct approach is to sample an arc set $F$ from $\+D_{\boldsymbol{p}}$ conditioned on the event $F \in \Omega_t$, namely conditioned on $s$ reaching $t$ via the open arcs.
The standard chain is the \emph{heat-bath Glauber dynamics} on the state space $\Omega_t$: from a current arc set $F \in \Omega_t$, pick an arc $e \in E$ uniformly at random and resample its membership according to the conditional marginal given the other arcs, keeping $F \setminus \{e\}$ fixed.
%Explicitly, $e$ is included with probability $p_e$ and excluded with probability $1-p_e$, unless removing $e$ would destroy the reachability of $t$, in which case $e$ has to be kept.
This is precisely the kernel $K_2$ with the marked vertex fixed at $t$.

This natural chain can mix exponentially slowly.
Suppose that $G$ consists of two vertex-disjoint $s$--$t$ paths of length $\ell$, and that every arc is open with probability $\frac{1}{2}$.
Reaching $t$ requires at least one of the two paths to be entirely open, so $\Omega_t = A \cup C$, where $A$ and $C$ are the sets of arc sets in which the upper, respectively the lower, path is entirely open; see \Cref{fig:glauber-bottleneck}.
As the dynamics changes one arc at a time, it cannot leave $A$ before the lower path has become entirely open, so every trajectory from $A$ to $C$ has to pass through $B = A \cap C$.
Such a bottleneck forces the mixing time to be $2^{\Omega(\ell)}$, exponential in the size of the graph.

\begin{figure}[htbp]
\centering
\begin{tikzpicture}[
    >=stealth,
    font=\small,
    vertex/.style={circle,draw=black!65,fill=white,minimum size=5.5mm,inner sep=0pt},
    terminal/.style={vertex,draw=black,very thick,minimum size=6.5mm},
    openarc/.style={->,very thick,draw=blue!70!black},
    freearc/.style={->,thick,densely dotted,draw=violet!55!black},
    pics/twopath/.style args={#1/#2}{
      code={
        \node[terminal] (-s) at (-2.2,0) {$s$};
        \node[vertex] (-a1) at (-1.1,0.55) {};
        \node[vertex] (-a2) at (0,0.55) {};
        \node[vertex] (-a3) at (1.1,0.55) {};
        \node[terminal] (-t) at (2.2,0) {$t$};
        \node[vertex] (-b1) at (-1.1,-0.55) {};
        \node[vertex] (-b2) at (0,-0.55) {};
        \node[vertex] (-b3) at (1.1,-0.55) {};
        \draw[#1] (-s) -- (-a1);
        \draw[#1] (-a1) -- (-a2);
        \draw[#1] (-a2) -- (-a3);
        \draw[#1] (-a3) -- (-t);
        \draw[#2] (-s) -- (-b1);
        \draw[#2] (-b1) -- (-b2);
        \draw[#2] (-b2) -- (-b3);
        \draw[#2] (-b3) -- (-t);
      }
    }
]

\pic (left)  at (0,0)    {twopath={openarc/freearc}};
\pic (mid)   at (5.5,0)  {twopath={openarc/openarc}};
\pic (right) at (11.0,0) {twopath={freearc/openarc}};

\node[font=\footnotesize] at (0,-1.25)    {(a) $F \in A$: upper path open};
\node[font=\footnotesize] at (5.5,-1.25)  {(b) $F \in B = A \cap C$: both paths open};
\node[font=\footnotesize] at (11.0,-1.25) {(c) $F \in C$: lower path open};

\end{tikzpicture}
\caption{The Glauber dynamics can pass from $A$ to $C$ only through the rare set $B$. Solid arcs are open, and dotted arcs are subject to no constraint.}
\label{fig:glauber-bottleneck}
\end{figure}

Our chain circumvents this issue by letting the marked vertex move.
Recall that a state is a pair $(v,F)$ in which the marked vertex $v$ is reachable from $s$ via the arc set $F$, and that the kernel $K_1$ updates the marked vertex.
Consider again the state $(t,F)$ in which the upper path is open, and let $w$ be the predecessor of $t$ on that path.
Then $K_1$ can move the marked vertex from $t$ back to $w$, updating the arc $e$ between them along the way, as illustrated in \Cref{fig:k1-step}.
The reachability constraint travels with the marked vertex: in the new state only $w$ has to be reachable from $s$.

\begin{figure}[htbp]
\centering
\begin{tikzpicture}[
    >=stealth,
    font=\small,
    vertex/.style={circle,draw=black!65,fill=white,minimum size=6mm,inner sep=0pt},
    terminal/.style={vertex,draw=black,very thick,minimum size=7mm},
    current/.style={vertex,draw=orange!70!black,fill=yellow!25,double,minimum size=7mm},
    openarc/.style={->,very thick,draw=blue!70!black},
    closedarc/.style={->,thick,dashed,draw=black!45},
    freearc/.style={->,thick,densely dotted,draw=violet!55!black},
    process/.style={->,very thick,draw=black!65},
    pics/twopath/.style args={#1/#2/#3}{
      code={
        \node[terminal] (-s) at (-2.2,0) {$s$};
        \node[vertex] (-a1) at (-1.1,0.55) {};
        \node[vertex] (-a2) at (0,0.55) {};
        \node[vertex] (-w) at (1.1,0.55) {$w$};
        \node[terminal] (-t) at (2.2,0) {$t$};
        \node[vertex] (-b1) at (-1.1,-0.55) {};
        \node[vertex] (-b2) at (0,-0.55) {};
        \node[vertex] (-b3) at (1.1,-0.55) {};
        \draw[#1] (-s) -- (-a1);
        \draw[#1] (-a1) -- (-a2);
        \draw[#1] (-a2) -- (-w);
        \draw[#2] (-w) -- node[midway,anchor=north east,inner sep=2pt,font=\scriptsize] {$e$} (-t);
        \draw[#3] (-s) -- (-b1);
        \draw[#3] (-b1) -- (-b2);
        \draw[#3] (-b2) -- (-b3);
        \draw[#3] (-b3) -- (-t);
      }
    }
]

\pic (left)  at (0,0)   {twopath={openarc/openarc/freearc}};
\pic (right) at (7.2,0) {twopath={openarc/closedarc/freearc}};

\node[current] at (left-t) {$t$};
\node[current] at (right-w) {$w$};

\draw[process] (2.95,0) -- node[above,font=\footnotesize] {$K_1$} (4.25,0);

\node[font=\footnotesize] at (0,-1.25) {$(t,F)$};
\node[font=\footnotesize] at (7.2,-1.25) {$(w,F \setminus \{e\})$};

\end{tikzpicture}
\caption{One step of $K_1$, retreating the marked vertex from $t$ to $w$.}
\label{fig:k1-step}
\end{figure}

Consider now the ideal choice of vertex weights $c_v=1/q_v$.
Then the marked vertex is uniform under $\pi$: for every $v\in V$, the states whose marked vertex is $v$ have total stationary mass $1/n$.
The $K_1$ moves allow the reachability requirement to move locally along
the underlying undirected graph, while $K_2$ refreshes arcs that are not
forced by the current requirement.
In the two-path example, the marked vertex can retreat along the upper path
from $t$ to $s$, so that the old witness path is progressively no longer
required.
When the marked vertex is $s$, the reachability constraint is vacuous, the
conditional distribution of the arc set is $\+D_{\boldsymbol p}$, and $K_2$
is the ordinary product heat-bath dynamics.
The chain can therefore forget the upper witness path and subsequently grow
towards $t$ along the lower path, without requiring both paths to be open
simultaneously.
Thus the relevant gateway is the entire collection of states whose marked
vertex is $s$, which has stationary mass $1/n$, rather than the exponentially
rare set $B$.
The same layer also provides a warm start for the sampling algorithm: draw $F\sim\+D_{\boldsymbol p}$ and mark $s$. This is exactly $\pi$ conditioned on the marked vertex being $s$, and under admissible vertex weights it is a $16n$-warm start with respect to $\pi$.
In the following analysis, the multicommodity-flow argument in \Cref{sec:gap} and \Cref{sec:congestion}
makes this intuition rigorous.

We also note that in the two-path example above, if we start from $s$, it is roughly an unbiased simple random walk to reach $t$,
which would take time $\Omega(\ell^2)$.
The update rate is roughly $1/\ell$, making the overall mixing time $\Omega(\ell^3)$ to drop the total variation distance error down to $O(1/n)$.
Thus there is a $\Omega(n^3)$ lower bound on the mixing time of $K$, which is a polynomial factor away from the upper bound in \Cref{thm:mixing}.
It is easy to see that this mixing time can be improved to $\Theta(n^2)$ if we change $K_1$ to update only an adjacent arc instead of a uniformly at random arc.
However, even for that version there is still an $\Omega(n^3)$ mixing example, and there is no improvement on our mixing time upper bound as the bottleneck from $K_2$ is unchanged.
For some further discussion on the lower bound side of the mixing time, see \Cref{sec:lb}.

\subsection{Preliminaries of Markov chain analysis}\label{sec:preliminaries}
Let $\pi$ be a distribution over a finite set $\Omega$ with $\pi(x) > 0$ for every $x \in \Omega$, and let $K$ be a transition kernel on $\Omega$ that is irreducible and reversible with respect to $\pi$.
Assume in addition that $K$ is \emph{lazy}, namely that in every step it stays at the current state with probability at least $\frac{1}{2}$; in particular $K$ is aperiodic.
Reversibility makes $K$ self-adjoint with respect to the inner product $\langle f,g\rangle_{\pi} = \sum_{x \in \Omega} \pi(x) f(x) g(x)$ on $\^R^{\Omega}$, so all eigenvalues of $K$ are real; laziness makes them nonnegative. The $|\Omega|$ eigenvalues of $K$ can thus be arranged as
$1 = \lambda_1 > \lambda_2 \geq \cdots \geq \lambda_{|\Omega|} \geq 0$
and the \emph{spectral gap} of $K$ is defined as $\gap(K) = 1 - \lambda_2 \in (0,1]$.

\begin{lemma}[Spectral-gap mixing bound; cf.~\text{\cite[Theorem 12.4]{LevinPeresWilmer2017}}]\label{lem:gap-mixing}
Let $K$ be as above. For every probability distribution $\mu$ on $\Omega$ and every integer $t\geq 0$,
\begin{align}\label{eq:l2-mixing}
d_{\TV}(\mu K^t,\pi)
\leq \frac{1}{2}(1-\gap(K))^t
\left(\sum_{x\in\Omega}\frac{\mu(x)^2}{\pi(x)}-1\right)^{1/2}.
\end{align}
Consequently, if $\mu$ is an $M$-warm start with respect to $\pi$, then
\begin{align}\label{eq:warm-mixing}
d_{\TV}(\mu K^t,\pi)
\leq \frac{1}{2}\sqrt{M-1}\,\mathrm{e}^{-t\gap(K)}.
\end{align}
\end{lemma}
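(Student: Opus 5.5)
The plan is to expand $\mu K^t - \pi$ in an orthonormal eigenbasis of $K$ with respect to $\langle\cdot,\cdot\rangle_\pi$. The first inequality \eqref{eq:l2-mixing} is then the standard $\ell^2$ bound, and \eqref{eq:warm-mixing} follows from it by a one-line computation.

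First I rewrite the total variation distance via the density $f_t = \mu K^t/\pi$, that is, $f_t(x) = (\mu K^t)(x)/\pi(x)$. We have $d_{\TV}(\mu K^t,\pi) = \frac12\sum_x \pi(x)|f_t(x)-1|$, and Cauchy--Schwarz (Jensen) under $\pi$ gives $d_{\TV}(\mu K^t,\pi) \le \frac12\|f_t-1\|_{2,\pi}$. Reversibility gives $\pi(y)K(y,x)=\pi(x)K(x,y)$, so the density evolves by the adjoint of $K$, which equals $K$ itself: $f_{t} = K^t f_0$, where $f_0 = \mu/\pi$.

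Next, choose an orthonormal eigenbasis $\varphi_1\equiv 1,\varphi_2,\dots$ of $K$ in $L^2(\pi)$ with eigenvalues $1=\lambda_1>\lambda_2\ge\dots\ge 0$. Here $\lambda_1$ is simple by irreducibility, and all eigenvalues are nonnegative by laziness. Since $\langle f_0,1\rangle_\pi = 1$, the function $f_0-1$ is orthogonal to constants. Hence
\[
\|K^t f_0 - 1\|_{2,\pi}^2 = \sum_{j\ge2}\lambda_j^{2t}\langle f_0,\varphi_j\rangle_\pi^2 \le \lambda_2^{2t}\|f_0-1\|_{2,\pi}^2 .
\]
Here I use $0\le\lambda_j\le\lambda_2$, which is exactly where laziness matters: it removes the need to control negative eigenvalues. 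Finally, $\|f_0-1\|_{2,\pi}^2=\sum_x\mu(x)^2/\pi(x)-1$. Since $\lambda_2 = 1-\gap(K)$, this gives \eqref{eq:l2-mixing}.

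For the warm-start consequence, $\mu(x)\le M\pi(x)$ gives $\sum_x \mu(x)^2/\pi(x)\le M\sum_x\mu(x)=M$, so the square-root factor is at most $\sqrt{M-1}$. Combining this with $(1-\gap(K))^t\le \mathrm e^{-t\gap(K)}$ yields \eqref{eq:warm-mixing}.

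There is no real obstacle here, since this is a textbook fact. The only points needing care are stating the self-adjointness of $K$ correctly, so that densities evolve under $K$, and noting where laziness is used.
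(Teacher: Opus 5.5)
Your proposal is correct and follows essentially the same route as the paper's proof. Both use reversibility to write $\mu K^t/\pi - 1 = K^t(\mu/\pi - 1)$, pass from total variation to $\|\cdot\|_{2,\pi}$ by Cauchy--Schwarz, use laziness to bound $K$ on the orthogonal complement of the constants by $\lambda_2 = 1-\gap(K)$, and finish with $\sum_x \mu(x)^2/\pi(x)\le M$ and $1-x\le e^{-x}$; the only cosmetic difference is that you expand in an explicit orthonormal eigenbasis, whereas the paper quotes the operator-norm bound directly.
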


\begin{proof}
Let $h=\mu/\pi-1$, viewed as a function in $\^R^{\Omega}$. Then $\langle h,1\rangle_{\pi}=0$. Reversibility gives
\begin{align*}
\frac{(\mu K^t)(y)}{\pi(y)}-1=(K^t h)(y) \qquad (y\in\Omega).
\end{align*}
Since $K$ is lazy, all its eigenvalues are nonnegative, and its operator norm on the subspace orthogonal to the constants is $\lambda_2=1-\gap(K)$. Hence
\begin{align*}
2d_{\TV}(\mu K^t,\pi)
&=\sum_{y\in\Omega}\pi(y)\left|(K^t h)(y)\right|
\leq \|K^t h\|_{2,\pi}
\leq (1-\gap(K))^t\|h\|_{2,\pi}.
\end{align*}
Moreover,
\begin{align*}
\|h\|_{2,\pi}^2
=\sum_{x\in\Omega}\pi(x)\left(\frac{\mu(x)}{\pi(x)}-1\right)^2
=\sum_{x\in\Omega}\frac{\mu(x)^2}{\pi(x)}-1,
\end{align*}
which proves \eqref{eq:l2-mixing}. If $\mu$ is $M$-warm, then
\begin{align*}
\sum_{x\in\Omega}\frac{\mu(x)^2}{\pi(x)}
\leq M\sum_{x\in\Omega}\mu(x)=M.
\end{align*}
Together with $1-\gap(K)\leq\mathrm{e}^{-\gap(K)}$, this proves \eqref{eq:warm-mixing}.
\end{proof}

The spectral gap can be captured by the Poincaré inequality. For a kernel $K$ reversible with respect to $\pi$, the \emph{Dirichlet form} of a function $f \colon \Omega \to \^R$ is
\begin{align*}
\+E_K(f,f) = \frac{1}{2}\sum_{x,y \in \Omega} \pi(x) K(x,y) \left(f(x) - f(y)\right)^2,
\end{align*}
and the variance of $f$ with respect to $\pi$ is
\begin{align}\label{eqn:var}
\Var_{\pi}(f) = \sum_{x \in \Omega} \pi(x) \left(f(x) - \E_{\pi}[f]\right)^2 = \frac{1}{2}\sum_{x,y \in \Omega} \pi(x) \pi(y) \left(f(x) - f(y)\right)^2.
\end{align}
The spectral gap is exactly the largest constant $\gamma$ for which the Poincaré inequality $\gamma \Var_{\pi}(f) \leq \+E_K(f,f)$ holds for every $f$, that is~\cite[Lemma 13.7 and Remark 13.8]{LevinPeresWilmer2017},
\begin{align}\label{eq:poincare}
\gap(K) = \inf_{f \colon \Var_{\pi}(f) \neq 0} \frac{\+E_K(f,f)}{\Var_{\pi}(f)}.
\end{align}
Lower bounding the spectral gap therefore amounts to establishing a Poincaré inequality: if $\Var_{\pi}(f) \leq C \cdot \+E_K(f,f)$ for every $f \colon \Omega \to \^R$, then $\gap(K) \geq \frac{1}{C}$.

A general method for establishing such a spectral gap bound is the canonical path technique of Jerrum and Sinclair~\cite{JerrumSinclair1989} and Diaconis and Stroock~\cite{DiaconisStroock1991}, in the multicommodity flow formulation of Sinclair~\cite{Sinclair1992}.
View the Markov chain as a directed state graph on the vertex set $\Omega$, in which an ordered pair $\eta = (z,z')$ of distinct states forms a transition whenever $K(z,z') > 0$, and equip this transition with the \emph{capacity}
\begin{align*}
Q(\eta) = Q(z,z') \defeq \pi(z) K(z,z'),
\end{align*}
which is symmetric in $z$ and $z'$ by reversibility.
A \emph{path} from $x$ to $y$ is a sequence of transitions $\gamma = (\eta_1,\dots,\eta_{\ell})$ with $\eta_i = (z_{i-1},z_i)$, $z_0 = x$, $z_{\ell} = y$, and $z_0,\dots,z_{\ell}$ pairwise distinct; its \emph{length} $|\gamma| = \ell$ is the number of transitions it uses. 
%Requiring the states along a path to be distinct is no restriction: an arbitrary walk from $x$ to $y$ becomes such a path once the cycles it contains are erased, and this only shrinks its length and its set of transitions.
Let $\+P_{x,y}$ be the set of simple paths from $x$ to $y$, and let $\+P = \bigcup_{x \neq y} \+P_{x,y}$.

\begin{definition}[flow and congestion]\label{def:flow}
A \emph{flow} for $K$ is a function $\Gamma \colon \+P \to \^R_{\geq 0}$ that routes $\pi(x)\pi(y)$ units of demand from $x$ to $y$ for every ordered pair of distinct states, namely
\begin{align*}
\forall x \neq y \in \Omega, \quad \sum_{\gamma \in \+P_{x,y}} \Gamma(\gamma) = \pi(x)\pi(y).
\end{align*}
The \emph{congestion} of $\Gamma$ is
\begin{align*}
\rho(\Gamma) = \max_{\eta} \frac{1}{Q(\eta)} \sum_{\gamma \in \+P \colon \eta \in \gamma} \Gamma(\gamma) \, |\gamma|,
\end{align*}
where the maximum ranges over all transitions $\eta$ and $\eta \in \gamma$ means that $\gamma$ traverses $\eta$.
\end{definition}

It is well known that a flow of small congestion certifies a Poincaré inequality. 

\begin{lemma}[\text{\cite{Sinclair1992}}]\label{lem:flow}
For every flow $\Gamma$ and every $f \colon \Omega \to \^R$, it holds that $\gap(K) \geq \frac{1}{\rho(\Gamma)}$.
\end{lemma}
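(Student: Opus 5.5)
The plan is the standard Diaconis--Stroock/Sinclair argument, carried out directly on the Poincaré inequality. By \eqref{eq:poincare} it suffices to show $\Var_{\pi}(f)\le \rho(\Gamma)\,\+E_K(f,f)$ for every $f\colon\Omega\to\^R$. (The phrase ``for every $f$'' in the statement is vestigial: the conclusion concerns $\gap(K)$ only.) Start from the pairwise form \eqref{eqn:var}:
\[
\Var_\pi(f)=\tfrac12\sum_{x\ne y}\pi(x)\pi(y)(f(x)-f(y))^2 .
\]
Since $\Gamma$ is a flow, each weight $\pi(x)\pi(y)$ can be replaced by $\sum_{\gamma\in\+P_{x,y}}\Gamma(\gamma)$. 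This gives
\[
\Var_\pi(f)=\tfrac12\sum_{\gamma\in\+P}\Gamma(\gamma)\bigl(f(x_\gamma)-f(y_\gamma)\bigr)^2 ,
\]
where $x_\gamma,y_\gamma$ denote the endpoints of $\gamma$.

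Next, fix a path $\gamma=(\eta_1,\dots,\eta_\ell)$ with $\eta_i=(z_{i-1},z_i)$. Telescope along it,
\[
f(x_\gamma)-f(y_\gamma)=\sum_{i=1}^{\ell}\bigl(f(z_{i-1})-f(z_i)\bigr),
\]
and apply Cauchy--Schwarz to bound the square by $|\gamma|\sum_{\eta\in\gamma}(\nabla_\eta f)^2$. Here $\nabla_\eta f=f(z)-f(z')$ for $\eta=(z,z')$.

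Substitute this bound and exchange the order of summation, grouping by transition $\eta$:
\[
\Var_\pi(f)\le\tfrac12\sum_{\eta}(\nabla_\eta f)^2\sum_{\gamma\ni\eta}\Gamma(\gamma)|\gamma| .
\]
Multiply and divide each term by $Q(\eta)$. The inner sum is then at most $\rho(\Gamma)Q(\eta)$ by the definition of congestion.

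Finally, the sum runs over ordered pairs $(z,z')$ of distinct states with $K(z,z')>0$. The Dirichlet form is $\+E_K(f,f)=\tfrac12\sum_{z,z'}Q(z,z')(f(z)-f(z'))^2$; its diagonal terms vanish and its off-diagonal terms are exactly the transitions. So the right-hand side is $\rho(\Gamma)\,\+E_K(f,f)$, and $\gap(K)\ge 1/\rho(\Gamma)$ follows from \eqref{eq:poincare}.

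There is no real obstacle; the only care needed is bookkeeping. The factors $\tfrac12$ must match: both the variance and the Dirichlet form sum over ordered pairs, so the demand is routed for each ordered pair and each transition is counted in its own direction. The capacity convention $Q(\eta)=\pi(z)K(z,z')$ must match the Dirichlet form, and it does.
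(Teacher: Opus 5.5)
Your proof is correct. It is the standard Diaconis--Stroock/Sinclair argument: expand $\Var_\pi(f)$ over ordered pairs, replace $\pi(x)\pi(y)$ by the flow, apply Cauchy--Schwarz along each simple path, and regroup by transitions. Your handling of the factor $\tfrac12$ and of $Q(\eta)=\pi(z)K(z,z')$ matches the paper's definitions. The paper gives no proof of this lemma of its own; it cites \cite{Sinclair1992}, whose argument is the one you give.
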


\subsection{Basic properties of the Markov chain}\label{sec:prop-markov}

We now verify \Cref{prop:markov}. Two facts are used throughout. First, $\pi$ is strictly positive on every state of $\Omega$: indeed $\pi(v,F) \propto c_v \+D_{\boldsymbol{p}}(F)$, where $c_v \geq \frac{1}{4q_v} > 0$ by assumption and $\+D_{\boldsymbol{p}}(F) > 0$ because $\Omega$ has been restricted to the support of $\pi$ in \Cref{sec:Markov}. Second, $G$ is simple, so the two endpoints of an arc are distinct and every proposal of $K_1$ that differs from the current state changes the marked vertex.

\begin{proof}[Proof of \Cref{prop:markov}]
Aperiodicity is immediate, since $K(X,X) \geq \frac{1}{2} > 0$ for every $X \in \Omega$, so that $1$ belongs to the set of return times of every state. For irreducibility, it is straightforward to verify that any $(u,F) \in \Omega$ can reach the state $(s,E)$ using transitions $K_1$ and $K_2$ and the state $(s,E)$ can reach any $(u,F) \in \Omega$ by reversing the transitions.  

We turn to reversibility. The identity kernel is reversible with respect to every distribution, so by the decomposition $K = \frac{1}{2}I + \frac{1}{4}K_1 + \frac{1}{4}K_2$ it suffices to prove that $K_1$ and $K_2$ are both reversible with respect to $\pi$.

Let $R$ be the proposal kernel of $K_1$, so that $K_1(X,Y) = R(X,Y)\min\{1,\frac{\pi(Y)}{\pi(X)}\}$ for every $Y \neq X$. We claim that $R$ is symmetric. Fix two distinct states $X = (v,F)$ and $Y = (u,F')$. As noted above, a proposal that differs from the current state moves the marked vertex $v$ to the other endpoint of the chosen arc, so $R(X,Y) = 0$ unless $u \neq v$, which we assume from now on. By the two cases in the definition of $K_1$, each of the relevant proposals is made with probability $\frac{1}{m} \cdot \frac{1}{2}$, and therefore
\begin{align*}
R(X,Y) = \frac{N(X,Y)}{2m},
\end{align*}
where $N(X,Y) \in \{0,1,2\}$ is the number of arcs $e \in E$ such that
\begin{itemize}
    \item[(a)] $e = (v,u)$ and $F' = F \cup \{e\}$; or
    \item[(b)] $e = (u,v)$, $e \in F$, and $F' \in \{F, F \setminus \{e\}\}$.
\end{itemize}
Here (a) is the first case in the definition of $K_1$ and (b) is the second one. Note that $N(X,Y)$ can be as large as $2$: this happens exactly when $F' = F$ and both $(v,u)$ and $(u,v)$ are in $F$. Now an arc $e$ satisfies (a) for the ordered pair $(X,Y)$ if and only if it satisfies (b) for $(Y,X)$. Indeed, if $e = (v,u)$ and $F' = F \cup \{e\}$, then $e \in F'$ and $F  \in \{F', F' \setminus \{e\}\}$, which is (b) for $(Y,X)$; conversely, if $e = (v,u)$, $e \in F'$ and $F \in \{F', F' \setminus \{e\}\}$, then $F \cup \{e\} = F'$, which is (a) for $(X,Y)$. Applying this equivalence gives $N(X,Y) = N(Y,X)$, and the claim follows. Consequently, for all $X \neq Y$,
\begin{align}\label{eq:metropolis}
\pi(X)K_1(X,Y) = R(X,Y)\min\{\pi(X),\pi(Y)\} = \pi(Y)K_1(Y,X),
\end{align}
which is the detailed balance condition for $K_1$. Note that infeasible proposals cause no harm: they have $\pi(Y) = 0$ and are rejected with probability one.

For $K_2$, fix an arc $e \in E$ and a state $X = (v,F) \in \Omega$, and write $F_1 = F \cup \{e\}$ and $F_0 = F \setminus \{e\}$. The update keeps the marked vertex $v$ fixed and changes at most the membership of $e$, so the only possible transition between two distinct states is between $(v,F_0)$ and $(v,F_1)$. If $s$ cannot reach $v$ via $F_0$, the update moves to $(v,F_1)$ with probability one, and since the current state is in $\Omega$ it must already be $(v,F_1)$, so no transition between distinct states occurs. Otherwise both $(v,F_0)$ and $(v,F_1)$ are valid states of the chain, unless $\+D_{\boldsymbol{p}}(F_0) = 0$, in which case $p_e = 1$ and the transition to $(v,F_0)$ has probability $1 - p_e = 0$. By the definition of $K_2$,
\begin{align*}
K_2\bigl((v,F_0),(v,F_1)\bigr) = \frac{p_e}{m}, \qquad K_2\bigl((v,F_1),(v,F_0)\bigr) = \frac{1-p_e}{m},
\end{align*}
and detailed balance for $K_2$ holds because $\frac{\pi(v,F_1)}{\pi(v,F_0)} = \frac{\+D_{\boldsymbol{p}}(F_1)}{\+D_{\boldsymbol{p}}(F_0)} = \frac{p_e}{1-p_e}$.
\end{proof}

\subsection{Lower bound on the spectral gap}\label{sec:gap}

The Markov chain $K = \frac{1}{2}I + \frac{1}{4}K_1 + \frac{1}{4}K_2$ of \Cref{sec:Markov} is lazy, and by \Cref{prop:markov} it is irreducible and reversible with respect to $\pi = \pi_{\boldsymbol{c},\boldsymbol{p}}$, so \Cref{lem:gap-mixing} applies to it. The main technical result of this section is the following lower bound on the spectral gap.

\begin{theorem}\label{thm:gap}
For the Markov chain $K=\frac{1}{2}I + \frac{1}{4}K_1 + \frac{1}{4}K_2$, where the vertex weights satisfy $\frac{1}{4q_v} \leq c_v \leq \frac{4}{q_v}$ for all $v \in V$, we have $\gap(K) = \Omega\left(\frac{1}{nm^2}\right)$.
\end{theorem}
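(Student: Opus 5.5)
We construct a multicommodity flow $\Gamma$ and apply \Cref{lem:flow}, aiming for $\rho(\Gamma)=O(nm^2)$. Transitions have capacity $Q(z,z')=\pi(z)K(z,z')$, which is $\Omega(\min\{\pi(z),\pi(z')\}/m)$ for $K_1$ moves and $\Omega(\min\{\pi(z),\pi(z')\}/m)$ for $K_2$ flips. Both bounds use $p_e$ and $1-p_e$ appearing as the ratio $\pi(z')/\pi(z)$. Every path will have length $O(m+n)=O(m)$. The congestion bound therefore reduces to showing that each transition $\eta$ with endpoints $z,z'$ carries total flow weight at most $O(n)\cdot\pi(z)$, up to constants.

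The flow from $X=(u,F)$ to $Y=(u',F')$ is routed through the layer with marked vertex $s$. Draw $U\sim\+D_{\boldsymbol p}$ independently and send flow $\pi(X)\pi(Y)\+D_{\boldsymbol p}(U)$ along a concatenation of two canonical paths, $X\to(s,U)$ followed by $(s,U)\to Y$. The second half is the reversal of the canonical path $Y\to(s,U)$, so it suffices to design canonical paths from $(u,F)$ to $(s,U)$.

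Fix a canonical simple path $P=P(F,u)$ from $s$ to $u$ in $F$, for instance a BFS tree path. The marked vertex retreats along $P$ backwards using $K_1$. At each retreat over an arc $e\in P$, the arc $e$ is set to its value in $U$, where it is either kept or removed. The remaining arcs not on $P$ are flipped from $F$ to $U$ by $K_2$, and this must be done at moments when the flips are legal. The natural order is to first flip all arcs outside $P$ while the marked vertex is still at $u$, since $P$ certifies reachability. Then retreat along $P$, fixing the path arcs to $U$ as we go. Each intermediate state $Z$ is then of the form (vertex $w$ on $P$, arcs agreeing with $U$ off $P$ and on the already-retreated suffix, and with $F$ on the prefix of $P$ up to $w$).

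We then bound congestion by a JS-style encoding. Given a transition $\eta$ at state $Z=(w,H)$, together with the endpoint $Y$, the pair $(X,U)$ is determined by an injective-ish map. The complement $\tilde H$ takes $U$ on the prefix of $P$ and $F$ elsewhere. The product identity $\+D(F)\+D(U)=\+D(H)\+D(\tilde H)$ holds arc by arc. Since $\tilde H$ is only constrained by containing the prefix path to $w$, summing over $(F,U)$ gives $\sum \+D(F)\+D(U)\le \+D(H)\cdot O(q_w)$-type bounds, multiplied by the number of choices of $u$ (at most $n$) and the position data.

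Combined with the weights $c_u\le 4/q_u$ and $c_w\ge 1/(4q_w)$, together with $\sum_{u}c_uq_u\le 4n$ and the normalisation $Z_\pi=\Theta(n)$, the factors $q_u$ and $q_w$ cancel up to constants. The total flow through $\eta$ is then $O(n)\pi(Z)$, times the symmetric term for the $Y$-half.

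The main obstacle is exactly this encoding. The complement $\tilde H$ must reconstruct $F$ and $U$ uniquely, and it must certify reachability of the original $u$ so that the factor $c_u$ is cancelled by a $q_u$ coming from summing $\+D(\tilde H)$ over valid $\tilde H$. With the naive order above, the $F$-part off $P$ is already overwritten when we retreat. The paper hints at the fix: interleave the $K_2$ flips of the arcs of $F\setminus P$ with the retreat, in the spirit of Jerrum–Sinclair path unwinding. The aim is that at every intermediate state, one of $H$ or $\tilde H$ witnesses $s\to w$ and the other witnesses $s\to u$.

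I would first try fixing the arcs off $P$ in stages keyed to the retreat position. An off-path arc is updated to its $U$-value only once it is no longer needed by $\tilde H$'s witness for $u$. That witness is the suffix of $P$ from $w$ to $u$ together with $U$'s path to $w$. The remaining verification is that both $H$ and $\tilde H$ stay in the appropriate $\Omega$'s. Summing over at most $n$ choices of endpoint vertex, at most $m$ positions and $O(m)$ path length then yields $\rho=O(nm^2)$, and hence $\gap(K)=\Omega(1/(nm^2))$.
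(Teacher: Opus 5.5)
\textbf{There is a genuine gap.} Your outer framework matches the paper's: you route through the $s$-layer via $U\sim\+D_{\boldsymbol p}$, retreat along a BFS discovery path with $K_1$ while setting the other arcs with $K_2$, and use the complementary encoding with $\+D_{\boldsymbol p}(F)\+D_{\boldsymbol p}(U)=\+D_{\boldsymbol p}(H)\+D_{\boldsymbol p}(\widetilde H)$. However, the two steps that carry the proof are left open, and the plan you sketch for them cannot work.

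\textbf{The witness property you aim for does not exist.} You want $\widetilde H$ to witness $s\to u$ via ``$U$'s path to $w$''. But $U$ is unconditioned and need not reach $w$. Since $H\cup\widetilde H=F\cup U$ and $H\cap\widetilde H=F\cap U$, take $F$ to be the single path $s\to v_1\to u$ and $U=\emptyset$. Then at the state $(v_1,H)$ on your retreat, $H\subseteq F$ must contain $(s,v_1)$, so $\widetilde H=F\setminus H$ does not contain it and reaches nothing from $s$. What is needed, and suffices, is weaker: $\widetilde H$ contains the not-yet-retreated suffix of the discovery path. Then $w$ reaches $u$ in $\widetilde H\setminus\{a^\star\}$, and the encodings have total mass at most $\Pr[w\to u]$. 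The weights then cancel through Harris's inequality, $q_u\ge q_w\Pr[w\to u]$, which gives exactly the required $\sum\+D_{\boldsymbol p}(\widetilde H)\lesssim q_u/q_w$. Your plan has no such correlation step. The bound you write, ``$\+D(H)\cdot O(q_w)$'', does not cancel $c_u/c_w\approx q_w/q_u$; it leaves a ratio $q_w^2/q_u$, which can be exponentially large.

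\textbf{Your ordering rule does not give injectivity.} You correctly see that flipping all off-path arcs before retreating makes $P$ unrecoverable. But your remedy, updating an off-path arc once ``$\widetilde H$'s witness'' no longer needs it, is keyed to the nonexistent witness above. With the correct witness (the suffix of $P$), off-path arcs are never needed, so the rule imposes no constraint and permits the naive order you rejected.

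The missing idea is to order \emph{all} arcs by the modified BFS's inspection sequence, with inspected arcs in inspection order and the rest appended. Scan this permutation in reverse, doing a $K_1$ retreat exactly when the scanned arc is a discovery arc and a $K_2$ flip otherwise. Then $H^k$ agrees with $F$ on the first $k$ arcs of the permutation, which include every arc the BFS inspects before $a^\star$. The encoding $\widetilde H^k$ agrees with $F$ on the rest. So for each fixed $u$, the BFS on $F$ can be replayed from $\eta$ and $\widetilde H^k\setminus\{a^\star\}$ alone, recovering $k$, $F$ and $U$.

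\textbf{Your count needs that free recovery of $k$.} Summing over $m$ positions, as you propose, costs an extra factor $m$ and gives $\rho=O(nm^3)$. That contradicts your own requirement that the load on $\eta$ be $O(n)\pi(z)$.

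\textbf{A smaller point on that requirement.} It should be stated against $\min\{\pi(z),\pi(z')\}$ rather than $\pi(z)$, since a $K_2$ move adding an arc with small $p_e$ has capacity only $\pi(z)p_e/(4m)$. The product identity handles this. For $K_1$ moves the paper additionally uses $q_w\ge p_{a^\star}q_{w'}$.
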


Note that \Cref{thm:mixing} is a direct consequence of \Cref{lem:gap-mixing} and \Cref{thm:gap}.

\subsubsection{Reduction to the ideal chain}

The weights $\boldsymbol{c}$ in \Cref{thm:gap} are only assumed to lie within a constant factor of the ideal weights $c_v^{*} = \frac{1}{q_v}$. The first step is to reduce the general case to the ideal case.
Let $\boldsymbol{c}^{*} = (c^{*}_v)_{v \in V}$, let $\pi^{*} = \pi_{\boldsymbol{c}^{*},\boldsymbol{p}}$, and let
$K^{*} = \frac{1}{2}I + \frac{1}{4}K_1^{*} + \frac{1}{4}K_2^{*}$
be the Markov chain of \Cref{sec:Markov} with $\boldsymbol{c}$ replaced by $\boldsymbol{c}^{*}$, called the \emph{ideal chain}.
Since $\sum_{F \in \Omega_v} \+D_{\boldsymbol{p}}(F) = q_v$, the normalising constant of $\pi^{*}$ is $\sum_{v \in V} c^{*}_v q_v = n$, so that the ideal stationary distribution is
\begin{align*}
\forall (v,F) \in \Omega, \quad \pi^{*}(v,F) = \frac{\+D_{\boldsymbol{p}}(F)}{n \, q_v}.
\end{align*}
In other words, under $\pi^{*}$ the marked vertex is uniform over $V$, and conditioned on the marked vertex being $v$, the arc set is distributed as $\+D_{\boldsymbol{p}}$ conditioned on $\Omega_v$. The next lemma shows that the two chains have comparable spectral gaps, so that only the ideal chain needs to be analysed.

\begin{lemma}\label{lem:ideal}
Let $\boldsymbol{c}$ be as in \Cref{thm:gap}. It holds that 
$\gap(K) \geq \frac{1}{4096}\gap(K^{*}).$
\end{lemma}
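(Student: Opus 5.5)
We compare the two chains through the Poincaré characterisation \eqref{eq:poincare}. The approach is to show that the stationary distributions $\pi$ and $\pi^*$ are pointwise comparable within a constant factor, and that the transition capacities $Q(X,Y)=\pi(X)K(X,Y)$ and $Q^*(X,Y)=\pi^*(X)K^*(X,Y)$ are also comparable within a constant factor. Then the Dirichlet forms and the variances compare, and the gaps do too.

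\textbf{Step 1: comparing the stationary distributions.} Write $Z=\sum_u c_uq_u$, so that $\pi(v,F)=c_v\+D_{\boldsymbol p}(F)/Z$ and $\pi^*(v,F)=\+D_{\boldsymbol p}(F)/(nq_v)$. Then
\begin{align*}
\frac{\pi(v,F)}{\pi^*(v,F)}=\frac{n\,c_vq_v}{Z}.
\end{align*}
Since $c_vq_v\in[\frac14,4]$ for every $v$, we have $Z\in[\frac n4,4n]$. Hence this ratio lies in $[\frac1{16},16]$.

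\textbf{Step 2: comparing the transition capacities.} Transitions from the identity part and from $K_2$ do not depend on $\boldsymbol c$. The kernel $K_2$ is a heat-bath update on the arc set with the marked vertex fixed, and its probabilities involve only $\boldsymbol p$ and reachability. Hence $K_2=K_2^*$, and the corresponding capacities compare by the factor in Step 1. For $K_1$, the proposal kernel $R$ is the same for both chains. By \eqref{eq:metropolis}, the capacity is $\frac14R(X,Y)\min\{\pi(X),\pi(Y)\}$, and Step 1 gives
\begin{align*}
\min\{\pi(X),\pi(Y)\}\geq\tfrac1{16}\min\{\pi^*(X),\pi^*(Y)\}.
\end{align*}
In total, $\pi(X)K(X,Y)\geq\frac1{16}\pi^*(X)K^*(X,Y)$ for all $X\neq Y$, so $\+E_K(f,f)\geq\frac1{16}\+E_{K^*}(f,f)$.

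\textbf{Step 3: comparing the variances.} We use the pairwise formula \eqref{eqn:var}, which gives
\begin{align*}
\Var_\pi(f)=\tfrac12\sum_{x,y}\pi(x)\pi(y)(f(x)-f(y))^2\leq 256\,\Var_{\pi^*}(f).
\end{align*}
Combining this with Step 2, for every nonconstant $f$,
\begin{align*}
\frac{\+E_K(f,f)}{\Var_\pi(f)}\geq\frac{1}{16\cdot256}\cdot\frac{\+E_{K^*}(f,f)}{\Var_{\pi^*}(f)}=\frac1{4096}\cdot\frac{\+E_{K^*}(f,f)}{\Var_{\pi^*}(f)}.
\end{align*}
Taking the infimum over $f$ in \eqref{eq:poincare} yields $\gap(K)\geq\frac1{4096}\gap(K^*)$.

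The only step needing care is Step 2 for the Metropolis part. There the acceptance probabilities of the two chains genuinely differ, and comparing them pointwise would be awkward. The difficulty is avoided by working with the symmetric capacity form $R\min\{\pi,\pi\}$ rather than with the transition probabilities themselves. The constant $4096=16\cdot256$ then comes out exactly as stated.
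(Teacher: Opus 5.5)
Your proposal is correct and follows the paper's proof essentially step for step. Both arguments use the density ratio $\pi/\pi^*\in[\frac{1}{16},16]$, compare Dirichlet forms via $K_2=K_2^*$ and the symmetric Metropolis capacity $R(X,Y)\min\{\pi(X),\pi(Y)\}$, and compare variances through the pairwise formula, giving $16\cdot 256=4096$. The paper also states explicitly, where you leave it implicit, that $\pi$ and $\pi^*$ both have full support on $\Omega$, so the two infima in the Poincar\'e characterisation range over the same (nonconstant) functions.
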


\begin{proof}
The assumption on the weights says exactly that $c_v q_v \in \left[\frac{1}{4},4\right]$ for every $v \in V$, so the normalising constant $Z = \sum_{u \in V} c_u q_u$ of $\pi$ satisfies $Z \in \left[\frac{n}{4}, 4n\right]$. For every $(v,F) \in \Omega$,
\begin{align}\label{eq:density}
\frac{\pi(v,F)}{\pi^{*}(v,F)} = \frac{c_v \+D_{\boldsymbol{p}}(F) / Z}{\+D_{\boldsymbol{p}}(F)/(n q_v)} = \frac{n \, c_v q_v}{Z} \in \left[\frac{n/4}{4n}, \frac{4n}{n/4}\right] = \left[\frac{1}{16},16\right].
\end{align}
We next compare the Dirichlet forms of the two chains. The heat-bath update does not involve the vertex weight parameters, so $K_2^{*} = K_2$, and \eqref{eq:density} gives
\begin{align*}
\pi(X)K_2(X,Y) = \frac{\pi(X)}{\pi^{*}(X)} \pi^{*}(X)K_2^{*}(X,Y) \geq \frac{1}{16}\pi^{*}(X)K_2^{*}(X,Y).
\end{align*}
The Metropolis update uses the same proposal kernel $R$ for both vertex weight vectors, so \eqref{eq:metropolis}, which holds for any choice of the weights, applies to $K_1$ and to $K_1^{*}$ alike. Combining it with \eqref{eq:density},
\begin{align*}
\pi(X)K_1(X,Y) &= R(X,Y)\min\{\pi(X),\pi(Y)\} \\
&\geq \frac{1}{16}R(X,Y)\min\{\pi^{*}(X),\pi^{*}(Y)\} = \frac{1}{16}\pi^{*}(X)K_1^{*}(X,Y).
\end{align*}
As $\pi(X)K(X,Y) = \frac{1}{4}\pi(X)K_1(X,Y) + \frac{1}{4}\pi(X)K_2(X,Y)$ for $X \neq Y$, and the Dirichlet form involves only such pairs, summing the two bounds yields
\begin{align*}
\forall f \colon \Omega \to \^R, \quad \+E_{K}(f,f) \geq \frac{1}{16}\+E_{K^{*}}(f,f).
\end{align*}
For the variances, applying \eqref{eq:density} to both factors in \eqref{eqn:var},
\begin{align*}
\Var_{\pi}(f) &= \frac{1}{2}\sum_{X,Y \in \Omega} \pi(X)\pi(Y) \left(f(X) - f(Y)\right)^2\\
 &\leq \frac{256}{2} \sum_{X,Y \in \Omega} \pi^{*}(X)\pi^{*}(Y) \left(f(X) - f(Y)\right)^2 = 256\Var_{\pi^{*}}(f).
\end{align*}
Both $\pi$ and $\pi^{*}$ are supported on all of $\Omega$, so $\Var_{\pi}(f) = 0$ if and only if $f$ is constant, in which case $\Var_{\pi^{*}}(f) = 0$. Using the variational characterisation~\eqref{eq:poincare} of the spectral gap, we obtain
\begin{align*}
\gap(K) = \inf_{f \colon \Var_{\pi}(f) \neq 0} \frac{\+E_{K}(f,f)}{\Var_{\pi}(f)} \geq \frac{1}{4096} \inf_{f \colon \Var_{\pi^{*}}(f) \neq 0} \frac{\+E_{K^{*}}(f,f)}{\Var_{\pi^{*}}(f)} = \frac{1}{4096}\gap(K^{*}). &\qedhere
\end{align*}
\end{proof}

\subsubsection{Multicommodity flow for the ideal chain}
By \Cref{lem:ideal}, it suffices to prove \Cref{thm:gap} for the ideal chain. To ease the notation, the weights are fixed to the ideal ones from now on, and the stars are dropped: throughout the rest of this section, we use the weight
$c_v = \frac{1}{q_v}$ for all vertices $v \in V$,
and $\pi = \pi^*$ and $K = K^*$ denote the corresponding stationary distribution and Markov chain, so that $\pi(v,F) = \frac{\+D_{\boldsymbol{p}}(F)}{n \, q_v}$ for all $(v,F) \in \Omega$.
The goal is then to show that $\gap(K) = \Omega(\frac{1}{nm^2})$.

We now construct a flow of small congestion, which by \Cref{lem:flow} lower bounds the spectral gap. This requires a path between every ordered pair of states, and these paths are obtained by routing everything through the states whose marked vertex is the source.
Call these states the \emph{$s$-layer},
and denote by
\begin{align*}
\Lambda \defeq \left\{(s,F) \in \Omega \right\} = \left\{(s,F) \mid F \in \Omega_s,\ \+D_{\boldsymbol{p}}(F) > 0 \right\}.
\end{align*}
As $q_s = 1$, the ideal distribution satisfies $\pi(s,F) = \frac{\+D_{\boldsymbol{p}}(F)}{n}$ on $\Lambda$, so $\pi(\Lambda) = \frac{1}{n}$ and the law of $\pi$ conditioned on $\Lambda$ is $\+D_{\boldsymbol{p}}$. It therefore suffices to construct one path from each state in $\Omega$ to each state of the $s$-layer $\Lambda$: a path between an arbitrary pair of states is then obtained by concatenating two of them, one taken backwards.

Formally, for every $x \in \Omega$ and every $(s,F') \in \Lambda$, 
we will construct one path $\gamma_{x,F'}$ in the state graph of $K$. 
Given all the paths $(\gamma_{x,F'})_{x \in \Omega, (s,F') \in \Lambda}$, we can construct a set of paths $\+R_{x,y}$ from $x$ to $y$ for every ordered pair of distinct states $x,y \in \Omega$ in the following way
\begin{align*}
\+R_{x,y} = \left\{ \gamma_{x,y,F'} = \gamma_{x,F'} \circ \gamma^{-1}_{y,F'} \mid (s,F') \in \Lambda \right\},
\end{align*}
where $\gamma^{-1}_{y,F'}$ is the reversal of $\gamma_{y,F'}$, $\circ$ denotes an operation that first do path concatenation and then erase the cycles of the resulting walk so that $\gamma_{x,y,F'}$ is again a path. We remark that reversibility of the Markov chain makes the state graph symmetric, so the reversal of a path is again a path in the state graph. Define the set of all these paths as $\+{R} = \bigcup_{x \neq y} \+R_{x,y}$.

Consider the paths $(\gamma_{x,F'})_{x \in \Omega, (s,F') \in \Lambda}$. For any transition $\eta$, define the \emph{load} of $\eta$ as
\begin{align}\label{eq:load}
    L(\eta) = \sum_{x \in \Omega} \sum_{(s,F') \in \Lambda} \pi(x) \, \+D_{\boldsymbol{p}}(F') \cdot \mathbf{1}\left[\eta \in \gamma_{x,F'}\right].
\end{align}
Intuitively, $L(\eta)$ is the amount of demand crossing $\eta$ when every state $x$ sends $\pi(x)\+D_{\boldsymbol{p}}(F')$ units of demand to $(s,F')$ along $\gamma_{x,F'}$, for every $(s,F') \in \Lambda$.
The next lemma shows that if we can construct paths $(\gamma_{x,F'})_{x \in \Omega, (s,F') \in \Lambda}$ such that the length of every path is small and the load of every transition is small, then we can define a flow in the sense of \Cref{def:flow} that is supported on $\+R$ and has small congestion, and as a consequence we can lower bound the spectral gap of $K$. Recall that for any transition $\eta = (z,z')$, the capacity is $Q(\eta) = \pi(z)K(z,z')$.

\begin{lemma}\label{lem:route}
Suppose that $\left \vert \gamma_{x,F'} \right \vert \leq \ell$ for every $x \in \Omega$ and every $(s,F') \in \Lambda$, and that $L(\eta) \leq B \cdot Q(\eta)$ for every transition $\eta$. Then $\gap(K) \geq \frac{1}{4 \ell B}$.
\end{lemma}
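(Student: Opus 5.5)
We define the flow directly from the family $\+R$. For each ordered pair $x\neq y$, put weight
\begin{align*}
\Gamma(\gamma)=\pi(x)\pi(y)\sum_{(s,F')\in\Lambda:\ \gamma_{x,y,F'}=\gamma}\+D_{\boldsymbol{p}}(F')
\end{align*}
on each path $\gamma\in\+R_{x,y}$. Since $\sum_{(s,F')\in\Lambda}\+D_{\boldsymbol{p}}(F')=1$ (the layer $\Lambda$ contains every $F$ in the support of $\+D_{\boldsymbol{p}}$, because $s$ always reaches itself), the total demand routed from $x$ to $y$ is exactly $\pi(x)\pi(y)$. So $\Gamma$ is a flow in the sense of \Cref{def:flow}. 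Erasing cycles only shortens a walk and only removes transitions, so every $\gamma_{x,y,F'}$ has length at most $2\ell$. It also uses only transitions that lie in $\gamma_{x,F'}$ or in $\gamma_{y,F'}^{-1}$.

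Next we bound the congestion at a fixed transition $\eta=(z,z')$. A path $\gamma_{x,y,F'}$ traverses $\eta$ only in one of two cases:
\begin{itemize}
\item $\eta\in\gamma_{x,F'}$; or
\item $\eta\in\gamma_{y,F'}^{-1}$, which is the same as $\eta^{-1}=(z',z)\in\gamma_{y,F'}$.
\end{itemize}
Summing over $x,y,F'$, the contribution of the first case is at most
\begin{align*}
2\ell\sum_{y}\pi(y)\sum_{x,F'}\pi(x)\+D_{\boldsymbol{p}}(F')\mathbf 1[\eta\in\gamma_{x,F'}]\le 2\ell L(\eta).
\end{align*}
By the same computation, the second case contributes at most $2\ell L(\eta^{-1})$.

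We then use the hypothesis $L\le B\,Q$ together with reversibility. Reversibility gives $Q(\eta^{-1})=\pi(z')K(z',z)=\pi(z)K(z,z')=Q(\eta)$. Hence the total load crossing $\eta$, weighted by path length, is at most $2\ell B\,Q(\eta)+2\ell B\,Q(\eta)=4\ell B\,Q(\eta)$. This gives $\rho(\Gamma)\le 4\ell B$, and \Cref{lem:flow} yields $\gap(K)\ge 1/(4\ell B)$.

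The only delicate points are bookkeeping. First, distinct $F'$ may produce the same cycle-erased path, which is why $\Gamma$ sums the weights over all such $F'$; this does not affect the bound. Second, the reversed paths $\gamma_{y,F'}^{-1}$ are valid paths in the state graph, which holds because reversibility makes the state graph symmetric. Everything else is a direct summation, so I expect no real obstacle here.
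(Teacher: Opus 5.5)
Your proposal is correct and follows the paper's proof essentially step for step. You use the same flow assigning $\pi(x)\pi(y)\mathcal{D}_{\boldsymbol{p}}(F')$ to each cycle-erased concatenation $\gamma_{x,y,F'}$, the same $2\ell$ length bound, and the same congestion estimate $2\ell\bigl(L(\eta)+L(\bar\eta)\bigr)\leq 4\ell B\,Q(\eta)$ obtained via reversibility, before invoking \Cref{lem:flow}.
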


\begin{proof}
Let $\Gamma$ assign the amount $\pi(x)\pi(y)\+D_{\boldsymbol{p}}(F')$ to the path $\gamma_{x,y,F'}$, for every ordered pair $x \neq y$ and every $(s,F') \in \Lambda$, the contributions of different intermediate states leading to the same path being added up, and let $\Gamma$ vanish on every path outside $\+R$. Since $\sum_{(s,F') \in \Lambda} \+D_{\boldsymbol{p}}(F') = 1$, this is indeed a flow:
\begin{align*}
\forall x \neq y \in \Omega, \quad \sum_{\gamma \in \+R_{x,y}} \Gamma(\gamma) = \sum_{(s,F') \in \Lambda} \pi(x)\pi(y)\+D_{\boldsymbol{p}}(F') = \pi(x)\pi(y).
\end{align*}

It remains to bound the congestion of $\Gamma$. Fix a transition $\eta = (z,z')$ and let $\bar{\eta} = (z',z)$ be its reversal, which is also a transition and satisfies $Q(\bar{\eta}) = Q(\eta)$ by reversibility. Erasing cycles only shortens a walk and removes transitions from it, so $\left \vert \gamma_{x,y,F'} \right \vert \leq 2\ell$, and $\gamma_{x,y,F'}$ traverses $\eta$ only if $\gamma_{x,F'}$ traverses $\eta$ or $\gamma_{y,F'}$ traverses $\bar{\eta}$. Therefore
\begin{align*}
\sum_{\gamma \in \+R \colon \eta \in \gamma} \Gamma(\gamma) \left \vert \gamma \right \vert 
&\leq 2\ell \sum_{x \neq y} \sum_{(s,F') \in \Lambda} \pi(x)\pi(y)\+D_{\boldsymbol{p}}(F') \left(\mathbf{1}\left[\eta \in \gamma_{x,F'}\right] + \mathbf{1}\left[\bar{\eta} \in \gamma_{y,F'}\right]\right)\\
&\leq 2\ell \left(L(\eta) + L(\bar{\eta})\right) \leq 4 \ell B \cdot Q(\eta),
\end{align*}
where the second inequality follows by summing over $y$ in the first term and over $x$ in the second, both of which contribute a factor $\sum_{w \in \Omega} \pi(w) = 1$. Hence $\rho(\Gamma) \leq 4\ell B$, and \Cref{lem:flow} completes the proof.
\end{proof}

Next, we construct the paths $(\gamma_{x,F'})_{x \in \Omega, (s,F') \in \Lambda}$. % such that the length of every path is small and the load of every transition is small. 
Fix $x = (u,F) \in \Omega$ and $(s,F') \in \Lambda$. 
Fix a total order of the arcs in $E$. Run a modified breadth-first search (BFS) from $s$ in $G$, stopping as soon as $u$ is discovered.
This process is essentially a BFS on $(V,F)$ from $s$, except that we want to record the order of arcs being inspected.
To be more precise, we maintain a first-in-first-out (FIFO) queue of discovered but not yet explored vertices.
When a vertex is removed from the queue, inspect \emph{all} of its outgoing arcs $e \in E$ from smallest to largest according to the fixed order.
If $e \notin F$, skip it. If $e \in F$ and its head $w$ has not yet been discovered, discover $w$ and add it to the queue.
The process continues until $u$ is discovered.
Record $e$ in the order in which it is inspected, and append all the uninspected arcs from smallest to largest according to the fixed order. This gives a permutation $a_1,a_2,\ldots,a_m$ of all arcs in $E$.
As $(u,F) \in \Omega$, the search must discover $u$. Let
\begin{align*}
s = v_0 \longrightarrow v_1 \longrightarrow \cdots \longrightarrow v_r = u
\end{align*}
be the discovery path from $s$ to $u$.
For every $1 \leq i \leq r$, let $\tau_i$ be the position of the discovery arc $(v_{i-1},v_i)$ in the permutation, so that
$a_{\tau_i}=(v_{i-1},v_i)$.
If $r \geq 1$, the BFS construction ensures that $1 \leq \tau_1<\cdots<\tau_r \leq m$. In all cases, set $\tau_0=0$ and $\tau_{r+1}=m+1$.
An example is given in \Cref{fig:modified-bfs}.

\begingroup
\setlength{\fboxsep}{10pt}
\setlength{\fboxrule}{0.6pt}
\begin{center}
\fcolorbox{black!35}{gray!3}{%
\begin{minipage}{0.91\linewidth}
{\centering
\begin{tikzpicture}[
    vertex/.style={circle,draw,minimum size=7mm,inner sep=0pt},
    source/.style={vertex,draw=green!50!black,fill=green!15},
    discovered/.style={vertex,draw=blue!65!black,fill=blue!10},
    undiscovered/.style={vertex,draw=red!60!black,fill=red!8},
    target/.style={vertex,draw=orange!70!black,fill=yellow!25,double},
    discovery/.style={->,very thick,draw=blue!70!black},
    skipped/.style={->,thick,dashed,draw=red!75!black},
    uninspected/.style={->,thick,densely dotted,draw=gray!70}
]
\begin{scope}[yshift=2.5mm]
\node[source] (s) at (0,0) {$s$};
\node[discovered] (a) at (2,1) {$a$};
\node[undiscovered] (b) at (2,-1) {$b$};
\node[target] (u) at (4,0) {$u$};

\draw[discovery] (s) -- node[above left,text=blue!70!black] {$e_1\in F$} (a);
\draw[discovery] (a) -- node[above right,text=blue!70!black] {$e_3\in F$} (u);
\draw[skipped] (a) -- node[right,text=red!75!black] {$e_2\notin F$} (b);
\draw[skipped] (s) -- node[below left,text=red!75!black] {$e_4\notin F$} (b);
\draw[uninspected] (b) -- node[below right,text=gray!70] {$e_5$} (u);
\end{scope}

\node[align=left,anchor=west] at (5,1.45)
    {open-arc set: $F=\{e_1,e_3\}$};
\node[align=left,anchor=west] at (5,0.85)
    {fixed order: $e_1<e_2<e_3<e_4<e_5$};
\node[align=left,anchor=west,text=blue!70!black] at (5,0.25)
    {solid blue arc: traversed discovery arc};
\node[align=left,anchor=west,text=red!75!black] at (5,-0.35)
    {dashed red arc: inspected and skipped};
\node[align=left,anchor=west,text=gray!70] at (5,-0.95)
    {dotted gray arc: not inspected};

\end{tikzpicture}
\par}

\medskip

\noindent\textbf{Initial state.}
The FIFO queue is $Q=[s]$, and the inspection list is $I=()$.

\smallskip
\noindent\textbf{Step 1: remove $s$.}
Its outgoing arcs are $e_1<e_4$. Inspect $e_1\in F$, discover $a$, and add $a$ to the queue. Then inspect $e_4\notin F$ and skip it. Hence $Q=[a]$ and $I=(e_1,e_4)$.

\smallskip
\noindent\textbf{Step 2: remove $a$.}
Its outgoing arcs are $e_2<e_3$. Inspect $e_2\notin F$ and skip it. Then inspect $e_3\in F$, discover $u$, and stop the search. Hence $Q=[u]$ and $I=(e_1,e_4,e_2,e_3)$.

\smallskip
\noindent\textbf{Append the uninspected arcs.}
Appending $e_5$ gives $(a_1,\ldots,a_5)=(e_1,e_4,e_2,e_3,e_5)$. Thus
$(v_0,v_1,v_2)=(s,a,u)$ and $(\tau_1,\tau_2)=(1,4)$.
\end{minipage}%
}
\captionsetup[figure]{hypcap=false}
\captionof{figure}{An example of the modified BFS and the resulting arc permutation.}
\label{fig:modified-bfs}
\end{center}
\endgroup

For every $0 \leq k \leq m$, define an arc set $H^k\subseteq E$ by taking its first $k$ arcs in the permutation $a_1,\ldots,a_m$ according to $F$ and its remaining arcs according to $F'$, that is,
\begin{align*}
\mathbf{1}[a_j\in H^k]=
\begin{cases}
\mathbf{1}[a_j\in F], & j \leq k,\\
\mathbf{1}[a_j\in F'], & j > k,
\end{cases}
\qquad 1 \leq j \leq m.
\end{align*}
Thus $H^m = F$ and $H^0 = F'$. It is easy to see that every $H^k$ has positive probability under $\+D_{\boldsymbol{p}}$.

We now construct the path $\gamma_{x,F'}$ from $x = (u,F)$ to $(s,F')$ by scanning the arcs in the reverse order $a_m,a_{m-1},\ldots,a_1$, starting from $(v_r,H^m)=(u,F)$.
Basically, we flip the arc if it is in $F\oplus F'$, and in the meantime go backwards from $u$ to $s$ along the discovery path.
Note that we can move the marked vertex only if the current arc is the discovery arc entering it.
Consider the iteration that scans $a_k$ and let $i \in \{0,\ldots,r\}$ be the unique index satisfying $\tau_i \leq k < \tau_{i+1}$.
The construction maintains the following invariant: at the beginning of the iteration scanning $a_k$, the current state is $(v_i,H^k)$.
The invariant holds when the scan starts, namely for $k = m$, because then $i = r$ and the starting state is $(v_r,H^m) = (u,F)$.
Now suppose the invariant holds at the beginning of the iteration that scans $a_k$. There are two cases.
\begin{itemize}
    \item If $k \notin \{\tau_1,\ldots,\tau_r\}$, keep the marked vertex $v_i$ fixed and replace $H^k$ by $H^{k-1}$. If $\mathbf{1}[a_k\in F]=\mathbf{1}[a_k\in F']$, then $H^k=H^{k-1}$ and no transition is needed. Otherwise, this change is a transition of $K_2$. Indeed, every arc on the discovery path from $s$ to $v_i$ belongs to both $H^k$ and $H^{k-1}$, so $v_i$ remains reachable after the update. The resulting state is $(v_i,H^{k-1})$.
    \item If $k=\tau_i$ for some $1 \leq i \leq r$, then $a_k=(v_{i-1},v_i)$ and $a_k\in H^k\cap F$. Use a $K_1$ transition to move the marked vertex from $v_i$ to $v_{i-1}$ and at the same time replace $H^k$ by $H^{k-1}$. If $a_k\in F'$, this is the proposal that retains $a_k$; otherwise, it is the proposal that deletes $a_k$. In either case the proposed state is $(v_{i-1},H^{k-1})$. It is feasible because every arc on the shorter discovery path from $s$ to $v_{i-1}$ still belongs to $H^{k-1}$, and the Metropolis acceptance probability is positive because both states belong to the support of $\pi$.
\end{itemize}
In both cases the invariant is maintained at the start of the next scan: in the first case $\tau_i < k$, so that $\tau_i \leq k-1 < \tau_{i+1}$ and the iteration ends in the state $(v_i,H^{k-1})$, and in the second case $k = \tau_i$, so that $\tau_{i-1} \leq k-1 < \tau_i$ and the iteration ends in the state $(v_{i-1},H^{k-1})$. After the whole reverse scan, the arc set has changed from $F$ to $F'$, and the marked vertex has moved backwards along the discovery path from $v_r=u$ to $v_0=s$.
The path $\gamma_{x,F'}$ is obtained by keeping only the non-trivial transitions, and the end state is $(v_0,H^0)=(s,F')$. Each scan contributes at most one transition, so
\begin{align}\label{eq:path-length}
\left|\gamma_{x,F'}\right| \leq m.
\end{align}

Moreover, while the marked vertex is fixed, every nontrivial
$K_2$ transition changes a distinct arc. Every nontrivial $K_1$ transition moves the marked vertex
one step backwards along the simple discovery path, so
once the construction leaves a vertex, it never returns to it.
Hence no state is visited twice, and $\gamma_{x,F'}$ is a simple
path in the transition graph of $K$.

\subsection{Analysis of the load and congestion}\label{sec:congestion}

The key property of the flow defined in the last subsection is its low load and congestion.

\begin{lemma}\label{lem:path-load}
The family of paths $(\gamma_{x,F'})_{x \in \Omega,\,(s,F') \in \Lambda}$ constructed above satisfies
\begin{align*}
L(\eta) = \sum_{x \in \Omega} \sum_{(s,F') \in \Lambda} \pi(x) \, \+D_{\boldsymbol{p}}(F') \cdot \mathbf{1}\left[\eta \in \gamma_{x,F'}\right] \leq 8nm Q(\eta)
\end{align*}
for every transition $\eta=(z,z')$ of $K$.
\end{lemma}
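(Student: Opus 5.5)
The plan is a Jerrum--Sinclair style encoding (injection) argument. Fix a transition $\eta=(z,z')$. By construction of the paths, if $\eta\in\gamma_{x,F'}$ with $x=(u,F)$, then $\eta$ is produced while scanning some position $k$. At that moment $z=(v_i,H^k)$ and $z'=(v_{i'},H^{k-1})$, where $i'\in\{i,i-1\}$.

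\textbf{Complementary state.} Alongside $H^k$ I introduce
\[
\bar H^k:\quad \text{first $k$ arcs of the permutation taken from $F'$, remaining arcs taken from $F$.}
\]
Because $\+D_{\boldsymbol p}$ is a product measure, $\+D_{\boldsymbol p}(F)\+D_{\boldsymbol p}(F')=\+D_{\boldsymbol p}(H^k)\+D_{\boldsymbol p}(\bar H^k)$.

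\textbf{Step 1 (recovering $(x,F')$ from $\eta$).} The claim is that $(x,F')$ is determined by $\eta$ together with the extra data $(u,k,\bar H^k)$, which costs a factor $n$ for $u$ and $m$ for $k$. The key observation concerns the modified BFS. The identity of the $j$-th inspected arc depends only on the membership in $F$ of the previously inspected arcs and on the target $u$. On $a_1,\dots,a_k$ the sets $F$ and $H^k$ agree. So replaying the modified BFS on $H^k$ with target $u$ for $k$ inspections recovers the set $S_k=\{a_1,\dots,a_k\}$. Once $S_k$ is known, I get $F=H^k$ on $S_k$, $F=\bar H^k$ off $S_k$, and $F'$ symmetrically. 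This is an injection, and the length of the discovery prefix never matters.

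\textbf{Step 2 (comparing weights).} Here
\[
\pi(x)\+D_{\boldsymbol p}(F')=\frac{\+D_{\boldsymbol p}(H^k)\,\+D_{\boldsymbol p}(\bar H^k)}{n\,q_u},
\qquad
\pi(z)=\frac{\+D_{\boldsymbol p}(H^k)}{n\,q_{v_i}}.
\]
For a $K_2$ step, $K(z,z')\ge \frac14\cdot\frac{\min(p_e,1-p_e)}{m}$, and the $\min$ factor is absorbed by writing the weight using whichever of $H^k,H^{k-1}$ is heavier. Summing over the free bits, $F'$ on $S_k$ sums to $1$. For a $K_1$ step, the Metropolis capacity is $\frac{1}{8m}\min\{\pi(z),\pi(z')\}$; I handle it by choosing the endpoint carrying the minimum and using the corresponding complementary set. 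These transition probabilities produce the factor $8m$.

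\textbf{Step 3 (main obstacle: the factor $1/q_u$ versus $1/q_{v_i}$).} After Steps 1 and 2, what remains for fixed $u,k$ is
\[
\sum \+D_{\boldsymbol p}\bigl(F|_{E\setminus S_k}\bigr)\cdot\frac{1}{q_u},
\]
where the sum ranges over completions for which $F$ reaches $u$. This equals $P_u/q_u$, with $P_u=\Pr[\text{reach }u \mid F|_{S_k}=H^k|_{S_k}]$. The crux is to prove $\sum_u P_u/q_u\lesssim n/q_{v_i}$, or a per-$u$ bound $P_u/q_u\le 1/q_{v_i}$ up to constants.

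My first attempt will use the exploration structure. Conditional on the BFS prefix, the unexplored arcs are independent. Reaching $u$ then requires an open path from the discovered set into the unexplored region, and that discovered set contains the discovery path to $v_i$. I will compare this to $q_u$ via a Harris/FKG-type inequality, aiming for an estimate of the form $q_u\ge q_{v_i}\cdot(\text{probability the remainder connects})$. If the per-$u$ comparison fails, the fallback is to keep the $1/q_u$ weight and let the sum over the $n$ choices of $u$ produce exactly the factor $n$ in the bound $8nm\,Q(\eta)$.
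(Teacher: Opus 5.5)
Your skeleton (the complementary set $\bar H^k$, the identity $\mathcal D_{\boldsymbol p}(F)\mathcal D_{\boldsymbol p}(F')=\mathcal D_{\boldsymbol p}(H^k)\mathcal D_{\boldsymbol p}(\bar H^k)$, an injection, then Harris) is the paper's. But Step~3 as you set it up cannot succeed. You sum over \emph{all} completions of $F$ off $S_k$ that reach $u$. This replaces the pairs that actually route through $\eta$ by $P_u=\Pr[s\to u\mid F|_{S_k}]$, which lets $u$ be reached from any discovered vertex, and then the target $P_u/q_u\lesssim 1/q_{v_i}$ is false. Here is a counterexample.
\begin{itemize}
\item Take vertices $s,w,x,y,u$. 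The arcs $(s,w)$ and $(y,u)$ have probability $1$, $(s,x)$ has probability $\delta$, $(w,y)$ has probability $\epsilon$, and $(x,u)$ has probability $1-\epsilon$. Fix the order $(s,w)<(s,x)<(w,y)<(x,u)<(y,u)$.
\item Let $F=\{(s,w),(s,x),(w,y),(y,u)\}$, and let $F'$ be any support set with $(s,x)\notin F'$. The discovery path of $u$ is $s\to w\to y\to u$, and $\gamma_{(u,F),F'}$ uses the $K_2$ move at marked vertex $w$ flipping $(s,x)$, at $k=2$.
\item Here $q_w=1$ and $q_u\le\delta+\epsilon$. Yet $P_u\ge 1-\epsilon$ via $x$, so $P_u/q_u$ is unbounded.
\item By contrast, the pairs that really traverse $\eta$ need $(w,y)\in F$ and $(x,u)\notin F$, so their mass off $S_2$ is only $\epsilon^2$.
\end{itemize}
Your fallback does not help. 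The factor $n$ from summing over $u$ is affordable only once each per-$u$ term is $O(1)$, and the obstruction is exactly the unbounded ratio $q_{v_i}/q_u$ against the $1/q_{v_i}$ in $Q(\eta)$.

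The missing idea is to keep the routing constraint inside the encoding. Since $\tau_i\le k<\tau_{i+1}$, every arc of the discovery suffix $v_i\to v_{i+1}\to\cdots\to v_r=u$ sits at a position greater than $k$. So it lies in the $F$-part of $\bar H^k$ and is not $a^\star$. Hence every encoding $\bar H^k\setminus\{a^\star\}$ lies in the event that $v_i$ reaches $u$ without $a^\star$, and the injective sum is at most $\Pr[v_i\to u]$. Harris applied to the increasing events $\{s\to v_i\}$ and $\{v_i\to u\}$ gives $q_u\ge q_{v_i}\Pr[v_i\to u]$, so each per-$u$ term is at most $1$.

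Your plan of evaluating a $K_1$ capacity at whichever endpoint carries the minimum is fine. At $z'$, the same argument works with $w'$ and $\bar H^{k-1}$, since the suffix from $w'$ starts with $a^\star$ at position $k$. The paper instead always evaluates at $z$ and uses $q_w\ge p_{a^\star}q_{w'}$.

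Separately, charging a factor $m$ for $k$ on top of the $1/m$ in the transition probability only gives $O(nm^2)\,Q(\eta)$, not $8nm\,Q(\eta)$. No charge is needed: $a^\star$ is fixed by $\eta$. The modified BFS run on $B=H^k$ coincides with the one on $F$ through the inspection of $a^\star$, or entirely if $a^\star$ is never inspected. So $k$ is determined by $(\eta,u)$ as the position of $a^\star$ in the replayed permutation. The paper obtains the same effect by replaying with $H^k$ up to $a^\star$ and with $\widetilde H^k$ afterwards.
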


\begin{proof}
Fix an oriented transition $\eta=(z,z')$ and a value $u \in V$ for the marked vertex. Consider all pairs $(F,F')$ such that $(u,F) \in \Omega$, $(s,F') \in \Lambda$, and $\gamma_{(u,F),F'}$ traverses $\eta$. Formally, define the set
\begin{align*}
    \+A_{\eta,u} \defeq  \bigl\{(F,F') \,\mid\, (u,F)\in\Omega,\ (s,F')\in\Lambda,
    \text{ and }\gamma_{(u,F),F'}\text{ traverses }\eta\bigr\}.
\end{align*}
If $\+A_{\eta,u}=\emptyset$, its contribution is zero. We may therefore assume that $\+A_{\eta,u}\neq\emptyset$. Write $z=(w,B)$ and $z'=(w',B')$. The transition $\eta$ determines a distinguished arc $a^\star$. For a $K_2$ transition, $a^\star$ is the unique arc on which $B$ and $B'$ differ. For a $K_1$ transition, the construction moves the marked vertex backwards from $w$ to $w'$, and $a^\star=(w',w)$ is its discovery arc; this arc is unique because $G$ is simple.

Fix a pair $(F,F') \in \+A_{\eta,u}$. The permutation $(a_1,\ldots,a_m)$ is determined by $(u,F)$ and may therefore vary with the pair. Let $k=k(F,F')$ be the unique position of $a^\star$ in this permutation, so that $a_k=a^\star$. The process of constructing the path $\gamma_{(u,F),F'}$ from $x = (u,F)$ to $(s,F')$ defines a sequence of arc sets $F'=H^0,H^1,\ldots,H^m=F$, and it traverses $\eta$ from $(w,H^k)=(w,B)$ to $(w',H^{k-1})=(w',B')$ while scanning $a^\star$. At this pair-dependent scan position $k$, define the encoding $\widetilde{H}^k\subseteq E$ by
\begin{align*}
\mathbf{1}[a_j\in \widetilde H^k]=
\begin{cases}
\mathbf{1}[a_j\in F'], & j \leq k,\\
\mathbf{1}[a_j\in F], & j > k,
\end{cases}
\qquad 1 \leq j \leq m.
\end{align*}
The hybrid $H^k$ and the encoding $\widetilde H^k$ are determined by $(F,F')$ together with the corresponding permutation, and their dependence on the chosen pair is suppressed in the notation.
For every arc $e\in E$, the two indicators $\mathbf{1}[e\in H^k]$ and $\mathbf{1}[e\in \widetilde H^k]$ are $\mathbf{1}[e\in F]$ and $\mathbf{1}[e\in F']$, possibly in the opposite order.
In other words, $\widetilde H^k\oplus H^k=F\oplus F'$ and $\widetilde H^k\cap H^k=F\cap F'$.
Since $\+D_{\boldsymbol{p}}$ is a product measure, we have
\begin{align}\label{eq:complementary-product}
\+D_{\boldsymbol{p}}(F)\+D_{\boldsymbol{p}}(F')
=
\+D_{\boldsymbol{p}}(H^k)\+D_{\boldsymbol{p}}(\widetilde H^k).
\end{align}

We now consider the two types of transitions separately. First, suppose that $\eta=(z,z')$ is a nontrivial transition of $K_2$. Then $w'=w$, and the arc sets $B$ and $B'$ differ exactly at $a^\star$; in particular,
$B\setminus\{a^\star\}=B'\setminus\{a^\star\}$.
Such a transition can only occur when $p_{a^\star}<1$. Since the heat-bath update chooses $a^\star$ with probability $1/m$ and $K$ chooses $K_2$ with probability $1/4$, its capacity is
\begin{align}\label{eq:k2-capacity}
Q(\eta)=\pi(z)K(z,z') =\frac{\+D_{\boldsymbol{p}}(B)}{nq_w}K(z,z') =\frac{\+D_{\boldsymbol{p},-a^\star}(B\setminus\{a^\star\})p_{a^\star}(1-p_{a^\star})}{4mnq_w},
\end{align}
where $\+D_{\boldsymbol{p},-a^\star}$ denotes the product measure on arc subsets of $E\setminus\{a^\star\}$.

Recall that $(u,F)$ needs to send $\pi(u,F)\+D_{\boldsymbol{p}}(F')$ units of demand to $(s,F')$ along $\gamma_{(u,F),F'}$.
For every $(F,F')\in\+A_{\eta,u}$, exactly one of $H^k$ and $\widetilde H^k$ contains $a^\star$, that is,
\begin{align*}
\bigl\{\mathbf{1}[a^\star\in H^k],\mathbf{1}[a^\star\in \widetilde H^k]\bigr\}=\{0,1\}.
\end{align*}
Moreover, $B=H^k$ and $B'=H^{k-1}$.
Therefore \eqref{eq:complementary-product} and \eqref{eq:k2-capacity} give
\begin{align*}
\pi(u,F)\+D_{\boldsymbol{p}}(F')
&=\frac{\+D_{\boldsymbol{p}}(F)\+D_{\boldsymbol{p}}(F')}{nq_u}
=\frac{\+D_{\boldsymbol{p}}(H^k)\+D_{\boldsymbol{p}}(\widetilde H^k)}{nq_u}\\
&=\frac{p_{a^\star}(1-p_{a^\star})}{nq_u}
\+D_{\boldsymbol{p},-a^\star}\bigl(B\setminus\{a^\star\}\bigr)
\+D_{\boldsymbol{p},-a^\star}\bigl(\widetilde H^k\setminus\{a^\star\}\bigr)\\
&=4mQ(\eta)\cdot\frac{q_w}{q_u}
\+D_{\boldsymbol{p},-a^\star}\bigl(\widetilde H^k\setminus\{a^\star\}\bigr).
\end{align*}
To upper bound the overall contribution from $\+A_{\eta,u}$, we use the following claim.
The proof is deferred until after both types of transitions have been analysed.
\begin{claim}\label{claim:replay-mass}
For the fixed $\eta$ and $u$, where $k=k(F,F')$ in each summand,
\begin{align}\label{eq:normalized-replay}
\sum_{(F,F')\in\+A_{\eta,u}} \frac{q_w}{q_u}\,
\+D_{\boldsymbol{p},-a^\star}\bigl(\widetilde H^k\setminus\{a^\star\}\bigr)
\leq 1.
\end{align}
\end{claim}
Summing first over all demands with fixed $u$, applying \Cref{claim:replay-mass} to the resulting inner sum, and then summing over all choices of $u$, we obtain
\begin{align*}
L(\eta)
&=
\sum_{u\in V}
\sum_{(F,F')\in\+A_{\eta,u}}
\pi(u,F)\+D_{\boldsymbol{p}}(F')\\
&=
4mQ(\eta)
\sum_{u\in V}
\sum_{(F,F')\in\+A_{\eta,u}}
\frac{q_w}{q_u}\,
\+D_{\boldsymbol{p},-a^\star}\bigl(\widetilde H^k\setminus\{a^\star\}\bigr)
\leq 4nmQ(\eta).
\end{align*}

It remains to consider transitions $\eta$ of $K_1$. %Suppose that $L(\eta)>0$, since otherwise the desired bound is immediate.
The transition $\eta=(z,z')$ moves the marked vertex backwards from $w$ to $w'$ while scanning the discovery arc $a^\star=(w',w)$.
Since $a^\star$ is a discovery arc, $a^\star\in B$. Moreover, $B$ and $B'$ differ at most at $a^\star$. Hence $B\setminus\{a^\star\}=B'\setminus\{a^\star\}$.
Define
\begin{align*}
\theta=\theta_{a^\star}(B')\defeq
\begin{cases}
p_{a^\star}, & a^\star\in B',\\
1-p_{a^\star}, & a^\star\notin B'.
\end{cases}
\end{align*}
Selecting the arc $a^\star$ and the corresponding one of the two proposals in $K_1$ has probability $1/(2m)$. Other proposals may lead to the same proposed state $z'$, but they can only increase the capacity.\footnote{This can occur when $a^\star\in B'$ and the reverse arc $(w,w')$ also belongs to $B$.} Hence the detailed balance condition \eqref{eq:metropolis} and the coefficient $1/4$ of $K_1$ in $K$ imply
\begin{align}\label{eq:k1-capacity-lower}
Q(\eta)
&\geq \frac{1}{8m}\min\{\pi(z),\pi(z')\}=\frac{1}{8mn}
\min\left\{
\frac{\+D_{\boldsymbol{p}}(B)}{q_w},
\frac{\+D_{\boldsymbol{p}}(B')}{q_{w'}}
\right\}\notag\\
\text{(by~$B\setminus\{a^\star\}=B'\setminus\{a^\star\}$)}\quad &=\frac{\+D_{\boldsymbol{p},-a^\star}(B\setminus\{a^\star\})}{8mn}
\min\left\{\frac{p_{a^\star}}{q_w},\frac{\theta}{q_{w'}}\right\}.
\end{align}
We will use the following inequality:
\begin{align}\label{eq:discovery-arc-reach}
  q_w \geq p_{a^\star}q_{w'}.
\end{align}
Indeed, the event that $s$ can reach $w'$ is independent of the outgoing arc $a^\star=(w',w)$: whenever $s$ can reach $w'$, there is a simple path witnessing this event, and such a path does not use $a^\star$. Thus, with probability $p_{a^\star}q_{w'}$, the vertex $s$ can reach $w'$ and the arc $a^\star$ is open.
When this happens, $s$ can reach $w$.
Since $\theta \leq 1$, and \eqref{eq:discovery-arc-reach} gives $\theta/q_{w'} \geq p_{a^\star}\theta/q_w$, both terms in the minimum in \eqref{eq:k1-capacity-lower} are at least $p_{a^\star}\theta/q_w$. Consequently,
\begin{align}\label{eq:k1-capacity}
Q(\eta)
\geq
\frac{\+D_{\boldsymbol{p},-a^\star}(B\setminus\{a^\star\})p_{a^\star}\theta}{8mnq_w}.
\end{align}
For every $(F,F')\in\+A_{\eta,u}$, we have $H^k=B$ and $H^{k-1}=B'$. Hence $a^\star\in H^k$ and
$
\mathbf{1}[a^\star\in \widetilde H^k]
=\mathbf{1}[a^\star\in F']
=\mathbf{1}[a^\star\in B'].
$
Applying \eqref{eq:complementary-product} and \eqref{eq:k1-capacity},
\begin{align*}
\pi(u,F)\+D_{\boldsymbol{p}}(F')
&=\frac{\+D_{\boldsymbol{p}}(F)\+D_{\boldsymbol{p}}(F')}{nq_u}
=\frac{\+D_{\boldsymbol{p}}(H^k)\+D_{\boldsymbol{p}}(\widetilde H^k)}{nq_u}\\
&=
\frac{\+D_{\boldsymbol{p},-a^\star}(B\setminus\{a^\star\})p_{a^\star}\theta
\+D_{\boldsymbol{p},-a^\star}\bigl(\widetilde H^k\setminus\{a^\star\}\bigr)}{nq_u}\leq
8mQ(\eta)\cdot \frac{q_w}{q_u}
\+D_{\boldsymbol{p},-a^\star}\bigl(\widetilde H^k\setminus\{a^\star\}\bigr).
\end{align*}
As in the $K_2$ case, summing first over all demands with fixed $u$, applying \Cref{claim:replay-mass} to the resulting inner sum, and then summing over all choices of $u$, we obtain
\begin{align*}%\label{eq:k1-load}
L(\eta)
&=
\sum_{u\in V}
\sum_{(F,F')\in\+A_{\eta,u}}
\pi(u,F)\+D_{\boldsymbol{p}}(F')
\leq 8nmQ(\eta).
\end{align*}
This completes the proof of \Cref{lem:path-load}, subject to \Cref{claim:replay-mass}.
\end{proof}

\smallskip 

We then prove \Cref{claim:replay-mass}.
We will use the following classical positive-correlation inequality for
product measures.
Let $I$ be a finite set and, for each $i\in I$, let $\mu_i$ be a probability
measure on $\{0,1\}$. Let $\mu=\bigotimes_{i\in I}\mu_i$ be the product measure on $\{0,1\}^{I}$.
On the space $\{0,1\}^{I}$, define a partial order $x \preceq y$ if $x_i\leq y_i$ for every $i\in I$. We say that an event $A \subseteq \{0,1\}^{I}$ is increasing if $x \in A$ and $x \preceq y$ implies $y \in A$.

\begin{lemma}[Harris's inequality~\cite{Harris1960}]\label{lem:harris}
If $\+A$ and $\+B$ are increasing
events, $\mu(\+A\cap\+B)\geq \mu(\+A)\mu(\+B)$.
\end{lemma}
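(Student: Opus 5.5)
We prove the slightly stronger functional form, which specialises to \Cref{lem:harris} by taking indicators: if $f,g\colon\{0,1\}^I\to\mathbb{R}$ are both increasing (that is, $x\preceq y$ implies $f(x)\leq f(y)$ and $g(x)\leq g(y)$), then $\E_\mu[fg]\geq \E_\mu[f]\,\E_\mu[g]$. Taking $f=\mathbf{1}_{\+A}$ and $g=\mathbf{1}_{\+B}$, which are increasing exactly when $\+A$ and $\+B$ are increasing events, gives $\mu(\+A\cap\+B)\geq\mu(\+A)\mu(\+B)$. We argue by induction on $|I|$. The case $|I|=0$ is trivial, since then $f$ and $g$ are constants.

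\textbf{Base case $|I|=1$.} Here $\mu$ is a measure on $\{0,1\}$, and we use the symmetrisation identity
\begin{align*}
\E_\mu[fg]-\E_\mu[f]\E_\mu[g]=\frac12\sum_{x,y\in\{0,1\}}\mu(x)\mu(y)\bigl(f(x)-f(y)\bigr)\bigl(g(x)-g(y)\bigr).
\end{align*}
Since $\{0,1\}$ is totally ordered and $f,g$ are both increasing, $f(x)-f(y)$ and $g(x)-g(y)$ have the same sign for every pair $(x,y)$. Hence every summand is nonnegative. The same argument works for any totally ordered finite set, and this is the form used in the inductive step.

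\textbf{Inductive step.} Fix $i_0\in I$, write $J=I\setminus\{i_0\}$, and write $x=(x_J,x_{i_0})$. Because $\mu=\mu_J\otimes\mu_{i_0}$ is a product measure, conditioning on $x_{i_0}=b$ leaves $x_J\sim\mu_J$. For each fixed $b$, the functions $f(\cdot,b)$ and $g(\cdot,b)$ are increasing on $\{0,1\}^J$. The induction hypothesis therefore gives
\begin{align*}
\E_{\mu_J}[f(\cdot,b)g(\cdot,b)]\geq \bar f(b)\,\bar g(b),
\end{align*}
where $\bar f(b)\defeq\E_{\mu_J}[f(\cdot,b)]$ and $\bar g(b)\defeq\E_{\mu_J}[g(\cdot,b)]$. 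Next, $\bar f$ is increasing in $b$: for every $x_J$ we have $f(x_J,0)\leq f(x_J,1)$, and we average both sides against the same measure $\mu_J$. This is the key point where the product structure is used. The same holds for $\bar g$. Integrating over $b\sim\mu_{i_0}$ and applying the base case to $\bar f,\bar g$ yields
\begin{align*}
\E_\mu[fg]\geq\E_{\mu_{i_0}}[\bar f\bar g]\geq\E_{\mu_{i_0}}[\bar f]\,\E_{\mu_{i_0}}[\bar g]=\E_\mu[f]\,\E_\mu[g].
\end{align*}

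The only step needing care is the monotonicity of the conditional averages $\bar f,\bar g$. It relies on independence: the conditional law of $x_J$ must not depend on $x_{i_0}$. Without the product structure this step, and hence the inequality, can fail. Everything else is routine bookkeeping.
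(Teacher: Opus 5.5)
Your proof is correct and complete. The symmetrisation identity settles the one-coordinate case. The inductive step correctly uses the product structure to make the averaged functions $\bar f,\bar g$ monotone in $x_{i_0}$ before applying the one-coordinate case.

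The paper gives no proof of \Cref{lem:harris}; it cites it as a classical fact, so there is no competing argument to compare against. What you have written is the standard induction-on-coordinates proof of Harris's inequality for product measures, and it makes the paper's use self-contained.

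You prove the stronger functional form. The paper only needs the event version, applied to the increasing events ``$s$ can reach $w$'' and ``$w$ can reach $u$'' in \eqref{eq:harris-load}. Your argument never divides by a marginal probability, since you average over $b\sim\mu_{i_0}$ rather than conditioning on $x_{i_0}=b$. It therefore also covers degenerate coordinates with $p_e=1$, which the paper explicitly allows.
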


\begin{proof}[Proof of \Cref{claim:replay-mass}]
For each pair $(F,F')\in\+A_{\eta,u}$, let $k=k(F,F')$ be as above and call the arc set $\widetilde H^k\setminus\{a^\star\} \subseteq E \setminus \{a^\star\}$ its \emph{encoding}. We first show that the encoding map is injective even though $k$ is not fixed, namely that $(F,F')$ can be uniquely recovered from the encoding $\widetilde H^k\setminus\{a^\star\}$ even if the value of $k$ is not known. The transition $\eta$ determines the arc sets $H^k=B$ and $H^{k-1}=B'$ before and after the transition, and hence also determines whether $a^\star$ belongs to $\widetilde H^k$, because
\begin{align*}
\mathbf{1}[a^\star\in \widetilde H^k]
=\mathbf{1}[a^\star\in F']
=\mathbf{1}[a^\star\in H^{k-1}].
\end{align*}
Thus the encoding, together with $\eta$, determines both $H^k$ and the entire arc set $\widetilde H^k$. It remains to recover the permutation and the position $k$. Recall that constructing the permutation requires running the modified BFS from $s$ until $u$ is discovered. We replay this BFS without knowing $k$, using $H^k$ to determine membership in $F$ until and including the inspection of $a^\star$, and using $\widetilde H^k$ strictly after $a^\star$ has been inspected.

More precisely, suppose that the first $\ell-1$ inspected arcs have been recovered. These arcs and their recovered inspection outcomes determine the full BFS control state---the queue, the discovered vertices, the vertex currently being explored, and its outgoing-arc cursor---and hence determine the next inspected arc $a_\ell$. If $a^\star$ has not yet been inspected, use $\mathbf{1}[a_\ell\in H^k]$ as the indicator of membership in $F$, and switch to $\widetilde H^k$ only after processing $a_\ell$ when $a_\ell=a^\star$. If $a^\star$ has already been inspected, use $\mathbf{1}[a_\ell\in\widetilde H^k]$. In either case this indicator determines the outcome of the inspection, and hence the next BFS control state. Inductively, this recovers the inspected arcs in their original order until $u$ is discovered.

If $a^\star$ has been inspected by that time, its position in the recovered sequence is $k$. Otherwise, the BFS has stopped before inspecting $a^\star$; append the remaining uninspected arcs in the fixed order, as in the original construction, and locate $a^\star$ in this appended suffix. This again recovers both the entire permutation $a_1,\ldots,a_m$ and the position $k$. Finally, for positions at most $k$, the indicators of $F$ and $F'$ are respectively those of $H^k$ and $\widetilde H^k$, while after position $k$ their roles are reversed. Hence both $F$ and $F'$ are recovered, proving that the encoding map is injective.

We have also recovered the discovery path from $s$ to $u$. Recall that $\eta=(z,z')$ is fixed and that $w$ is the marked vertex of $z$; in particular, $w$ is determined by $\eta$. Write $w=v_i$ on the recovered discovery path. The invariant in the path construction gives $\tau_i\leq k<\tau_{i+1}$ (and $k=\tau_i$ in the $K_1$ case). Therefore, every arc on the suffix
\begin{align*}
w=v_i \longrightarrow v_{i+1} \longrightarrow \cdots \longrightarrow v_r=u
\end{align*}
of the discovery path has position greater than $k$. Hence every such arc belongs to $\widetilde H^k$ and none of them is the arc $a^\star$. We obtain the following key observation.
\begin{observation}\label{obs:key}
The vertex $w$ can reach $u$ via arcs in $\widetilde H^k \setminus \{a^\star\}$.
\end{observation}

The injectivity proved above shows that distinct pairs $(F,F')\in\+A_{\eta,u}$ give distinct encodings. Moreover, by \Cref{obs:key}, each such encoding contains a path from $w$ to $u$. Since this path does not use $a^\star$, we obtain
\begin{align}\label{eq:replay-mass}
\sum_{(F,F')\in\+A_{\eta,u}}
\+D_{\boldsymbol{p},-a^\star}\bigl(\widetilde H^k\setminus\{a^\star\}\bigr)
&\leq
\Pr_{\+D_{\boldsymbol{p}}}\left[w\text{ can reach }u\text{ without using }a^\star\right]\notag\\
&\leq
\Pr_{\+D_{\boldsymbol{p}}}\left[w\text{ can reach }u\right].
\end{align}
To apply \Cref{lem:harris}, identify each arc set $S\subseteq E$ with its
indicator vector in $\{0,1\}^{E}$.  Under this identification,
$\+D_{\boldsymbol{p}}=\bigotimes_{e\in E}\operatorname{Bernoulli}(p_e)$ is a
product measure, and the coordinatewise partial order is precisely set
inclusion of arc sets.  The events that $s$ can reach $w$ and that $w$ can
reach $u$ are increasing: if an arc set contains a path witnessing either
event, then every larger arc set contains the same path.  Applying
\Cref{lem:harris} to these two events, and observing that their intersection
implies that $s$ can reach $u$, gives
\begin{align}\label{eq:harris-load}
q_u
&\geq
\Pr_{\+D_{\boldsymbol{p}}}\left[s\text{ can reach }w\text{ and }w\text{ can reach }u\right] \notag\\
&\geq \Pr_{\+D_{\boldsymbol{p}}}\left[s\text{ can reach }w\right]\Pr_{\+D_{\boldsymbol{p}}}\left[w\text{ can reach }u\right] =
q_w\Pr_{\+D_{\boldsymbol{p}}}\left[w\text{ can reach }u\right].
\end{align}
Combining \eqref{eq:replay-mass} and \eqref{eq:harris-load} proves \Cref{claim:replay-mass} and completes the proof.
\end{proof}

\subsubsection{Proof of the lower bound on the spectral gap}\label{subsec:gap}

Finally, we put all the pieces together to prove the lower bound on the spectral gap in \Cref{thm:gap}.

\begin{proof}[Proof of \Cref{thm:gap}]
First consider the ideal weights $c_v^{*}=1/q_v$; recall that in the
preceding construction this ideal chain was denoted by $K$ after dropping
the stars.  By
\eqref{eq:path-length}, the paths constructed above have length at most
$\ell=m$, and by \Cref{lem:path-load} their load satisfies
$L(\eta)\leq 8nmQ(\eta)$ for every transition $\eta$.  Thus we may take
$B=8nm$ in \Cref{lem:route}, which gives
\begin{align*}
\gap(K^{*})
\geq \frac{1}{4\ell B}
\geq \frac{1}{4m\cdot 8nm}
=\frac{1}{32nm^2}.
\end{align*}
For the original weights $\boldsymbol{c}$, \Cref{lem:ideal} now yields
\begin{align*}
\gap(K)
\geq \frac{1}{4096}\gap(K^{*})
= 
\Omega\left(\frac{1}{nm^2}\right). &\qedhere
\end{align*}
\end{proof}

\subsection{Lower bounds and discussion on the mixing time} \label{sec:lb}

Now we discuss some possible improvement and limitations of the Markov chain.
First, to avoid the $\Omega(n^3)$ lower bound on paths (or multiple paths between $s$ and $t$), we can change $K_1$ to propose an adjacent arc from the marked vertex instead of a uniformly at random arc, 
and then use the Metropolis filter to get the correct stationary distribution.
This change would improve the mixing time to $O(n^2)$ on paths, but it would not improve the overall upper bound in the mixing time analysis above.
It is because the contribution in the congestion from $K_1$ and $K_2$ in the current analysis are of the same order, and improving the contribution from $K_1$ alone would not change the final bound.
For simplicity of presentation, we choose to use the random arc proposal version of $K_1$.

On the other hand, even for the adjacent proposal version of $K_1$,
there is an $\Omega(n^3)$ lower bound on the inverse spectral gap.
Here is an example.
Fix $p$ to be the open probability for all edges.
Let $r$ be an odd number, and $s=v_0,v_1,\dots,v_r$ be a directed path from $s$ to $v_r$,
and introduce new vertices $u_1,\dots,u_r$.
Add arcs $(v_i,u_j)$ for $0\le i\le r$ and $1\le j\le r$ unless $i=j\neq r$.
Moreover, add a set of arcs from a matching $(u_1,u_2)$, $(u_3,u_4)$, \dots, $(u_{r-2},u_{r-1})$.
This way, all vertices have the same number of adjacent arcs, which is not necessary for the lower bound but makes the calculation easier.
Moreover, $n=2r+1$, and $m=\Theta(n^2)$.
To show the lower bound on the inverse spectral gap, we use \eqref{eq:poincare} and let the testing function $f$ be defined by $f(v_i,F)=i$ for any $F$ and $0\le i\le r$,
and $f(u_i,F)=0$ for any $F$ and $1\le i\le r$.
Since $I$ and $K_2$ do not change the marked vertex, they do not contribute anything to the Dirichlet form, and only the contribution from $K_1$ matters.
We omit the calculation details, and the conclusion here is that $\Var_{\pi}(f)=\Theta(n^2)$ and $\+E_K(f,f)=O_p(n^{-1})$, which implies the $\Omega(n^3)$ lower bound on the inverse spectral gap.
Essentially, this is still the path example, except that a lot of arcs are added to reduce the useful proposal probability of arcs in $K_1$.

In any case, our mixing time bound is still a polynomial factor away from the lower bound above, especially in dense graphs.
To further improve the mixing time analysis, one may need the local coupling technique from \cite{CFJMYZ25}.
However that appears to be non-trivial and we leave that for future work.

\section{Remaining proofs for FPRAS}\label{sec:missing}

In this section, we give the missing proofs for the FPRAS, namely Item \eqref{item:estimator} of \Cref{lem:sampling} and the proof of \Cref{lem:schedule}. Both proofs are standard and straightforward and we give them here for completeness.

%Assign to each vertex $v \in V$ a positive weight $c_v$, normalised so that $c_s = 1$, and replace the distribution above by
% \begin{align*}
% \pi(v,F) = \frac{c_v \+D_{\boldsymbol{p}}(F)}{\sum_{u \in V} c_u q_u}, \qquad (v,F) \in \Omega.
% \end{align*}
% 
\begin{proof}[Proof of Item \eqref{item:estimator} of \Cref{lem:sampling}]
    Abbreviate $q_v=q_v(\boldsymbol{p})$ and $\pi=\pi_{\boldsymbol{c},\boldsymbol{p}}$, and let
    \begin{align*}
    Z=\sum_{u\in V}c_uq_u \qquad\text{and}\qquad r_v=\sum_{F\in\Omega_v}\pi(v,F)=\frac{c_vq_v}{Z}.
    \end{align*}
    Thus, $r_v$ is the probability that the marked vertex of a sample from $\pi$ equals $v$. The hypothesis on the weights says that $c_vq_v\in\left[\frac{1}{4},4\right]$ for every $v\in V$, so $Z\in\left[\frac{n}{4},4n\right]$ and hence
    \begin{align}\label{eq:vertex-marginal}
        r_v\geq\frac{1}{16n} \qquad \forall v\in V.
    \end{align}
    
    Define the number of samples as
    \begin{align*}
    N\defeq\left\lceil \frac{768n}{\varepsilon^2}\log\left(\frac{8n}{\delta}\right) \right\rceil =O\left(\frac{n}{\varepsilon^2}\log\left(\frac{n}{\delta}\right)\right).
    \end{align*}
    Let $X_1,\ldots,X_N$ be independent ideal samples from $\pi$, and let
    $N_v$ be the number of these samples whose marked vertex equals $v$.
    Then $\E[N_v]=Nr_v$. By the multiplicative Chernoff bound
    and~\eqref{eq:vertex-marginal}, by setting $\alpha=\frac{\varepsilon}{4}$, we have
    \begin{align*}
    \Pr\left[\left|N_v-Nr_v\right|>\alpha Nr_v\right] \leq 2\exp\left(-\frac{\alpha^2Nr_v}{3}\right) \leq 2\exp\left(-\frac{\varepsilon^2N}{768n}\right) \leq\frac{\delta}{4n}.
    \end{align*}
    Therefore, with probability at least $1-\frac{\delta}{4}$, we have
    $(1-\alpha)Nr_v \leq N_v\leq (1+\alpha)Nr_v$ for every $v\in V$.
    Since $q_s=1$, the definition
    of $r_v$ gives, for the ideal-sample ratio estimator
    $\overline{q}_v=\frac{c_s}{c_v}\frac{N_v}{N_s}$,
    \begin{align*}
    \frac{\overline{q}_v}{q_v}=\frac{c_s}{c_vq_v}\cdot\frac{N_v}{N_s}=\frac{r_s}{r_v}\cdot\frac{N_v}{N_s}=\frac{N_v/(Nr_v)}{N_s/(Nr_s)}\in \left[ \frac{1-\alpha}{1+\alpha}, \frac{1+\alpha}{1-\alpha} \right] \subseteq[1-\varepsilon,1+\varepsilon],\qquad \forall v\in V,
    \end{align*}
    where the last inclusion follows from
    $\alpha=\frac{\varepsilon}{4}$ and $\varepsilon<1$.

    We now replace the ideal samples by actual ones. Run $N$ independent
    copies of the Markov chain in Item \eqref{item:MC}  of \Cref{lem:sampling}.
    For every copy $j$, independently draw $F_{0,j}\sim\+D_{\boldsymbol p}$,
    start the chain from $(s,F_{0,j})$, use independent transition randomness,
    and run it with total variation error bounded by
    $\eta=\frac{\delta}{4N}$. Denote the outputs of these uncapped runs by
    $\widetilde{X}_1,\ldots,\widetilde{X}_N$, and let $\widetilde N_v$ be the number of these outputs whose marked vertex is $v$. By the coupling
    characterisation of the total variation distance and a union bound, the
    ideal and actual samples can be coupled identically with probability at least $1 - N\eta = 1 - \frac{\delta}{4}$.

    It remains to impose a worst-case running-time bound. Let $B$ be the bound guaranteed by Item~\eqref{item:MC} of \Cref{lem:sampling} with $\eta=\frac{\delta}{4N}$ and $\zeta=\frac{\delta}{2N}$, so that $B=O\left(n^2m^2\log\left(\frac{4nN}{\delta}\right)+m\log\left(\frac{2N}{\delta}\right)\right)$. Each uncapped run takes at most $B$ time except with probability at most $\frac{\delta}{2N}$. Hence, by a union bound, the total running time of all $N$ uncapped runs is at most $NB$ with probability at least $1-\frac{\delta}{2}$. The algorithm simulates the runs sequentially while maintaining an operation counter, including during every breadth-first search, and stops before executing any operation that would cause the count to exceed $NB$, in which case it returns the all-one vector. If all runs finish before the cutoff but $\widetilde N_v=0$ for some $v\in V$, it also returns the all-one vector. Otherwise, it returns
    \begin{align*}
    \hat q_v=\frac{c_s}{c_v}\frac{\widetilde N_v}{\widetilde N_s}
    \qquad (v\in V).
    \end{align*}
    On the intersection of the concentration and coupling events above, $\widetilde N_v=N_v>0$ for every $v$, and the returned estimates equal the ideal-sample estimates $\overline q_v$. Thus the zero-count fallback introduces no additional failure event. The operation cutoff bounds the running time in the worst case and changes the output only on an event of probability at most $\frac{\delta}{2}$. A union bound over the concentration, coupling, and cutoff events proves the claimed approximation guarantee with probability at least
    $1-\frac{\delta}{4}-\frac{\delta}{4}-\frac{\delta}{2}=1-\delta$.

    Substituting the choice of $N$ into $NB$ gives
    \begin{align*}
    O\left(
      \frac{n^3m^2}{\varepsilon^2}
      \log\left(\frac{n}{\varepsilon\delta}\right)
      \log\left(\frac{n}{\delta}\right)
    \right).
    \end{align*}
    The additional time needed to form the counts and compute the estimates
    is absorbed in this bound, which completes the proof.
\end{proof}

\begin{proof}[Proof of \Cref{lem:schedule}]
    Set the length of the schedule as
    \begin{align*}
    L=\min\left\{j\geq1:\beta^j\leq p_{\min}\right\} = O\left(n\log\left(\frac{2}{p_{\min}}\right)\right).
    \end{align*}
    Indeed, since $\log(1/\beta)=-\log(1-1/(4n))\geq 1/(4n)$,
    \begin{align*}
    L\leq 1+\frac{\log(1/p_{\min})}{\log(1/\beta)}
    \leq 1+4n\log\left(\frac{1}{p_{\min}}\right)
    =O\left(n\log\left(\frac{2}{p_{\min}}\right)\right).
    \end{align*}
    For every $e\in E$ and $0\leq i\leq L$, define $p_e^{(i)}=\max\{p_e,\beta^i\}$. We verify that this construction satisfies \Cref{def:annealing}.
    By the definition of $L$,
    \begin{align*}
    p_e^{(0)}=\max\{p_e,1\}=1,
    \qquad p_e^{(L)}=\max\left\{p_e,\beta^L\right\}=p_e.
    \end{align*}
    Thus it satisfies the endpoints $\boldsymbol{p}^{(0)} = \boldsymbol{1}$ and $\boldsymbol{p}^{(L)} = \boldsymbol{p}$.
    
    For every $e\in E$ and $1\leq i\leq L$,
    \begin{align*}
    \beta p_e^{(i-1)} &= \beta\max\left\{p_e,\beta^{i-1}\right\}=\max\left\{\beta p_e,\beta^i\right\}\\
    &\leq \max\left\{p_e,\beta^i\right\} = p_e^{(i)} \leq p_e^{(i-1)}.
    \end{align*}
    Hence the sequence satisfies \Cref{def:annealing}.
    
    Finally, the thresholds can be computed recursively, and constructing
    each probability vector requires one scan over the $m$ arcs. The total running time is
    $O\left(m(L+1)\right)=O(mL)$.
\end{proof}

\section{Hardness for unreliability}\label{sec:hardness}

At last, we show complementary hardness for the $s-t$ unreliability,  \Cref{thm:unrel-hard}.
Let us define some notation first.
Let $G=(V,E)$ be an undirected graph.
We consider the case where all edges open with the same probability independently.
This only makes our hardness result stronger.
Let $G_p$ be the random subgraph where each edge $e\in E$ is open with probability $p$ independently.
Then the $s-t$ unreliability is defined as follows, where $s,t\in V$:

\begin{align*}
  \UnRel_G(s,t;p) \defeq \Pr_{G_p}[s \not\to t] = \sum_{\substack{F \subseteq E:\\ s \text{ cannot reach } t \text{ in } (V,F)}} p^{\abs{F}}(1-p)^{\abs{E\setminus F}}.
\end{align*}

Now we are ready to prove \Cref{thm:unrel-hard}. We note that the main construction here is due to Provan and Ball \cite{PB83}.

\begin{proof}[Proof of \Cref{thm:unrel-hard}]
  Given a bipartite graph $G=(L\cup R,E)$, where $L$ and $R$ are the two vertex classes.
  We may assume that $G$ contains no isolated vertices.
  We first construct a new graph $H=(V,E_H)$ as follows.
  The vertex set is $V=L\cup R\cup\{s,t\}$.
  For each edge $\{u,v\}\in E$, where $u\in L$ and $v\in R$,
  add one copy of each of the edges $\{s,u\}$, $\{u,v\}$, and $\{v,t\}$ to $E_H$.
  Note that this will introduce parallel edges, and $\abs{E_H}=3\abs{E}$.
  An illustration is given in \Cref{fig:bis-to-mincut}.

 \begin{figure}[H]
    \centering
    \begin{tikzpicture}[
      vertex/.style={circle,draw,fill=white,inner sep=1.5pt,minimum size=5mm},
      >=stealth
    ]
      % The input bipartite graph G.
      \node[vertex] (u1) at (0,1) {$u_1$};
      \node[vertex] (u2) at (0,-1) {$u_2$};
      \node[vertex] (v1) at (1.8,1) {$v_1$};
      \node[vertex] (v2) at (1.8,-1) {$v_2$};
      \draw (u1) -- (v1);
      \draw (u1) -- (v2);
      \draw (u2) -- (v2);
      \node at (0,1.65) {$L$};
      \node at (1.8,1.65) {$R$};
      \node at (0.9,-1.65) {$G$};

      % Transformation arrow.
      \draw[->,thick] (2.5,0) -- (3.7,0);

      % The graph H. Each line style tracks one edge of G.
      \node[vertex] (s) at (4.4,0) {$s$};
      \node[vertex] (hu1) at (5.9,1) {$u_1$};
      \node[vertex] (hu2) at (5.9,-1) {$u_2$};
      \node[vertex] (hv1) at (7.7,1) {$v_1$};
      \node[vertex] (hv2) at (7.7,-1) {$v_2$};
      \node[vertex] (t) at (9.2,0) {$t$};

      \draw (s) to[bend right=14] (hu1);
      \draw (s) to[bend left=14] (hu1);
      \draw (s) -- (hu2);

      \draw (hu1) -- (hv1);
      \draw (hu1) -- (hv2);
      \draw (hu2) -- (hv2);

      \draw (hv1) -- (t);
      \draw (hv2) to[bend left=14] (t);
      \draw (hv2) to[bend right=14] (t);

      \node at (5.9,1.65) {$L$};
      \node at (7.7,1.65) {$R$};
      \node at (6.8,-1.65) {$H$};
    \end{tikzpicture}
    \caption{The transformation from $G$ to $H$. Each edge $\{u,v\}\in E$, with
      $u\in L$ and $v\in R$, contributes one copy of each of
      $\{s,u\}$, $\{u,v\}$, and $\{v,t\}$.}
    \label{fig:bis-to-mincut}
  \end{figure}  

  We call a subset $S$ of vertices an \emph{$s$-set} if $s\in S$ and $t\not\in S$.
  Let $\delta_H(S)$ denote the cut set induced by $S$ in $H$, and let $I(S)=(L\setminus S) \cup (S\cap R)$. 
  We claim that
  \begin{align}\label{eqn:cut-S}
    \abs{\delta_H(S)} = \abs{E} + 2e_H(I(S)),
  \end{align}
  where $e_H(I(S))$ denotes the number of edges in $H$ between vertices of $I(S)$.
  Note that by definition, $I(S)\subseteq L\cup R$ and hence $e_H(I(S))=e_G(I(S))$.
  To see the claim, consider each edge $\{u,v\}\in E$.
  It contributes $3$ to the cut if both $u$ and $v$ are in $I(S)$,
  and contributes $1$ otherwise.
  Thus, by \eqref{eqn:cut-S}, $S$ defines a minimum $s-t$ cut in $H$ if and only if $I(S)$ is an independent set in $G$,
  and this is a bijection.

  Now, suppose the number of independent sets in $G$ is $Z$, and the open probability is $q>0$.
  Let $n\defeq\abs{L\cup R}$ and $m\defeq\abs{E}$.
  We have that
  \begin{align}    \label{eqn:unrel-bis}
    Z (1-q)^{m}q^{2m}\le\UnRel_H(s,t;q) \le Z (1-q)^{m} + 2^n(1-q)^{m+2}.
  \end{align}
  For each $F\subseteq E_H$ such that $s$ cannot reach $t$ in $(V,F)$,
  it naturally induces an $s$-set $S_F$, which is the set of vertices reachable from $s$.
  For any $s$-set $S$ that can be induced this way, the total probability of all $F$ such that $S_F=S$ is at least $(1-q)^{\abs{\delta_H(S)}}q^{3m-\abs{\delta_H(S)}}$ and at most $(1-q)^{\abs{\delta_H(S)}}$.
  Then the lower bound follows from the aforementioned bijection.
  For the upper bound, we use \eqref{eqn:cut-S}, and the two terms correspond to minimum $s-t$ cuts and other $s-t$ cuts, respectively.
  Thus, the estimates \eqref{eqn:unrel-bis} imply that if $1-q=\exp(-C n/\eps)$ for a sufficiently large constant $C$, $\frac{\UnRel_H(s,t;q)}{(1-q)^{m}}$ is a $1\pm\eps$ approximation of $Z$.

  The remaining question is how to implement exponentially small close probability.
  This is easy because we can replace each edge in $H$ by $\ell$ parallel paths of length $2$.
  If each new edge is open with probability $p$, then effectively, this implements a close probability of $(1-p^2)^{\ell}$, and also removes the use of parallel edges.
  For our needs, it suffices to set $\ell\defeq O(n/\eps)$.
  The resulting graph has size polynomial in $n$ and $1/\eps$, which finishes the reduction.
\end{proof}

\section*{Acknowledgments and use of AI}

Weiming Feng thanks Tianyi Zhang for extensive and insightful discussions on the two-terminal reliability problem during his time at ETH Zürich in 2024. He also acknowledges support from the Hong Kong Research Grants Council under ECS Grant No.~27202725.

The key ideas underlying the algorithm were discovered by GPT-5.6 Sol Ultra. 
In fact, HG ran the model for >10 hours each to search for an FPRAS for undirected $s-t$ reliability and a hardness proof for it, and failed at both tasks. 
WF also had a similar experience independently for undirected $s-t$ reliability.
The decisive success came from YF. 
He ran the model on a prompt adapted from the one OpenAI used for the cycle double cover problem, asking it to remove the acyclicity restriction from the result of Feng and Guo~\cite{FengGuo2024}. 
It appears that directly asking for an FPRAS for directed $s-t$ reliability is the right choice, even though it is the harder question.
The run lasted 20--21 hours and produced an algorithm based on a Markov chain that samples from a distribution more complicated than the one in \Cref{def:pi}. That construction is not used verbatim here, but two of its central ideas are retained: the annealing scheme and the flow used to analyse the Markov chain.

For consistency with standard mathematical exposition, the words ``we'' and ``our'' are used throughout the paper, including when presenting ideas that originate in the output of the model. The authors verified and simplified the algorithm and the Markov chain proposed by the model, improved the running time, and wrote all of the proofs. They take full responsibility for the correctness of every statement and argument in the paper. GPT-5.6 Sol and Opus 5 were also used to edit the text.

\bibliographystyle{alpha}
\bibliography{ref}

\newcommand{\etalchar}[1]{$^{#1}$}
\begin{thebibliography}{AvBGM25}

\bibitem[ACJR21]{ArenasEtAl2021}
Marcelo Arenas, Luis~Alberto Croquevielle, Rajesh Jayaram, and Cristian
  Riveros.
\newblock {\#NFA} admits an {FPRAS}: Efficient enumeration, counting, and
  uniform generation for logspace classes.
\newblock {\em J. ACM}, 68(6):48:1--48:40, 2021.

\bibitem[AvBGM25]{AmarilliEtAl2025}
Antoine Amarilli, Timothy van Bremen, Octave Gaspard, and Kuldeep~S. Meel.
\newblock Approximating queries on probabilistic graphs.
\newblock {\em Log. Methods Comput. Sci.}, 21(4):30:1--30:31, 2025.

\bibitem[Bal80]{Ball80}
Michael~O. Ball.
\newblock Complexity of network reliability computations.
\newblock {\em Networks}, 10(2):153--165, 1980.

\bibitem[Bal86]{Ball86}
Michael~O. Ball.
\newblock Computational complexity of network reliability analysis: An
  overview.
\newblock {\em IEEE Trans. Reliab.}, 35(3):230--239, 1986.

\bibitem[BP83]{BP83}
Michael~O. Ball and J.~Scott Provan.
\newblock Calculating bounds on reachability and connectedness in stochastic
  networks.
\newblock {\em Networks}, 13(2):253--278, 1983.

\bibitem[CFJ{\etalchar{+}}25]{CFJMYZ25}
Xiaoyu Chen, Weiming Feng, Zhe Ju, Tianshun Miao, Yitong Yin, and Xinyuan
  Zhang.
\newblock Faster mixing of the {Jerrum-Sinclair} chain.
\newblock In {\em {FOCS}}, pages 1007--1028. {IEEE}, 2025.

\bibitem[CGZZ24]{CGZZ24}
Xiaoyu Chen, Heng Guo, Xinyuan Zhang, and Zongrui Zou.
\newblock Near-linear time samplers for matroid independent sets with
  applications.
\newblock In {\em RANDOM}, pages 32:1--32:12, 2024.

\bibitem[Col87]{Col87}
Charles~J. Colbourn.
\newblock {\em The Combinatorics of Network Reliability}, volume~4 of {\em The
  International Series of Monographs on Computer Science}.
\newblock Oxford University Press, New York, 1987.

\bibitem[DGGJ04]{DGGJ03}
Martin~E. Dyer, Leslie~Ann Goldberg, Catherine~S. Greenhill, and Mark Jerrum.
\newblock The relative complexity of approximate counting problems.
\newblock {\em Algorithmica}, 38(3):471--500, 2004.

\bibitem[DS91]{DiaconisStroock1991}
Persi Diaconis and Daniel Stroock.
\newblock Geometric bounds for eigenvalues of {Markov} chains.
\newblock {\em Ann. Appl. Probab.}, 1(1):36--61, 1991.

\bibitem[FG24]{FengGuo2024}
Weiming Feng and Heng Guo.
\newblock An {FPRAS} for two terminal reliability in directed acyclic graphs.
\newblock In {\em ICALP}, pages 62:1--62:19, 2024.

\bibitem[GGL16]{GGL16}
Leslie~Ann Goldberg, Rob Gysel, and John Lapinskas.
\newblock Approximately counting locally-optimal structures.
\newblock {\em J. Comput. System Sci.}, 82(6):1144--1160, 2016.

\bibitem[GJ19]{GuoJerrum2019}
Heng Guo and Mark Jerrum.
\newblock A polynomial-time approximation algorithm for all-terminal network
  reliability.
\newblock {\em SIAM J. Comput.}, 48(3):964--978, 2019.

\bibitem[GM20]{GoharshadyMohammadi2020}
Amir~Kafshdar Goharshady and Fatemeh Mohammadi.
\newblock An efficient algorithm for computing network reliability in small
  treewidth.
\newblock {\em Reliab. Eng. Syst. Saf.}, 193:106665, 2020.

\bibitem[Har60]{Harris1960}
Theodore~E. Harris.
\newblock A lower bound for the critical probability in a certain percolation
  process.
\newblock {\em Proc. Cambridge Philos. Soc.}, 56(1):13--20, 1960.

\bibitem[Jer81]{Jerrum1981}
Mark Jerrum.
\newblock {\em On the Complexity of Evaluating Multivariate Polynomials}.
\newblock {Ph.D.} dissertation, Department of Computer Science, University of
  Edinburgh, 1981.
\newblock Technical Report CST-11-81.

\bibitem[JS89]{JerrumSinclair1989}
Mark Jerrum and Alistair Sinclair.
\newblock Approximating the permanent.
\newblock {\em SIAM J. Comput.}, 18(6):1149--1178, 1989.

\bibitem[JSV04]{JSV04}
Mark Jerrum, Alistair Sinclair, and Eric Vigoda.
\newblock A polynomial-time approximation algorithm for the permanent of a
  matrix with nonnegative entries.
\newblock {\em J. ACM}, 51(4):671--697, 2004.

\bibitem[Kan94]{Kannan1994}
Ravi Kannan.
\newblock Markov chains and polynomial time algorithms.
\newblock In {\em FOCS}, pages 656--671, 1994.

\bibitem[Kar01]{Karger2001}
David~R. Karger.
\newblock A randomized fully polynomial time approximation scheme for the
  all-terminal network reliability problem.
\newblock {\em SIAM Rev.}, 43(3):499--522, 2001.

\bibitem[KL85]{KarpLuby1985}
Richard~M. Karp and Michael Luby.
\newblock {Monte-Carlo} algorithms for the planar multiterminal network
  reliability problem.
\newblock {\em J. Complexity}, 1(1):45--64, 1985.

\bibitem[LP17]{LevinPeresWilmer2017}
David~A. Levin and Yuval Peres.
\newblock {\em Markov Chains and Mixing Times}.
\newblock American Mathematical Society, Providence, Rhode Island, second
  edition, 2017.
\newblock With contributions by Elizabeth L. Wilmer.

\bibitem[PB83]{PB83}
J.~Scott Provan and Michael~O. Ball.
\newblock The complexity of counting cuts and of computing the probability that
  a graph is connected.
\newblock {\em SIAM J. Comput.}, 12(4):777--788, 1983.

\bibitem[Pro86]{Provan1986}
J.~Scott Provan.
\newblock The complexity of reliability computations in planar and acyclic
  graphs.
\newblock {\em SIAM J. Comput.}, 15(3):694--702, 1986.

\bibitem[Sin92]{Sinclair1992}
Alistair Sinclair.
\newblock Improved bounds for mixing rates of {Markov} chains and
  multicommodity flow.
\newblock {\em Combin. Probab. Comput.}, 1(4):351--370, 1992.

\bibitem[SW85]{SatyanarayanaWood1985}
A.~Satyanarayana and R.~Kevin Wood.
\newblock A linear-time algorithm for computing {$K$}-terminal reliability in
  series-parallel networks.
\newblock {\em SIAM J. Comput.}, 14(4):818--832, 1985.

\bibitem[Val79]{Valiant1979}
Leslie~G. Valiant.
\newblock The complexity of enumeration and reliability problems.
\newblock {\em SIAM J. Comput.}, 8(3):410--421, 1979.

\bibitem[ZL11]{ZenklusenLaumanns2011}
Rico Zenklusen and Marco Laumanns.
\newblock High-confidence estimation of small {$s$-$t$} reliabilities in
  directed acyclic networks.
\newblock {\em Networks}, 57(4):376--388, 2011.

\end{thebibliography}

\end{document}